\pdfoutput=1
\documentclass[
affil-it, 
auth-lg, 
british, 
compression, 
contents, 
custom-packages={scalerel}, 
references={bigone}, 
]{lib/preprint}

%
%

\newcommand{\Sh}{\overline{\mathrm{Sh}}}

%
%


%
%

\begin{document}
    
    \hypersetup{
        pdftitle = {Field Theory Equivalences as Spans of L infinity algebras},
        pdfauthor = {Mehran Jalali Farahani,Christian Saemann,Martin Wolf},
        pdfkeywords = {},
    }
    
    \date{\today}
    
    \email{mj2020@hw.ac.uk,c.saemann@hw.ac.uk,m.wolf@surrey.ac.uk}
    
    \preprint{EMPG--23--08,DMUS--MP--23/10}
    
    \title{Field Theory Equivalences as Spans of $L_\infty$-algebras} 
    
    \author[a]{Mehran~Jalali~Farahani\,\orcidlink{0009-0002-8282-9316}\,}
    \author[a]{Christian~Saemann\,\orcidlink{0000-0002-5273-3359}\,}
    \author[b]{Martin~Wolf\,\orcidlink{0009-0002-8192-3124}\,}
    
    \affil[a]{Maxwell Institute for Mathematical Sciences, Department of Mathematics,\\ Heriot--Watt University, Edinburgh EH14 4AS, United Kingdom}
    \affil[b]{School of Mathematics and Physics,\\ University of Surrey, Guildford GU2 7XH, United Kingdom}
    
    \abstract{Semi-classically equivalent field theories are related by a quasi-isomorphism between their underlying $L_\infty$-algebras, but such a quasi-isomorphism is not necessarily a homotopy transfer. We demonstrate that all quasi-isomorphisms can be lifted to spans of $L_\infty$-algebras in which the quasi-isomorphic $L_\infty$-algebras are obtained from a correspondence $L_\infty$-algebra by a homotopy transfer. Our construction is very useful: homotopy transfer is computationally tractable, and physically, it amounts to integrating out fields in a Feynman diagram expansion. Spans of $L_\infty$-algebras allow for a clean definition of quasi-isomorphisms of cyclic $L_\infty$-algebras. Furthermore, they appear naturally in many contexts within physics. As examples, we first consider scalar field theory with interaction vertices blown up in different ways. We then show that (non-Abelian) T-duality can be seen as a span of $L_\infty$-algebras, and we provide full details in the case of the principal chiral model. We also present the relevant span of $L_\infty$-algebras for the Penrose--Ward transform in the context of self-dual Yang--Mills theory and Bogomolny monopoles.}
    
    \acknowledgements{We would like to thank Leron Borsten, Branislav Jur\v{c}o, Hyungrok Kim, Jan Pulmann, and Jim Stasheff for helpful conversations. We are particularly grateful to Jan Pulmann for pointing out a gap in the proof of Proposition 2.6. M.J.F.~was supported by the STFC PhD studentship ST/W507489/1. C.S.~was partially supported by the Leverhulme Research Project Grant RPG-2018-329.}
    
    \datalicencemanagement{No additional research data beyond the data presented and cited in this work are needed to validate the research findings in this work. For the purpose of open access, the authors have applied a Creative Commons Attribution (CC BY) licence to any Author Accepted Manuscript version arising.}  
    
    \begin{body}
        
        \section{Introduction}
        
        There is a striking parallel between perturbative (quantum) field theory and homotopical algebra. This parallel is made evident by the Batalin--Vilkovisky (BV) formalism~\cite{Batalin:1977pb,Batalin:1981jr,Batalin:1984jr,Batalin:1984ss,Batalin:1985qj,Schwarz:1992nx}, which produces a differential graded commutative algebra, the BV complex. This complex, in turn, is dual to a homotopy algebraic structure called cyclic $L_\infty$-algebra, cf.~e.g.~\cite{Alexandrov:1995kv,Stasheff:1997iz,Zeitlin:2007fp,Jurco:2018sby}. More precisely, it is the Chevalley--Eilenberg algebra of the cyclic $L_\infty$-algebra. An $L_\infty$-algebra is a generalisation of a differential graded Lie algebra in which the Jacobi identity is violated up to homotopies, resulting in a tower of homotopy Jacobi identities. In physics, these homotopy Jacobi identities amount to closure of gauge transformations and gauge covariance of the equation of motion.
        
        All perturbative ghosts, fields, anti-fields, anti-fields of ghosts, etc., arrange into a graded vector space, and the free or linear terms in the equations of motion of the BV action give rise to differentials, turning the graded vector space into a cochain complex. We note that the cohomology of this cochain complex is given by the free fields up to gauge transformations. The interaction terms in the equations of motion that are of order $n$ in the fields define operations with $n$ inputs and one output, which provide the higher products of the $L_\infty$-algebra. The additional structures (inner products and integrals) contained in the BV action over its equations of motion induce a metric structure on the $L_\infty$-algebra. These facts have been observed and explored further numerous times since the birth of $L_\infty$-algebras in the context of closed string field theory~\cite{Zwiebach:1992ie}, see~e.g.~\cite{Kajiura:2003ax} and~\cite{Doubek:2017naz} for important examples and~\cite{Jurco:2018sby,Jurco:2019bvp} for a more complete list of references. Well-known is also~\cite{Hohm:2017pnh}, but in this paper the link to the BV formalism seems to have been made only partially.
        
        Remarkably, the parallel between quantum field theory and homotopical algebra extends far beyond the equations of motion. Homotopy algebras come with a notion of quasi-isomorphism, extending the corresponding one from cochain complexes. Quasi-isomorphisms between $L_\infty$-algebras translate to semi-classically equivalent field theories, i.e.~field theories with the same tree-level scattering amplitudes. Moreover, any homotopy algebra possesses a quasi-isomorphic minimal model, i.e.~a homotopy algebraic structure on the cohomology of their underlying cochain complex, which is unique up to isomorphisms. In the case of $L_\infty$-algebras of field theories, the minimal model encodes the tree-level scattering amplitudes. Furthermore, it is conveniently computed by the homological perturbation lemma~\cite{brown1967twisted,Gugenheim1989:aa,gugenheim1991perturbation,Crainic:0403266}, which encodes the usual tree-level Feynman diagram expansion in a geometric series. This geometric series gives rise to Berends--Giele recursion relations, see e.g.~\cite{Macrelli:2019afx,Jurco:2019yfd}. As already implied in~\cite{Zwiebach:1992ie}, much of this extends to the loop level, see~\cite{Markl:1997bj,Doubek:2017naz} as well as the closely related work~\cite{Costello:2016vjw,Costello:2021jvx}.
        
        Generally, the homological perturbation lemma may be used to extend any homotopy retract (i.e.~a weaker form of a homotopy equivalence) between the cochain complex of an $L_\infty$-algebra and another cochain complex to a quasi-isomorphism of $L_\infty$-algebras. This is known as homotopy transfer, see~\cite{Loday:2012aa} for a detailed account. From a field theoretic perspective, a homotopy transfer translates a field theory on a field space to an equivalent field theory on another field space. If the latter space is embedded in the former, then a homotopy transfer amounts to integrating out fields, a well-known fact in BV quantisation\footnote{See also~\cite{Dresse:1990dj,Henneaux:1989ua,Barnich:2004cr} for work on homotopy transfer from the BV perspective.}. For a recent discussion and application of this fact, see~\cite{Arvanitakis:2020rrk,Arvanitakis:2021ecw}.
        
        However, not all equivalences between field theories or, equivalently, quasi-isomorphisms of $L_\infty$-algebras can be captured by a homotopy transfer.\footnote{The mathematical question of which homotopy algebras arise from a homotopy transfer was originally raised by Sullivan and answered comprehensively in~\cite{Markl:2006.00072}.} From a field theoretic perspective, this observation is hardly surprising. For example, there is a quasi-isomorphism between the $L_\infty$-algebra that describes the tree-level scattering amplitudes of a field theory and the $L_\infty$-algebra that describes the action of this field theory. However, we clearly cannot reconstruct the complete form of a field theory from its tree-level scattering amplitudes.\footnote{that is, without invoking further constraints, such as locality} This, however, is unfortunate, as the perturbative expansions in terms of Feynman diagrams implied by homotopy transfer, together with the underlying recursion relations, can be very useful. 
        
        There are therefore many situations in which a physically equivalent field theory is constructed in a two-step-procedure, by first integrating in fields and then integrating out different fields. Notable examples are (non-Abelian) T-duality for sigma models and the Penrose--Ward transform. This naturally suggests the picture that for any two semi-classically equivalent field theories $\frL^{(1)}$ and $\frL^{(2)}$, there is a `correspondence theory' $\frL^{(c)}$ that is semi-classically equivalent to both $\frL^{(1)}$ and $\frL^{(2)}$ together with homotopy transfers between $\frL^{(c)}$ and $\frL^{(1)}$ and between $\frL^{(c)}$ and $\frL^{(2)}$, respectively, amounting to integrating out fields in $\frL^{(c)}$:
        \begin{equation}
            \begin{tikzcd}
                & \frL^{(c)} \arrow[<->,dl] \arrow[<->,dr] &
                \\
                \frL^{(1)} & & \frL^{(2)} 
            \end{tikzcd}
        \end{equation}
        Mathematically speaking, one may therefore conjecture that for any pair of $L_\infty$-algebras\footnote{The evident generalisation to arbitrary homotopy algebras should be straightforward.} $\frL^{(1)}$ and $\frL^{(2)}$, there is an $L_\infty$-algebra $\frL^{(c)}$ quasi-isomorphic to both $\frL^{(1)}$ and $\frL^{(2)}$, and where the quasi-isomorphism can be captured by a homotopy transfer. We note that the conjectured statement would also allow for a so far missing, clean definition of quasi-isomorphisms of cyclic $L_\infty$-algebras.
        
        We shall prove that this is indeed the case in \cref{sec:spans_of_l_infty_algebras}. We then work out the details for three distinct cases: firstly, an illustrative toy example of two quasi-isomorphic scalar field theories in \cref{sec:toy_model}; secondly, the much more intricate and interesting case of the principal chiral model and its (non-Abelian) T-dual in \cref{sec:PCM}; and thirdly, the Penrose--Ward transforms between field theories allowing for a twistorial description and holomorphic Chern--Simons theory on the corresponding twistor spaces in \cref{sec:twistors}. We note that further examples of such spans, but described from the dual, BV perspective, can be found in~\cite{Barnich:2010sw,Grigoriev:2012xg}.
        
        \section{Quasi-isomorphisms and homotopy transfer}\label{sec:spans_of_l_infty_algebras}
        
        Quasi-isomorphisms of cochain complexes, such as de Rham complexes, are cochain maps that induce isomorphisms between the underlying cohomology groups. If one of the cochain complexes carries a homotopy algebra structure, then this structure can be transferred to the other cochain complex in a procedure called homotopy transfer, see e.g~\cite{Kontsevich:2000er} and~\cite{Loday:2012aa} for a comprehensive review. Explicit formulas for such a homotopy transfer are provided by the homological perturbation lemma~\cite{brown1967twisted,Gugenheim1989:aa,gugenheim1991perturbation}, see also~\cite{Crainic:0403266}.
        
        However, not all quasi-isomorphisms of homotopy algebras originate from a homotopy transfer~\cite{Markl:2006.00072}. As we show below for the case of $L_\infty$-algebras, however, each quasi-isomorphism can be lifted to a \uline{span} (or roof, or correspondence) of homotopy algebras, in which the projections are given by homotopy transfers. This perspective is very natural for a number of reasons. In particular, $L_\infty$-algebras concentrated in non-positive degrees integrate to Lie $\infty$-groupoids. Already for ordinary groupoids, fully faithful and essentially surjective functors can fail to be equivalences. In such situations, one usually switches to spans.
        
        We will show that our new perspective gives rise to a natural definition of quasi-isomorphism between cyclic $L_\infty$-algebras, which so far existed only in special cases. Moreover, it is useful in the context of perturbative quantum field theory, where quasi-isomorphisms of $L_\infty$-algebras amount to a semi-classical equivalence, and homotopy transfers amount to Feynman diagram expansions arising from integrating out fields.
        
        \subsection{Quasi-isomorphisms and homotopy transfer for \texorpdfstring{$L_\infty$}{Linfty}-algebras}
        
        \paragraph{$L_\infty$-algebras.}
        In the following, we shall restrict ourselves to strong homotopy Lie algebras, or $L_\infty$-algebras for short. These can be defined as differential graded cocommutative coalgebras, or, dually, as differential graded commutative algebras, see e.g.~\cite{Jurco:2018sby} for a review in our conventions and~\cite{Zwiebach:1992ie,Lada:1992wc,Lada:1994mn} for original literature. Explicitly, we shall think of an $L_\infty$-algebra as a $\IZ$-graded vector space\footnote{This definition has an evident generalisation to graded modules.}
        \begin{subequations}
            \begin{equation}
                \frL\ =\ \bigoplus_{k\in\IZ}\frL_k
            \end{equation}
            endowed with multilinear totally antisymmetric \uline{higher products} $\mu_i$ of degree~$2-i$ for all $i\in\IN$,
            \begin{equation}
                \mu_i\,:\,\frL^{\times i}\ \rightarrow\ \frL~,
            \end{equation}
            subject to the \uline{homotopy Jacobi identities}
            \begin{equation}\label{eq:homotopyJacobi}
                \sum_{j+k=i}\sum_{\sigma\in \Sh(j;i) }\chi(\sigma;a_1,\ldots,a_i)(-1)^{k}\mu_{k+1}(\mu_j(a_{\sigma(1)},\ldots,a_{\sigma(j)}),a_{\sigma(j+1)},\ldots,a_{\sigma(i)})\ =\ 0
            \end{equation}
            for all $i\in\IN$ and $a_1,\ldots,a_{i}$ homogeneous elements in $\frL$. Here, $\Sh(j;i)$ denotes the set of $(j;i)$-unshuffles, i.e.~permutations $\sigma$ of $\{1,\ldots,i\}$ such that
            \begin{equation}
                \sigma(1)\ <\ \cdots\ <\ \sigma(j)
                \eand
                \sigma(j+1)\ <\ \cdots\ <\ \sigma(i)~,
            \end{equation}
            and $\chi(\sigma;a_1,\ldots,a_i)$ is the graded Koszul sign for permutations of homogeneous elements, defined by
            \begin{equation}\label{eq:DefKoszulSign}
                a_1\wedge\ldots\wedge a_i\ =\ \chi(\sigma;a_1,\ldots,a_i)\,a_{\sigma(1)}\wedge\ldots\wedge a_{\sigma(i)}~.
            \end{equation}
            We note that $\mu_1$ is a differential, and $(\frL,\mu_1)$ forms a cochain complex.
            
            An $L_\infty$-algebra $\frL$ is called cyclic, if there is an additional symmetric, non-degenerate, bilinear form
            \begin{equation}
                \inner{-}{-}\,:\,\frL\times\frL\ \rightarrow\ \IK
            \end{equation}
            with $\IK$ the ground field or ring such that
            \begin{equation}
                \inner{a_1}{\mu_i(a_2,\ldots,a_{i+1})}\ =\ (-1)^{i+i(|a_1|+|a_{i+1}|)+|a_{i+1}|\sum_{j=1}^i|a_j|}\inner{a_{i+1}}{\mu_i(a_1,\ldots,a_i)}~.
            \end{equation}
            We shall refer to this as a \uline{metric structure}.
        \end{subequations}
        
        \paragraph{Morphisms of $L_\infty$-algebras.} 
        Let $(\frL^{(1)},\mu_i^{(1)})$ and $(\frL^{(2)},\mu_i^{(2)})$ be two $L_\infty$-algebras. Strict morphisms of $L_\infty$-algebras $\phi:\frL^{(1)}\rightarrow \frL^{(2)}$ are cochain maps between the underlying cochain complexes that respect the higher brackets,
        \begin{equation}
            \mu^{(2)}_i(\phi(a_1),\ldots,\phi(a_i))\ =\ \phi(\mu^{(1)}_i(a_1,\ldots,a_i))
        \end{equation}
        for all $i\in\IN$ and $a_1,\ldots,a_i\in\frL^{(1)}$. 
        
        More generally, $L_\infty$-algebras can be regarded as codifferential graded cocommutative coalgebras, and \uline{(weak) morphisms of $L_\infty$-algebras} amount to morphisms between these. A general such morphism $\phi:\frL^{(1)}\rightarrow\frL^{(2)}$ consists of a collection of multilinear totally anti-symmetric maps $\phi_i$ of degree~$1-i$, 
        \begin{subequations}\label{eq:L_infty_morphism}
            \begin{equation}
                \phi_i\,:\,(\frL^{(1)})^{\times i}\ \rightarrow\ \frL^{(2)}~,
            \end{equation}
            which relate the higher products $\mu_i^{(1)}$ and $\mu_i^{(2)}$ of $\frL^{(1)}$ and $\frL^{(2)}$ according to
            \begin{equation}
                \begin{aligned}
                    &\sum_{j+k=i}\sum_{\sigma\in \Sh(j;i)}~(-1)^{k}\chi(\sigma;a_1,\ldots,a_i)\phi_{k+1}(\mu^{(1)}_j(a_{\sigma(1)},\ldots,a_{\sigma(j)}),a_{\sigma(j+1)},\ldots,a_{\sigma(i)})\\
                    \ &=\ \sum_{j=1}^i\frac{1}{j!} \sum_{k_1+\cdots+k_j=i}\sum_{\sigma\in\Sh(k_1,\ldots,k_{j-1};i)}\chi(\sigma;a_1,\ldots,a_i)\zeta(\sigma;a_1,\ldots,a_i)\,\times\\
                    &\kern1cm\times \mu^{(2)}_j\Big(\phi_{k_1}\big(a_{\sigma(1)},\ldots,a_{\sigma(k_1)}\big),\ldots,\phi_{k_j}\big(a_{\sigma(k_1+\cdots+k_{j-1}+1)},\ldots,a_{\sigma(i)}\big)\Big)~,
                \end{aligned}
            \end{equation}
            where
            \begin{equation}\label{eq:zeta}
                \zeta(\sigma;a_1,\ldots,a_i)\ \coloneqq\ (-1)^{\sum_{1\leq m<n\leq j}k_mk_n+\sum_{m=1}^{j-1}k_m(j-m)+\sum_{m=2}^j(1-k_m)\sum_{k=1}^{k_1+\cdots+k_{m-1}}|a_{\sigma(k)}|}~,
            \end{equation}
            and $\Sh(k_1,\ldots,k_{j-1};i)$ denotes the set of generalised $(k_1,\ldots,k_{j-1};i)$-unshuffles, i.e.~permutations for which the first $k_1$ images, the next $k_2$ images, etc.~are all ordered. 
        \end{subequations}
        
        For illustrative purposes, let us list the relation between the lowest three higher products explicitly,
        \begin{equation}\label{eq:explicit_formulas}
            \begin{aligned}
                &\mu^{(2)}_1(\phi_1(a_1))\ =\ \phi_1\big(\mu^{(1)}_1(a_1)\big)~,
                \\
                &\mu^{(2)}_2(\phi_1(a_1),\phi_1(a_2))\ =\ \phi_1\big(\mu^{(1)}_2(a_1,a_2)\big)-\phi_2\big(\mu^{(1)}_1(a_1),a_2\big)+(-1)^{|a_1|\,|a_2|}\phi_2\big(\mu^{(1)}_1(a_2),a_1\big)
                \\
                &\hspace{1cm}-\mu^{(2)}_1(\phi_2(a_1,a_2))~,
                \\
                &\mu^{(2)}_3(\phi_1(a_1),\phi_1(a_2),\phi_1(a_3))\ =\ (-1)^{|a_2|\,|a_3|}\phi_2\big(\mu^{(1)}_2(a_1,a_3),a_2\big)
                \\
                &\hspace{1cm}+(-1)^{|a_1|(|a_2|+|a_3|)+1}\phi_2\big(\mu^{(1)}_2(a_2,a_3),a_1)+(-1)^{|a_1|\,|a_2|+1}\phi_3\big(\mu^{(1)}_1(a_2),a_1,a_3\big)
                \\
                &\hspace{1cm}+(-1)^{(|a_1|+|a_2|)|a_3|} \phi_3\big(\mu^{(1)}_1(a_3),a_1,a_2\big)+(-1)^{|a_1|}\mu^{(2)}_2(\phi_1(a_1),\phi_2(a_2,a_3))
                \\
                &\hspace{1cm}
                -(-1)^{(|a_1|+1)|a_2|}\mu^{(2)}_2(\phi_1(a_2),\phi_2(a_1,a_3))+(-1)^{(|a_1|+|a_2|+1)|a_3|}\mu^{(2)}_2(\phi_1(a_3),\phi_2(a_1,a_2))
                \\
                &\hspace{1cm}
                +\phi_1\big(\mu^{(1)}_3(a_1,a_2,a_3)\big)-\phi_2\big(\mu^{(1)}_2(a_1,a_2),a_3\big)+\phi_3\big(\mu^{(1)}_1(a_1),a_2,a_3\big)
                \\
                &\hspace{1cm}-\mu^{(2)}_1(\phi_3(a_1,a_2,a_3))
            \end{aligned}
        \end{equation}
        for all $a_1,\ldots,a_3\in \frL^{(1)}$. Evidently, $\phi_1$ is a cochain map on the underlying cochain complex. Moreover, a weak morphism with $\phi_i$ trivial for $i\geq 2$ is a strict morphism.
        
        A \uline{quasi-isomorphism} between two $L_\infty$-algebras is now a morphism of $L_\infty$-algebras such that $\phi_1$ induces an isomorphism between the cohomologies of the corresponding cochain complexes. Quasi-isomorphic $L_\infty$-algebras can be regarded as equivalent for most intents and purposes.
        
        We say that a weak morphisms $\phi:\frL^{(1)}\rightarrow\frL^{(2)}$ respects the metric structure if it satisfies~\cite{Kajiura:2003ax}
        \begin{equation}\label{eq:compatibility_metric_structure}
            \begin{gathered}
                \inner{\phi_1(a_1)}{\phi_1(a_2)}^{(2)}\ =\ \inner{a_1}{a_2}^{(1)}~,
                \\
                \sum_{\substack{j+k=i\\j,k\geq 1}}\inner{\phi_j(a_1,\ldots,a_j)}{\phi_k(a_{j+1},\ldots,a_{j+k}))}^{(2)}\ =\ 0
            \end{gathered}
        \end{equation}
        for all $i\geq 3$ and $a_1,\ldots,a_i\in\frL^{(1)}$. We note that this condition, together with the non-degeneracies of the metric structures, implies that $\phi_1$ is injective.
        
        \paragraph{Homotopy transfer.}
        A \uline{deformation retract} (see e.g.~\cite{Loday:2012aa}) between two cochain complexes $(\frL^{(1)},\mu_1^{(1)})$ and $(\frL^{(2)},\mu_1^{(2)})$ is a pair of cochain maps 
        $\sfp$ and $\sfe$ together with a linear map $\sfh$ of degree~$-1$ that fit into the diagram 
        \begin{subequations}\label{eq:deformation_retract}
            \begin{equation}
                \begin{tikzcd}
                    \ar[loop,out=160,in=200,distance=20,"\sfh" left] (\frL^{(1)},\mu_1^{(1)})\arrow[r,shift left]{}{\sfp} & (\frL^{(2)},\mu^{(2)}_1)\arrow[l,shift left]{}{\sfe}
                \end{tikzcd}
            \end{equation}
            and satisfy
            \begin{equation}\label{eq:deformation_retract_2}
                \id_{\frL^{(1)}}-\sfe\circ\sfp\ =\ \mu_1^{(1)}\circ\sfh+\sfh\circ\mu_1^{(1)}\eand
                \sfp\circ\sfe\ =\ \sfid_{\frL^{(2)}}~.
            \end{equation} 
        \end{subequations}       
        It is thus a stricter form of a homotopy equivalence between cochain complexes. A deformation retract is called \uline{special} if also the so-called annihilation or \uline{side conditions} are satisfied,
        \begin{equation}\label{eq:HT_side_conditions}
            \sfp\circ\sfh\ =\ 0~,
            \quad
            \sfh\circ\sfe\ =\ 0~,
            \eand
            \sfh\circ\sfh\ =\ 0~.
        \end{equation}
        A deformation retract can always be turned into a special deformation retract, cf.~\cite{Crainic:0403266} or~\cite{Loday:2012aa}, by performing the replacement
        \begin{equation}\label{eq:special_substitutions}
            \sfh\ \rightarrow \ (\sfid-\sfe\circ\sfp)\circ\sfh\circ(\sfid-\sfe\circ\sfp)\circ\mu^{(1)}_1\circ(\sfid-\sfe\circ\sfp)\circ\sfh\circ(\sfid-\sfe\circ\sfp)~.
        \end{equation}
        In the following, we shall always work with special deformation retracts. 
        
        Given a special deformation retract as in~\eqref{eq:deformation_retract} and an $L_\infty$-algebra structure on $\frL^{(1)}$, the \uline{homological perturbation lemma} allows us to transfer this structure to the cochain complex $\frL^{(2)}$, and detailed formulas\footnote{albeit for $A_\infty$-algebras, but readily translatable} are found e.g.~in~\cite{Kajiura:2003ax,Markl:0401007}. In principle, one can now consider the homological perturbation lemma for cyclic $L_\infty$-algebras, as done e.g.~in~\cite{Doubek:2017naz}. For our purposes, however, it will be easier to work with morphisms of general $L_\infty$-algebras and check that the result we obtain is cyclic.
        
        Explicitly, the homological perturbation lemma provides a lift of the quasi-isomorphism of cochain complexes $\sfe$ to a quasi-isomorphism of $L_\infty$-algebras $\sfE:\frL^{(2)}\rightarrow\frL^{(1)}$. In particular, given an $L_\infty$-structure on $\frL^{(1)}$, it induces an $L_\infty$-structure on $\frL^{(2)}$ via the component maps
        \begin{subequations}\label{eq:transfer_formulas}
            \begin{equation}\label{eq:QuasiIsomorphismInTransfer}
                \begin{aligned}
                    \sfE_1(b_1)\ &\coloneqq\ \sfe(b_1)~,
                    \\
                    \sfE_2(b_1,b_2)\ &\coloneqq\ -\sfh(\mu^{(1)}_2(\sfE_1(b_1),\sfE_1(b_2)))~,
                    \\
                    &~~\vdots
                    \\
                    \sfE_i(b_1,\ldots,b_i)\ &\coloneqq\ -\sum_{j=2}^i\frac{1}{j!} \sum_{\substack{k_1+\cdots+k_j=i\\k_1,\ldots,k_j\geq 1}}\sum_{\sigma\in{\rm Sh}(k_1,\ldots,k_{j-1};i)}\chi(\sigma;b_1,\ldots,b_i)\zeta(\sigma;b_1,\ldots,b_i)
                    \\
                    &\kern0.5cm\times\sfh\left\{\mu_j^{(1)}\Big(\sfE_{k_1}\big(b_{\sigma(1)},\ldots,b_{\sigma(k_1)}\big),\ldots,\sfE_{k_j}\big(b_{\sigma(k_1+\cdots+k_{j-1}+1)},\ldots,b_{\sigma(i)}\big)\Big)\right\}
                \end{aligned}
            \end{equation}
            for all $b_1,\ldots,b_i\in\frL^{(2)}$ so that the induced higher products on $\frL^{(2)}$ are given by 
            \begin{equation}\label{eq:TransferredHigherProducts}
                \begin{aligned}
                    \mu^{(2)}_2(b_1,b_2)\ &\coloneqq\ \sfp(\mu^{(1)}_2(\sfE_1(b_1),\sfE_1(b_2))~,
                    \\
                    &~~\vdots
                    \\
                    \mu^{(2)}_i(b_1,\ldots,b_i)\ &\coloneqq\ \sum_{j=2}^i\frac{1}{j!} \sum_{\substack{k_1+\cdots+k_j=i\\k_1,\ldots,k_j\geq 1}}\sum_{\sigma\in{\rm Sh}(k_1,\ldots,k_{j-1};i)}\chi(\sigma;b_1,\ldots,b_i)\zeta(\sigma;b_1,\ldots,b_i)
                    \\
                    &\kern0.5cm\times\sfp\left\{\mu_j^{(1)}\Big(\sfE_{k_1}\big(b_{\sigma(1)},\ldots,b_{\sigma(k_1)}\big),\ldots,\sfE_{k_j}\big(b_{\sigma(k_1+\cdots+k_{j-1}+1)},\ldots,b_{\sigma(i)}\big)\Big)\right\}
                \end{aligned}
            \end{equation}
            for all $b_1,\ldots,b_i\in\frL^{(2)}$ with the sign $\zeta$ as defined in~\eqref{eq:zeta}.
        \end{subequations}        
        
        \paragraph{Quasi-isomorphisms not originating from homotopy transfer.}
        Upon comparing the formulas~\eqref{eq:transfer_formulas} with those for a general quasi-isomorphism~\eqref{eq:explicit_formulas}, we can straightforwardly identify quasi-isomorphisms that do not originate from a homotopy transfer.
        
        As an example, consider the trivial, one-element $L_\infty$-algebra $\frL^{(1)}$ with the underlying cochain complex
        \begin{equation}
            \sfCh(\frL^{(1)})\ \coloneqq\ \big(\cdots\xrightarrow{~~~}\{0\}\xrightarrow{~~~}\{0\}\xrightarrow{~~~}\{0\}\xrightarrow{~~~}\cdots\big)
        \end{equation}
        with no non-trivial higher products and the $L_\infty$-algebra $\frL^{(2)}$ with the underlying cochain complex
        \begin{equation}
            \sfCh(\frL^{(2)})\ \coloneqq\ \Big(\cdots\xrightarrow{~~~}\{0\}\xrightarrow{~~~}\underbrace{\frg}_{\frL^{(2)}_{-1}}\xrightarrow{~\sfid~}\underbrace{\frg}_{\frL^{(2)}_{0}}\xrightarrow{~~~}\{0\}\xrightarrow{~~~}\cdots\Big)
        \end{equation}
        for some Lie algebra $\frg$ with the only non-trivial higher product
        \begin{equation}
            \mu^{(2)}_2(b_1,b_2)\ \coloneqq\ [b_1,b_2]
        \end{equation}
        for all $b_1,b_2\in\frL^{(2)}$ and $|b_1|+|b_2|\geq -1$. Because the cohomologies $H^\bullet_{\mu_1^{(1)}}(\frL^{(1)})$ and $H^\bullet_{\mu_1^{(2)}}(\frL^{(2)})$ are both trivial, the trivial map from $\frL^{(1)}$ to $\frL^{(2)}$ is a quasi-isomorphism. The higher products on $\frL^{(2)}$, however, clearly do not originate from a homotopy transfer of the (trivial) higher products on $\frL^{(1)}$ and so, this quasi-isomorphism $\frL^{(1)}\rightarrow\frL^{(2)}$ is not given by a homotopy transfer. However, there certainly is a homotopy transfer $\frL^{(2)}\rightarrow\frL^{(1)}$.
        
        \paragraph{Minimal model.}
        Note that by the usual abstract Hodge--Kodaira decomposition, every $L_\infty$-algebra $\frL$ comes with a minimal model, that is, an $L_\infty$-algebra structure on its cohomology $H^\bullet_{\mu_1}(\frL)$, cf.~\cite{kadeishvili1982algebraic,Kajiura:2003ax}. This minimal model is obtained by homotopy transfer using the special deformation retract
        \begin{equation}\label{eq:MM_deformation_retract}
            \begin{tikzcd}
                \ar[loop,out=160,in=200,distance=20,"\sfh" left] (\frL,\mu_1)\arrow[r,shift left]{}{\sfp} & (H^\bullet_{\mu_1}(\frL),0) \arrow[l,shift left]{}{\sfe}~,
            \end{tikzcd}
        \end{equation}
        where $\sfe$ is an embedding of the cohomology $H^\bullet_{\mu_1}$ into $\frL$, $\sfp$ is a corresponding projection, and $\sfh$ is the contracting homotopy~\cite{Kontsevich:1997vb,Huebschmann:9906036}.
        
        \paragraph{Composition of homotopy transfers.}
        Note that two deformation retracts with matching source and target,
        \begin{equation}
            \begin{tikzcd}
                \ar[loop,out=160,in=200,distance=20,"\sfh^{(1)}" left] \big(\frL^{(1)},\mu_1^{(1)}\big)\arrow[r,shift left]{}{\sfp^{(21)}} & \big(\frL^{(2)},\mu^{(2)}_1\big)\arrow[l,shift left]{}{\sfe^{(12)}}
            \end{tikzcd}
            \eand
            \begin{tikzcd}
                \ar[loop,out=160,in=200,distance=20,"\sfh^{(2)}" left] \big(\frL^{(2)},\mu_1^{(2)}\big)\arrow[r,shift left]{}{\sfp^{(32)}} & \big(\frL^{(3)},\mu^{(3)}_1\big)\arrow[l,shift left]{}{\sfe^{(23)}}
            \end{tikzcd},
        \end{equation}
        can be combined into a single deformation retract
        \begin{subequations}\label{eq:combination_deformation_retracts}
            \begin{equation}
                \begin{tikzcd}
                    \ar[loop,out=160,in=200,distance=20,"\tilde\sfh^{(1)}" left] \big(\frL^{(1)},\mu_1^{(1)}\big)\arrow[r,shift left]{}{\sfp^{(31)}} & \big(\frL^{(3)},\mu^{(3)}_1\big) \arrow[l,shift left]{}{\sfe^{(13)}}
                \end{tikzcd}
            \end{equation}
            with 
            \begin{equation}
                \begin{gathered}
                    \sfp^{(31)}\ =\ \sfp^{(32)}\circ\sfp^{(21)}~,
                    \quad
                    \sfe^{(13)}\ =\ \sfe^{(12)}\circ\sfe^{(23)}~,
                    \\
                    \tilde\sfh^{(1)}\ =\ \sfh^{(1)}+\sfe^{(12)}\circ\sfh^{(2)}\circ\sfp^{(21)}~.
                \end{gathered}
            \end{equation}
        \end{subequations}        
        Indeed, we have
        \begin{subequations}
            \begin{equation}
                \sfp^{(31)}\circ\sfe^{(13)}\ =\ \sfp^{(32)}\circ\sfp^{(21)}\circ\sfe^{(12)}\circ\,\sfe^{(23)}\ =\ \sfp^{(32)}\circ\sfe^{(23)}\ =\ \id_{\frL^{(3)}}
            \end{equation}
            and likewise,
            \begin{equation}
                \begin{aligned}
                    \sfe^{(13)}_0\circ\sfp^{(31)}\ &=\ \sfe^{(12)}\circ\sfe^{(23)}\circ\sfp^{(32)}\circ\sfp^{(21)}
                    \\
                    &=\ \sfe^{(12)}\circ(\id_{\frL^{(2)}}-\mu_1^{(2)}\circ\sfh^{(2)}-\sfh^{(2)}\circ\mu_1^{(2)})\circ\sfp^{(21)}
                    \\
                    &=\ \sfe^{(12)}_0\circ\sfp^{(21)}-\mu_1^{(1)}\circ\sfe^{(12)}\circ\sfh^{(2)}\circ\sfp^{(21)}-\sfe^{(12)}\circ\sfh^{(2)}\circ\sfp^{(21)}\circ\mu_1^{(1)}
                    \\
                    &=\ \id_{\frL^{(1)}}-\mu_1^{(1)}\circ\big(\sfh^{(1)}+\sfe^{(12)}\circ\sfh^{(2)}\circ\sfp^{(21)}\big)-\big(\sfh^{(1)}+\sfe^{(12)}\circ\sfh^{(2)}\circ\sfp^{(21)}\big)\circ\mu_1^{(1)}
                    \\
                    &=\ \id_{\frL^{(1)}}-\mu_1^{(1)}\circ\tilde\sfh^{(1)}-\tilde\sfh^{(1)}\circ\mu_1^{(1)}~,
                \end{aligned}
            \end{equation}
        \end{subequations}
        where we have used that $\sfe$ and $\sfp$ are cochain maps.
        
        We can invert the above observation to the following result.
        \begin{proposition}\label{prop:factorize_contracting_homotopy}
            Given $L_\infty$-algebras $\frL^{(1,2,3)}$ together with projections and embeddings of the underlying cochain complexes
            \begin{equation}
                \begin{tikzcd}
                    \big(\frL^{(1)},\mu_1^{(1)}\big)\arrow[r,shift left]{}{\sfp^{(21)}} & \big(\frL^{(2)},\mu^{(2)}_1\big)\arrow[l,shift left]{}{\sfe^{(21)}}\arrow[r,shift left]{}{\sfp^{(32)}}
                    & \big(\frL^{(3)},\mu^{(3)}_1\big)\arrow[l,shift left]{}{\sfe^{(23)}}
                \end{tikzcd}
            \end{equation}
            such that
            \begin{equation}
                \sfp^{(21)}\circ \sfe^{(21)}\ =\ \sfid_{\frL^{(2)}}\eand \sfp^{(32)}\circ \sfe^{(23)}\ =\ \sfid_{\frL^{(3)}}
            \end{equation}
            and two special deformation retracts
            \begin{equation}
                \begin{gathered}
                    \begin{tikzcd}[column sep=2cm]
                        \ar[loop,out=160,in=200,distance=20,"\sfh^{(31)}" left] \big(\frL^{(1)},\mu_1^{(1)}\big)\arrow[r,shift left]{}{\sfp^{(32)}\circ\sfp^{(21)}} & \big(\frL^{(3)},\mu^{(3)}_1\big)\arrow[l,shift left]{}{\sfe^{(12)}\circ \sfe^{(23)}}
                    \end{tikzcd}~,
                    \\
                    \begin{tikzcd}
                        \ar[loop,out=160,in=200,distance=20,"\sfh^{(32)}" left] \big(\frL^{(2)},\mu_1^{(2)}\big)\arrow[r,shift left]{}{\sfp^{(32)}} & \big(\frL^{(3)},\mu^{(3)}_1\big)\arrow[l,shift left]{}{\sfe^{(23)}}
                    \end{tikzcd}~,
                \end{gathered}
            \end{equation}
            then there is a third special deformation retract
            \begin{equation}
                \begin{tikzcd}
                    \ar[loop,out=160,in=200,distance=20,"\sfh^{(21)}" left] \big(\frL^{(1)},\mu_1^{(1)}\big)\arrow[r,shift left]{}{\sfp^{(21)}} & \big(\frL^{(2)},\mu^{(2)}_1\big)\arrow[l,shift left]{}{\sfe^{(12)}}
                \end{tikzcd}
            \end{equation}
            with $\sfh^{(21)}$ obtained from the substitutions~\eqref{eq:special_substitutions} from 
            \begin{equation}\label{eq:h21}
                \sfh^{(21)}\ =\ \sfh^{(31)}-\sfe^{(12)}\circ \sfh^{(32)}\circ\sfp^{(21)}~.
            \end{equation}
        \end{proposition}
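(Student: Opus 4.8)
The plan is to read \eqref{eq:h21} as the composition-of-homotopies formula run backwards and then to verify the deformation retract axioms by a direct computation. First I would introduce the candidate
\[
\tilde\sfh^{(21)}\ \coloneqq\ \sfh^{(31)}-\sfe^{(12)}\circ\sfh^{(32)}\circ\sfp^{(21)}
\]
and argue that $\big(\sfp^{(21)},\sfe^{(12)},\tilde\sfh^{(21)}\big)$ is an ordinary deformation retract between $\big(\frL^{(1)},\mu_1^{(1)}\big)$ and $\big(\frL^{(2)},\mu_1^{(2)}\big)$ in the sense of \eqref{eq:deformation_retract_2}. The projection-embedding identity $\sfp^{(21)}\circ\sfe^{(12)}=\sfid_{\frL^{(2)}}$ is assumed in the hypotheses, so the only nontrivial axiom is the homotopy equation
\[
\id_{\frL^{(1)}}-\sfe^{(12)}\circ\sfp^{(21)}\ =\ \mu_1^{(1)}\circ\tilde\sfh^{(21)}+\tilde\sfh^{(21)}\circ\mu_1^{(1)}~.
\]

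To establish this I would split the right-hand side into the part coming from $\sfh^{(31)}$ and the part coming from $-\sfe^{(12)}\circ\sfh^{(32)}\circ\sfp^{(21)}$. The homotopy equation of the given $(31)$-retract immediately turns the first part into $\id_{\frL^{(1)}}-\sfe^{(12)}\circ\sfe^{(23)}\circ\sfp^{(32)}\circ\sfp^{(21)}$. For the second part I would move $\mu_1^{(1)}$ through $\sfe^{(12)}$ and $\sfp^{(21)}$ using that both are cochain maps, so that $\mu_1^{(1)}\circ\sfe^{(12)}=\sfe^{(12)}\circ\mu_1^{(2)}$ and $\sfp^{(21)}\circ\mu_1^{(1)}=\mu_1^{(2)}\circ\sfp^{(21)}$; this isolates $\mu_1^{(2)}\circ\sfh^{(32)}+\sfh^{(32)}\circ\mu_1^{(2)}$, which the $(32)$-retract equation rewrites as $\sfid_{\frL^{(2)}}-\sfe^{(23)}\circ\sfp^{(32)}$. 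The term $\sfe^{(12)}\circ\sfe^{(23)}\circ\sfp^{(32)}\circ\sfp^{(21)}$ then cancels between the two parts, leaving precisely $\id_{\frL^{(1)}}-\sfe^{(12)}\circ\sfp^{(21)}$. It is worth noting that this calculation uses only the bare homotopy equations and the cochain-map property, and in particular neither the side conditions \eqref{eq:HT_side_conditions} of the two input retracts nor that they are special.

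Having produced a genuine deformation retract, I would finally obtain a special one by applying the standard replacement \eqref{eq:special_substitutions} to $\tilde\sfh^{(21)}$; this keeps $\sfp^{(21)}$, $\sfe^{(12)}$ and the homotopy equation intact while enforcing $\sfp^{(21)}\circ\sfh^{(21)}=0$, $\sfh^{(21)}\circ\sfe^{(12)}=0$ and $\sfh^{(21)}\circ\sfh^{(21)}=0$, and it is this $\sfh^{(21)}$ that appears in the statement. I do not anticipate a real obstacle: the argument is the earlier composition computation read in reverse, and the only point requiring care is bookkeeping the cancellation of the $\sfe^{(12)}\circ\sfe^{(23)}\circ\sfp^{(32)}\circ\sfp^{(21)}$ term together with applying the cochain-map relations with the correct source and target spaces.
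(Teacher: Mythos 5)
Your proposal is correct and follows essentially the same route as the paper: the paper's proof simply asserts that the homotopy identity $\sfid_{\frL^{(1)}}=\mu_1^{(1)}\circ\sfh^{(21)}+\sfh^{(21)}\circ\mu_1^{(1)}+\sfe^{(12)}\circ\sfp^{(21)}$ for the $\sfh^{(21)}$ of \eqref{eq:h21} ``follows from direct computation,'' and your argument is precisely that computation, carried out correctly via the two given retract equations, the cochain-map property of $\sfe^{(12)}$ and $\sfp^{(21)}$, and the cancellation of the $\sfe^{(12)}\circ\sfe^{(23)}\circ\sfp^{(32)}\circ\sfp^{(21)}$ terms. Your closing remarks --- that the side conditions of the input retracts are never used, and that specialness is then enforced by the substitution \eqref{eq:special_substitutions} --- are also consistent with how the proposition is stated.
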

        \begin{proof}
            The required identity $\sfid_{\frL^{(1)}}=\mu_1^{(1)}\circ\sfh^{(21)}+\sfh^{(21)}\circ\mu_1^{(1)}+\sfe^{(12)}\circ\sfp^{(21)}$ for $\sfh^{(21)}$ of~\eqref{eq:h21} follows from direct computation.
        \end{proof}
        We now have the following, useful corollary.
        \begin{corollary}\label{cor:projections_are_homotopy_transfer}
            A strict projection $\sfp$ of an $L_\infty$-algebra $\hat\frL$ to a quasi-isomorphic $L_\infty$-subalgebra\footnote{i.e.~a vector subspace of $\hat\frL$ on which the higher products close} $\frL$ lifts to a homotopy transfer.
        \end{corollary}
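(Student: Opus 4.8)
The plan is to exhibit $\sfp$ as the projection of a special deformation retract of the form~\eqref{eq:deformation_retract} and to verify that the homotopy transfer along it returns precisely the given higher products on $\frL$, thereby presenting the $L_\infty$-structure of $\frL$ as one transferred from $\hat\frL$. Let $\sfe\colon\frL\hookrightarrow\hat\frL$ denote the inclusion. Since $\frL$ is an $L_\infty$-subalgebra, $\sfe$ is a strict morphism, and since $\sfp$ is a strict projection onto $\frL$ we have $\sfp\circ\sfe=\sfid_{\frL}$. Thus $\sfp$ and $\sfe$ already constitute a projection--embedding pair of the underlying cochain complexes, and the only datum still missing is a contracting homotopy $\sfh$ on $\hat\frL$ satisfying the defining relation~\eqref{eq:deformation_retract_2} together with the side conditions~\eqref{eq:HT_side_conditions}.

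To produce $\sfh$ I would appeal to \ref{prop:factorize_contracting_homotopy}. As $\sfp$ is a quasi-isomorphism, $\frL$ and $\hat\frL$ share the same cohomology $H^\bullet\coloneqq H^\bullet_{\mu_1}(\frL)\cong H^\bullet_{\mu_1}(\hat\frL)$. Pick a minimal-model special deformation retract~\eqref{eq:MM_deformation_retract} for $\frL$, with structure maps $\sfp^{(32)}\colon\frL\to H^\bullet$, $\sfe^{(23)}\colon H^\bullet\to\frL$ and homotopy $\sfh^{(32)}$. The composites $\sfp^{(32)}\circ\sfp$ and $\sfe\circ\sfe^{(23)}$ are cochain maps with $(\sfp^{(32)}\circ\sfp)\circ(\sfe\circ\sfe^{(23)})=\sfid_{H^\bullet}$, and because $\hat\frL$ is likewise quasi-isomorphic to $H^\bullet$ the complementary subcomplex $\ker(\sfp^{(32)}\circ\sfp)$ is acyclic, hence contractible over the ground field; this furnishes a special deformation retract $\hat\frL\to H^\bullet$ with some homotopy $\sfh^{(31)}$. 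Feeding these two special deformation retracts into \ref{prop:factorize_contracting_homotopy} with $\frL^{(1)}=\hat\frL$, $\frL^{(2)}=\frL$ and $\frL^{(3)}=H^\bullet$ then yields the sought special deformation retract $(\sfp,\sfe,\sfh^{(21)})$, with $\sfh^{(21)}$ given by~\eqref{eq:h21}.

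It remains to verify that the homotopy transfer~\eqref{eq:transfer_formulas} along this retract recovers the original products. I would show inductively that the quasi-isomorphism components~\eqref{eq:QuasiIsomorphismInTransfer} satisfy $\sfE_i=0$ for all $i\geq 2$. One has $\sfE_1=\sfe$; assuming $\sfE_k=0$ for $2\le k<i$, the only surviving terms in the recursion for $\sfE_i$ are those with $k_1=\cdots=k_i=1$ and $j=i$, whose single argument of $\sfh^{(21)}$ is $\hat\mu_i\big(\sfe(b_{\sigma(1)}),\ldots,\sfe(b_{\sigma(i)})\big)=\sfe\big(\mu_i(b_{\sigma(1)},\ldots,b_{\sigma(i)})\big)$ because $\frL$ is a subalgebra. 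This lies in $\sfe(\frL)$ and is annihilated by the side condition $\sfh^{(21)}\circ\sfe=0$, so $\sfE_i=0$. With $\sfE_k=0$ for $k\geq 2$, the same reduction applied to~\eqref{eq:TransferredHigherProducts} leaves only the $j=i$, $k_1=\cdots=k_i=1$ summand; checking $\zeta=+1$ from~\eqref{eq:zeta} and using $\sfp\circ\sfe=\sfid_{\frL}$ together with total antisymmetry of $\hat\mu_i$, the sum over $\Sh(1,\ldots,1;i)$ collapses to $\mu_i^{(2)}(b_1,\ldots,b_i)=\sfp\big(\hat\mu_i(\sfe(b_1),\ldots,\sfe(b_i))\big)=\mu_i(b_1,\ldots,b_i)$. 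Thus the transferred structure is exactly that of $\frL$, and $\sfp$ is the projection of a genuine homotopy transfer.

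The crux, and the only place where the quasi-isomorphism hypothesis is essential, is the construction of $\sfh$: acyclicity of $\ker(\sfp^{(32)}\circ\sfp)$ is what guarantees a contracting homotopy exists at all, while \ref{prop:factorize_contracting_homotopy} is what arranges it to be compatible with the prescribed strict maps $\sfp$ and $\sfe$. Alternatively, one may bypass the minimal model and build $\sfh$ directly: $\sfe\circ\sfp$ is an idempotent cochain map splitting $\hat\frL\cong\sfe(\frL)\oplus\ker\sfp$, the hypothesis makes $\ker\sfp$ acyclic and hence contractible, and one takes $\sfh$ to be a contracting homotopy on $\ker\sfp$ and zero on $\sfe(\frL)$, rendered special via~\eqref{eq:special_substitutions}. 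By contrast, the concluding verification that the transfer is trivial on $\frL$ is routine bookkeeping, driven entirely by the subalgebra property and the side conditions.
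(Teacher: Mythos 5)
Your proposal is correct and follows essentially the same route as the paper's proof: inclusion plus projection with $\sfp\circ\sfe=\sfid_\frL$, special deformation retracts from both $\hat\frL$ and $\frL$ to the common minimal model combined via \ref{prop:factorize_contracting_homotopy}, and then the vanishing of the $\sfE_i$ for $i\geq 2$ (via closure of the $\hat\mu_j$ on $\sfe(\frL)$ and the side condition $\sfh\circ\sfe=0$) to conclude that the transfer reproduces the brackets of $\frL$. You merely supply details the paper leaves implicit, notably that the retract $\hat\frL\to\frL^\circ$ can be chosen with the composite structure maps $\sfp^{(32)}\circ\sfp$ and $\sfe\circ\sfe^{(23)}$ required by the proposition's hypotheses, which is a welcome sharpening rather than a different argument.
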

        
        \begin{proof}
            Because $\frL$ is a subspace, we have besides the projection also an embedding:
            \begin{equation}
                \begin{tikzcd}
                    \big(\hat\frL,\hat\mu_1\big)\arrow[r,shift left]{}{\sfp} & \big(\frL,\mu_1\big)\arrow[l,shift left]{}{\sfe}
                \end{tikzcd}
            \end{equation}
            such that $\sfp\circ\sfe=\sfid_{\frL}$. We also have special deformation retracts from both $\hat\frL$ and $\frL$ to the joint minimal model $\frL^\circ$. By \cref{prop:factorize_contracting_homotopy}, we then also have a special deformation retract from $\hat \frL$ to $\frL$. Let us denote the induced higher products via homotopy transfer by $\mu_n^*$. We observe the identity $(\sfp\circ\hat\mu_k)(\alpha_1,\ldots,\alpha_k)=0$ if $\sfp(\alpha_i)=0$ for some $i$, because $\sfp$ is strict. In the resulting homotopy transfer, we have $\sfp\circ\sfE_j=0$, for $j>1$, where $\sfE_j$ is defined in~\eqref{eq:QuasiIsomorphismInTransfer}, because of the side conditions $\sfp\circ \sfh=0$. We then observe that~\eqref{eq:TransferredHigherProducts} reduces to            
            \begin{equation}
                \mu_k^*(b_1,\ldots,b_k)=(\sfp\circ \hat\mu_k)(\sfe(b_1),\ldots,\sfe(b_k))= \mu_k((\sfp\circ\sfe)(b_1),\ldots,(\sfp\circ\sfe)(b_k))=\mu_k(b_1,\ldots,b_k)~.
            \end{equation}   
            This means the higher products induced by homotopy transfer and the original higher products coincide.
        \end{proof}
        
        \subsection{Spans of \texorpdfstring{$L_\infty$}{Linfty}-algebras}
        
        It would certainly be useful if all computations of quasi-isomorphisms $L_\infty$-algebras were encoded in homotopy transfers as for those, explicit and recursive formulas are provided by the homological perturbation lemma. Moreover, in many applications to perturbative quantum field theory, it is useful to turn computations into Feynman diagram expansions with all their combinatorial features, which is essentially what the homological perturbation lemma does. In this section, we shall show that every quasi-isomorphism of $L_\infty$-algebras can indeed be encoded in pairs of homotopy transfers.
        
        \paragraph{Pullbacks of $L_\infty$-algebras.}
        Given two $L_\infty$-algebras $\frL^{(1)}$ and $\frL^{(2)}$ with surjections\footnote{i.e.~weak morphisms of $L_\infty$-algebras with the linear component $\sigma_1^{(1,2)}$ surjective} $\sigma^{(1,2)}$ to a third $L_\infty$-algebra $\frL^{(b)}$, then there is a fourth $L_\infty$-algebra $\frL^{(p)}$ that fits into the pullback diagram
        \begin{equation}\label{eq:first_pullback}
            \begin{tikzcd}
                \frL^{(p)}\arrow[d,""]\arrow[r]
                & 
                \frL^{(2)}\arrow[d,"\sigma^{(2)}"]
                \\
                \frL^{(1)}\arrow[r,"\sigma^{(1)}"]
                & 
                \frL^{(b)}
            \end{tikzcd}
        \end{equation}
        with the usual universality of $\frL^{(p)}$ arising in pullbacks. Abstractly, this is a consequence of the existence of pullbacks for homotopy algebras, cf.~\cite[Theorem 4.1]{Vallette:1411.5533}, and $\frL^{(p)}$ is called the pullback; see also~\cite{Rogers:1809.05999} for the special case of $L_\infty$-algebras concentrated in non-positive degrees. It remains to show that there exists an $L_\infty$-algebra $\frL^{(c)}$ quasi-isomorphic to $\frL^{(p)}$ such that there are homotopy transfers $\frL^{(c)}\rightarrow \frL^{(1,2)}$.
        
        \begin{remark}
            One may be tempted to think that the pullback~\eqref{eq:first_pullback} should be regarded as a homotopy pullback. This is not the case, as the $L_\infty$-algebras $\frL^{(1,2,b)}$ are all homotopically equivalent, and hence $\frL^{(p)}$ could be identified with $\frL^{(b)}$. 
            
            Also, one may think that the existence of a suitable $\frL^{(c)}$ follows trivially from the decomposition theorem, which states that any $L_\infty$-algebra $\frL$ decomposes as $\frL=\frL^\circ\oplus \frL^{\rm lc}$ into a minimal model and a linearly contractible\footnote{In a linearly contractible $L_\infty$-algebra, all higher products except for the differential vanish and the cohomology is trivial. These are known as ``trivial pairs'' by physicists.} one. It is then tempting to try to identify $\frL^{(c)}=\frL^{(1)\circ}\oplus \frL^{(1)\rm lc}\oplus \frL^{(2)\rm lc}$, but generally, the quasi-isomorphisms to $\frL^{(1,2)}$ are not homotopy transfers.
        \end{remark}
        
        \pagebreak
        
        \paragraph{Spans of $L_\infty$-algebras.} 
        We have the following result.
        \begin{theorem}\label{thm:1}
            Consider a pair of quasi-isomorphic $L_\infty$-algebras $\frL^{(1)}$ and $\frL^{(2)}$. Then there is a \uline{span of $L_\infty$-algebras}, i.e.~a third $L_\infty$-algebra $\frL^{(c)}$ together with quasi-isomorphisms $\sfp^{(1,2)}$ and $\sfe^{(1,2)}$ that fit into the diagram
            \begin{equation}
                \begin{tikzcd}
                    & \frL^{(c)}\arrow[dl,"\sfp^{(1)}",swap,shift right] \arrow[dr,"\sfp^{(2)}",shift left]
                    \\
                    \frL^{(1)} \arrow[ur,"\sfe^{(1)}",swap,shift right] & & \frL^{(2)}\arrow[ul,"\sfe^{(2)}",shift left]
                \end{tikzcd}
            \end{equation}
            such that the higher products on $\frL^{(1,2)}$ and $\sfe^{(1,2)}$ are obtained from a homotopy transfer and given by formulas~\eqref{eq:transfer_formulas}. We call $\frL^{(c)}$ the \uline{correspondence $L_\infty$-algebra}.
        \end{theorem}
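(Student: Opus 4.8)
The plan is to realise the abstract pullback $\frL^{(c)}$ of the preceding paragraph as a correspondence whose two legs are genuine homotopy transfers, with \ref{cor:projections_are_homotopy_transfer} as the engine. First I would reduce to a single quasi-isomorphism: since any quasi-isomorphism of $L_\infty$-algebras admits a quasi-inverse, I may assume a weak morphism $\phi\colon\frL^{(1)}\to\frL^{(2)}$ whose linear part induces a cohomology isomorphism. Transferring both algebras to their cohomology via the Hodge--Kodaira special deformation retracts of~\eqref{eq:MM_deformation_retract} produces minimal models that are isomorphic; fixing such an identification yields a common minimal model $\frL^\circ$ together with surjective quasi-isomorphisms $\sigma^{(1,2)}\colon\frL^{(1,2)}\to\frL^\circ$, and the pullback of this cospan is $\frL^{(c)}$. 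Because $\sigma^{(1,2)}$ are quasi-isomorphisms, the two projections $\sfp^{(1,2)}\colon\frL^{(c)}\to\frL^{(1,2)}$ are quasi-isomorphisms as well.

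The heart of the argument, as flagged by the sentence ``it remains to show'', is the passage from $\frL^{(c)}$ to a quasi-isomorphic $\hat\frL^{(c)}$ whose legs are \emph{strict}, so that the homotopy-transfer formulas~\eqref{eq:transfer_formulas} apply. Here I would first record why no \emph{minimal} correspondence can work: writing $\frL^{(i)}\cong H^\bullet_{\mu_1}(\frL^{(i)})\oplus C^{(i)}$ with $C^{(i)}$ acyclic and trying to put an $L_\infty$-structure on $H^\bullet_{\mu_1}\oplus C^{(1)}\oplus C^{(2)}$ that makes both $\frL^{(1)}$ and $\frL^{(2)}$ subalgebras carrying their original products forces the purely cohomological products to equal the restrictions $\mu^{(1)}_i|_{H^\bullet}$ and $\mu^{(2)}_i|_{H^\bullet}$ simultaneously, which is impossible in general. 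This obstruction is resolved exactly by the extra contractible ``path'' directions of the homotopy pullback: modelling $\frL^{(c)}$ on $\frL^{(1)}\times_{\frL^\circ}(\frL^\circ\otimes\Omega^\bullet_{[0,1]})\times_{\frL^\circ}\frL^{(2)}$ decouples the two copies of the cohomology, linking them only through an acyclic cylinder, so that $\frL^{(1)}$ and $\frL^{(2)}$ may sit as two distinct quasi-isomorphic subalgebras with their own products. I would then arrange the projections $\sfp^{(1,2)}$ to be strict surjective quasi-isomorphisms with acyclic kernels; the degree-wise finite-dimensionality hypothesis enters here to guarantee that the required splittings, complements, and contracting homotopies exist. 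I expect this rectification, i.e.\ turning the inherently weak $\sigma^{(1,2)}$ and the weak pullback projections into strict surjections without altering the quasi-isomorphism type, to be the main technical obstacle.

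With such a $\hat\frL^{(c)}$ in hand, each leg is a strict surjective quasi-isomorphism onto $\frL^{(i)}$ with acyclic kernel, and I would complete the proof by running the computation of \ref{cor:projections_are_homotopy_transfer} on each: choosing a cochain-level special deformation retract that contracts the kernel and applying \ref{prop:factorize_contracting_homotopy} factors the joint retract to $\frL^\circ$ through $\frL^{(i)}$, so that the transfer maps $\sfe^{(i)}$ and the induced higher products are given by~\eqref{eq:transfer_formulas}; moreover, since $\sfp^{(i)}$ is strict and $\sfp^{(i)}\circ\sfe^{(i)}=\sfid$, the transferred products reproduce precisely the original $\mu^{(i)}_j$, exactly as the $\sfE_i$ collapse for $i>1$ in the corollary's proof. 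Assembling the two transfers gives the asserted span with projections $\sfp^{(1,2)}$ and embeddings $\sfe^{(1,2)}$, and setting $\frL^{(c)}:=\hat\frL^{(c)}$ names the correspondence $L_\infty$-algebra. Finally, in the cyclic setting I would check separately that the cylinder enlargement and the transfer can be chosen to preserve the metric structure, so that the span is one of cyclic $L_\infty$-algebras.
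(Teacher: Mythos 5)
Your overall architecture genuinely matches the paper's: pass to a common minimal model $\frL^\circ$ with projections $\sfp^{(1,2)}\colon\frL^{(1,2)}\to\frL^\circ$, build a model of the homotopy pullback over $\frL^\circ$, and then run \ref{prop:factorize_contracting_homotopy} together with the argument of \ref{cor:projections_are_homotopy_transfer} on each leg, so that the maps $\sfE_i$ of \eqref{eq:QuasiIsomorphismInTransfer} collapse for $i>1$ and the transferred products reproduce the original $\mu^{(i)}_j$. However, there is a genuine gap at precisely the point you flag as ``the main technical obstacle'' and then never resolve: the projections to the minimal model are unavoidably \emph{weak} morphisms, so your strict fibre product $\frL^{(1)}\times_{\frL^\circ}(\frL^\circ\otimes\Omega^\bullet_{[0,1]})\times_{\frL^\circ}\frL^{(2)}$ is not defined as it stands, and neither are the strict embeddings $\frL^{(i)}\hookrightarrow\hat\frL^{(c)}$ that the corollary argument requires. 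Even granting a strictification of the legs, the cylinder model does not obviously contain $\frL^{(i)}$ as an $L_\infty$-subalgebra on which the higher products close: the natural candidate $a\mapsto(a,\mathrm{const}_{\sigma^{(1)}(a)},\phi(a))$ involves the weak quasi-isomorphism $\phi$, so its image is not a subspace of the required kind. Since everything after ``with such a $\hat\frL^{(c)}$ in hand'' depends on this construction, the proposal is a strategy rather than a proof.

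The paper closes exactly this gap by changing pictures instead of rectifying: it dualises to Chevalley--Eilenberg algebras, where weak $L_\infty$-morphisms become honest morphisms of differential graded commutative algebras, forms the pushout $\frA=\sfCE(\frL^{(1)})\otimes_{\sfCE(\frL^\circ)}\sfCE(\frL^{(2)})$, and then, because $\frA$ is not semi-free, constructs a Koszul--Tate-type resolution $\hat\frA=\bigodot{}^\bullet\big(\frL^\circ\oplus\frL^{(1)}[1]\oplus\frL^{(2)}[1]\big)^*$ whose differential $Q^{(c)}$ is built order by order, exactly as one establishes the existence of a BV differential. This produces the explicit correspondence algebra $\frL^{(c)}=\frL^{(1)}\oplus\frL^{(2)}\oplus\frL^\circ[-1]$, inside which both $\frL^{(i)}$ sit so that the corollary-style collapse of the $\sfE_i$ for $i>1$ can actually be verified. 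Note also that the degree-wise finite-dimensionality is used there for the dualisation, not, as you suggest, to guarantee ``splittings, complements, and contracting homotopies'', which over a field exist in any dimension. Your cylinder object is a legitimate alternative model of the homotopy pullback, and with a worked-out rectification (or the dual-picture detour) it could be completed, but the central technical content of the theorem is exactly the step you deferred.
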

        
        We perform the proof in a number of steps. Firstly, by definition, $\frL^{(1,2)}$ have isomorphic minimal models $\frL^{(1,2)\circ}$, and we can choose them to be identical: $\frL^{\circ}=\frL^{(1)\circ}=\frL^{(2)\circ}$. We then have the following diagram.
        \begin{equation}\label{eq:quasiIsomorphismThroughMinimalModel}    
            \begin{tikzcd}
                & 
                \frL^{(2)}\arrow[d,shift left]{}{\sfp^{(2)}}
                \\
                \frL^{(1)}\arrow[r,shift left]{}{\sfp^{(1)}}
                & 
                \frL^\circ\arrow[l,shift left]{}{\sfe^{(1)}}\arrow[u,shift left]{}{\sfe^{(2)}}
            \end{tikzcd}
        \end{equation}
        
        At this point, it turns out convenient to switch to the dual, Chevalley--Eilenberg picture, because we can phrase arguments in a way familiar from the BV formalism. Recall that any $L_\infty$-algebra $\frL$ is dual to a semi-free differential graded commutative algebra, called its Chevalley--Eilenberg algebra 
        \begin{equation}
            \sfCE(\frL)\ \coloneqq\ \left(\bigodot{}^\bullet(\frL[1])^*,Q\right)~,
        \end{equation}
        where $\odot^\bullet V$ denotes the symmetric tensor algebra over a graded vector space $V$, and $[k]$ denotes the shift-isomorphism 
        \begin{equation}
            [k]\,:\,V\ =\ \bigoplus_{i\in\IZ}V_i\ \rightarrow\ V[k]\ =\ \bigoplus_{i\in\IZ}(V[k])_i
            \ewith
            (V[k])_i\ \coloneqq\ V_{i+k}~. 
        \end{equation}
        However, all of the arguments we give below can be dualised to the perhaps less familiar picture of codifferential coalgebras. In particular, the differential $Q$ on $\sfCE(\frL)$ is the dual of the natural codifferential $D\coloneqq\mu_1[1]+\mu_2[1]+\ldots$ on the codifferential coalgebra $\bigodot^\bullet\frL[1]$. Therefore, the fact that the dualisation only exists in certain situations, e.g.~for degree-wise finite-dimensional $L_\infty$-algebras, is not a problem, and our formulation of our arguments in terms of Chevalley--Eilenberg algebras is indeed just for presentations sake. Also, since we only use this technology in this proof, we refrain from giving more details on Chevalley--Eilenberg algebras; for a detailed review in our conventions, see~e.g.~\cite{Jurco:2018sby}. 
        
        From this perspective, diagram~\eqref{eq:quasiIsomorphismThroughMinimalModel} translates to 
        \begin{equation}           
            \begin{tikzcd}
                \sfCE(\frL^\circ)\arrow[d,shift left]{}{\sfp^{(1)*}}\arrow[r,shift left]{}{\sfp^{(2)*}}
                & 
                \sfCE(\frL^{(2)})\arrow[l,shift left]{}{\sfe^{(2)*}}
                \\
                \sfCE(\frL^{(1)})\arrow[u,shift left]{}{\sfe^{(1)*}}
                & 
            \end{tikzcd}
        \end{equation}
        and we need to construct the push-out $\frA$, which is given by 
        \begin{equation}
            \frA\ \coloneqq\ \sfCE(\frL^{(1)}) \otimes_{\sfCE(\frL^\circ)} \sfCE(\frL^{(2)})\ =\ (\sfCE(\frL^{(1)}) \otimes \sfCE(\frL^{(2)}))/\caI
        \end{equation}
        with $\caI$ the ideal generated by expressions of the form
        \begin{equation}
            a(\sfp^{(1)*}(b))\otimes c-a\otimes(\sfp^{(2)*}(b)) c
        \end{equation}
        for all $a\in \sfCE(\frL^{(1)})$, $c\in\sfCE(\frL^\circ)$, and $b\in\sfCE(\frL^{(2)})$. Note that the ideal $\caI$ is a differential ideal, and the differential on $\frA$ is simply
        \begin{equation}
            \hat Q(a\otimes b)\ \coloneqq\ Q^{(1)}a\otimes b+(-1)^{|a|}a\otimes Q^{(2)}b~,
        \end{equation}
        where $Q^{(1,2)}$ are the differentials on $\sfCE(\frL^{(1,2)})$.
        
        The algebra $\frA$ is not yet the Chevalley--Eilenberg algebra of an $L_\infty$-algebra, because it is not semi-free\footnote{i.e.~free as a graded algebra (without differential)}. To remedy this, we use a Koszul--Tate-type complex quasi-isomorphic to $\frA$, very analogously to the BV formalism, see e.g.~\cite{Henneaux:1992}.\footnote{Recall that in the BV formalism, observables are defined as functions on field space modulo the ideal generated by the equations of motion. The resolution involves the introduction of anti-fields and the continuation of the BRST differential to these.}
        
        To this end, we introduce the graded commutative algebra 
        \begin{equation}
            \hat\frA\ \coloneqq\ \bigodot{}^\bullet(\frL^{\circ}\oplus\frL^{(1)}[1]\oplus\frL^{(2)}[1])^*~.
        \end{equation}
        There is a natural algebra homomorphism
        \begin{equation}
            \begin{aligned}
                g^*\,:\,\bigodot{}^\bullet(\frL^\circ\oplus \frL^\circ[1])^*\ &\rightarrow\ \hat\frA~,
                \\
                g^*|_{\odot^\bullet(\frL^{\circ}[1])^*}\ &\coloneqq\ \sfp^{(1)*}-\sfp^{(2)*}~, 
                \\
                g^*|_{\odot^\bullet(\frL^\circ)^*}\ &\coloneqq\ \sfi~,
            \end{aligned}
        \end{equation}
        where $\sfi:\bigodot{}^\bullet(\frL^\circ)^*\rightarrow \hat \frA$ is the evident inclusion. The algebra $\hat \frA$ becomes differential graded by virtue of the following result.
        \begin{proposition} 
            Consider the algebra $\hat\frA$ freely generated by $\xi^\alpha\in(\frL^{(1)}[1]\oplus\frL^{(2)}[1])^*$ and $\beta^i\in (\frL^{\circ})^*$. There is a differential $Q^{(c)}$ on $\hat \frA$ defined as 
            \begin{equation}\label{eq:thedifferential}
                \begin{aligned}
                    Q^{(c)}\xi^\alpha&\ \coloneqq\ (Q^{(1)}+Q^{(2)})\xi^\alpha,
                    \\
                    Q^{(c)}\beta^i&\ \coloneqq\ g^*\left(\sfs\beta^i+\sum_{k=1}^\infty\frac{1}{k!}\beta^{i_1}\cdots\beta^{i_k}P^i_{i_1\cdots i_k}\right)~,
                \end{aligned}
            \end{equation}
            where $P^i_{i_1\cdots i_n}$ are power series (without constant terms) in the $\sfs\beta^i$, and $\sfs$ is the shift isomorphism $\sfs=[-1]:(\frL^{\circ})^*\rightarrow (\frL^{\circ}[1])^*$.
        \end{proposition}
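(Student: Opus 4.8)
The shape of the asserted formula points to a Koszul--Tate/homological-perturbation argument, paralleling the construction of the BRST differential in the BV formalism. The plan is to grade $\hat\frA$ by \emph{resolution degree}, assigning the $\xi^\alpha$ degree $0$ and the $\beta^i$ degree $1$, and to decompose the candidate differential as $Q^{(c)}=\delta+\sigma$. Here $\delta$ is the Koszul--Tate piece fixed by $\delta\xi^\alpha\coloneqq0$ and $\delta\beta^i\coloneqq g^*(\sfs\beta^i)=(\sfp^{(1)*}-\sfp^{(2)*})(\sfs\beta^i)$; it lowers the resolution degree by one, since $g^*(\sfs\beta^i)$ is a polynomial in the $\xi^\alpha$ alone. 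The remainder $\sigma$ collects $(Q^{(1)}+Q^{(2)})\xi^\alpha$ together with the $\beta$-dependent corrections encoded by the power series $P^i_{i_1\cdots i_k}$, and is resolution-degree non-decreasing. On the subalgebra generated by the $\xi^\alpha$ the derivation $Q^{(c)}$ acts as $Q^{(1)}+Q^{(2)}$, so $(Q^{(c)})^2$ vanishes there automatically because $\sfCE(\frL^{(1)})\otimes\sfCE(\frL^{(2)})$ is a genuine differential graded algebra; the only relations left to secure are $(Q^{(c)})^2\beta^i=0$, to be solved order by order in resolution degree.

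The technical heart is the acyclicity of $\delta$ in positive resolution degree, which makes the recursion go through. I would establish this by showing that the elements $g^*(\sfs\beta^i)$ form a regular sequence: their linear parts are $\sfp^{(1)*}_1(\sfs\beta^i)-\sfp^{(2)*}_1(\sfs\beta^i)$, and since $\sfp^{(1)}$ is the surjective projection onto the minimal model, $\sfp^{(1)*}_1$ is injective, so these linear parts are linearly independent in the generators of $\sfCE(\frL^{(1)})\otimes\sfCE(\frL^{(2)})$. Having independent images in $\mathfrak{m}/\mathfrak{m}^2$, they extend to a regular system of parameters, whence the Koszul complex $(\hat\frA,\delta)$ is a resolution of the pushout $\frA$ and $\delta$ is acyclic above resolution degree zero. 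The degree-wise finite-dimensionality hypothesis is what keeps this complex, and the power series below, well-defined degree by degree.

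With acyclicity available, I would solve $(Q^{(c)})^2\beta^i=0$ recursively. At the lowest order the obstruction is $d\,g^*(\sfs\beta^i)=(\sfp^{(1)*}-\sfp^{(2)*})(Q^\circ\sfs\beta^i)$, where $Q^\circ$ is the Chevalley--Eilenberg differential of $\frL^\circ$; because $\sfp^{(1)*}$ and $\sfp^{(2)*}$ are differential graded algebra morphisms and $\caI$ is a differential ideal, this lies in the image of $\delta$, and since $Q^\circ$ has no linear part (minimality) the resulting $P^i$ carry no constant term. The cleanest way to organize all higher corrections at once is to exhibit the domain $\mathcal{B}\coloneqq\bigodot^\bullet(\frL^\circ\oplus\frL^\circ[1])^*$ of $g^*$ as $\sfCE$ of the mapping cone of $\id_{\frL^\circ}$, i.e.\ to build a differential $\mathcal{D}$ on $\mathcal{B}$ with $\mathcal{D}\beta^i=\sfs\beta^i+\sum_{k\geq1}\frac{1}{k!}\beta^{i_1}\cdots\beta^{i_k}P^i_{i_1\cdots i_k}$ and $\mathcal{D}^2=0$; the $P^i$ are then the structure functions of this acyclic $L_\infty$-algebra, whose existence is standard. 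Defining $Q^{(c)}\beta^i\coloneqq g^*(\mathcal{D}\beta^i)$ and checking that $g^*$ intertwines $\mathcal{D}$ with $Q^{(c)}$ reduces $(Q^{(c)})^2\beta^i=g^*(\mathcal{D}^2\beta^i)=0$ to $\mathcal{D}^2=0$.

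The main obstacle is twofold: the acyclicity of the Koszul--Tate differential in positive resolution degree, and the structural point that the homological-perturbation corrections can be assembled into $g^*(\mathcal{D}\beta^i)$ for a single differential $\mathcal{D}$ on $\mathcal{B}$, equivalently that $g^*$ lifts to a chain map. Once both are secured the recursion never obstructs, each correction being produced by the contracting homotopy of $\delta$, and matching the signs in the homotopy Jacobi identities is then routine but lengthy bookkeeping. Finally, since $\hat\frA$ is free as a graded-commutative algebra, $(\hat\frA,Q^{(c)})$ is automatically the Chevalley--Eilenberg algebra of an $L_\infty$-algebra, the correspondence algebra $\frL^{(c)}$.
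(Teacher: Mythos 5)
You have correctly sensed the Koszul--Tate flavour of the construction, and several of your observations line up with the paper: the reduction of $(Q^{(c)})^2=0$ to $(Q^{(c)})^2\beta^i=0$, the fact that the lowest obstruction lies in the differential ideal $\caI$ because the $\sfp^{(a)*}$ are dg-algebra maps, and the fact that minimality of $\frL^\circ$ (no linear term in $Q^\circ$) forces the $P^i$ to have no constant term. But your proof is not the paper's proof, and the two devices you rely on to close the argument both have genuine problems. The paper never decomposes by resolution degree and never invokes acyclicity of a Koszul--Tate differential: it expands $(Q^{(c)})^2\beta^i$ in powers of the $\beta$'s and exhibits an \emph{explicit} particular solution at every order, namely $\beta^{i_1}P^i_{i_1}=-\sfs^{-1}\kappa(Q^\circ\sfs\beta^i)$ and $\beta^{i_{n+1}}P^i_{i_1\cdots i_{n+1}}=\sfs^{-1}\kappa(Q^\circ P^i_{i_1\cdots i_n})$, where $\kappa$ rescales $n$-fold products of the $\sfs\beta$'s by $1/n$ and $\sfs^{-1}$ is extended as a derivation. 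Crucially, these equations are posed and solved \emph{upstairs}, in $\mathcal{B}=\bigodot^\bullet(\frL^\circ\oplus\frL^\circ[1])^*$, before $g^*$ is ever applied; that is what makes the corrections come out in the constrained shape $g^*\bigl(\tfrac{1}{k!}\,\beta^{i_1}\cdots\beta^{i_k}P^i_{i_1\cdots i_k}\bigr)$ that the statement demands.

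This exposes the two gaps. First, your ``cleanest'' route misidentifies the key object: the Chevalley--Eilenberg algebra of the mapping cone of $\id_{\frL^\circ}$ is the \emph{Weil algebra} of $\frL^\circ$, whose extra generators sit one degree \emph{above} the CE generators and whose differential does \emph{not} restrict to $Q^\circ$ on $\sfCE(\frL^\circ)$ (on the CE generators it is $Q^\circ$ plus the shift). What you need is the opposite-shift object --- generators $\beta^i$ one degree \emph{below} $\sfs\beta^i$, differential restricting to $Q^\circ$ exactly, corrections in the ideal of the $\beta$'s with no constant $P$'s --- i.e.\ a Tate--Sullivan acyclic closure (the CE algebra of the homotopy \emph{fibre} of the identity, not its cone). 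The existence of that structure with precisely these properties, in the $\IZ$-graded, degree-wise finite-dimensional, power-series setting at hand, is not an off-the-shelf citation (the standard sources treat connected, non-negatively graded algebras); it is exactly what the paper's recursion constructs, so declaring it ``standard'' is circular, and the heart of the proposition remains unproved in your write-up. Second, your primary route proves a weaker statement than the one asserted: acyclicity of $\delta$ (even granting it --- note that ``regular sequence'' is the wrong notion here, since odd elements are always zero divisors; one needs the characteristic-zero Koszul--Tate statement, which your even $\beta$'s pairing odd constraints do set up) only produces $\delta$-preimages that are \emph{arbitrary} elements of $\hat\frA$, whereas the proposition requires the order-$k$ correction to be $\beta^{i_1}\cdots\beta^{i_k}$ times an element in the image of $\sfp^{(1)*}-\sfp^{(2)*}$, which is a proper subspace (a difference of algebra maps is not an algebra map, so membership of an obstruction in $\caI$ does not yield a representative of the required form). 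You also assert, but never verify, the $\delta$-closedness of the obstructions beyond lowest order. As written, your recursion yields \emph{some} differential extending $\delta$ with the right linear part, not one of the stated form; repairing this pushes you back to working upstairs in $\mathcal{B}$, i.e.\ back to the paper's explicit construction.
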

        \begin{proof}
            We have to show the existence of suitable $P^i_{i_1\cdots i_n}$ such that $Q^{(c)}$ squares to zero, which directly reduces to $(Q^{(c)})^2\beta^i=0$. We compute
            \begin{equation}
                \begin{aligned}
                    (Q^{(c)})^2\beta^i\ &=\ Q^{(c)}g^*\left(\sfs\beta^i+\sum_{k=1}^\infty\frac{1}{k!}\beta^{i_1}\cdots\beta^{i_k}P^i_{i_1\cdots i_k}\right)
                    \\
                    &=\ g^*(Q^\circ\sfs\beta^i)+\sum_{k=1}^\infty \frac{1}{(k-1)!} (Q^{(c)}\beta^{i_1})\beta^{i_2}\cdots\beta^{i_k}g^*(P^i_{i_1\cdots i_k})
                    \\
                    &\kern1cm+\sum_{k=1}^\infty\frac{(-1)^{|\beta^{i_1}|+\cdots+|\beta^{i_k}|}}{k!}\beta^{i_1}\ldots\beta^{i_k}g^*(Q^\circ P^i_{i_1\cdots i_k})~.
                \end{aligned}
            \end{equation}
            A construction for the $P^i_{i_1\cdots i_k}$ can be produced order by order in the\footnote{This is the common approach for demonstrating e.g.~the existence of the BV differential.} $\beta^i$. We start with the linear terms 
            \begin{equation}\label{eq:lowest_term}
                \beta^{i_1}P^i_{i_1}\ =\ -\sfs^{-1}\kappa(Q^\circ\sfs\beta^i)~,
            \end{equation}
            where $\kappa$ is defined as the linear extension of
            \begin{equation}
                \begin{aligned}
                    \kappa\,:\,\bigodot{}^\bullet(\frL^\circ[1])^*\ &\rightarrow\ \bigodot{}^\bullet(\frL^\circ[1])^*~,
                    \\
                    \sfs\beta^{i_1}\cdots\sfs\beta^{i_n}\ &\mapsto\ \frac{1}{n}\sfs\beta^{i_1}\cdots\sfs\beta^{i_n}~,
                \end{aligned}
            \end{equation}
            and $\sfs^{-1}$ is the inverse of $\sfs$, continued as a derivation to products. The resulting $Q^{(c)}$ satisfies
            \begin{equation}
                (Q^{(c)}_1)^2\beta^i\ =\ \caO(\beta)~,
            \end{equation}
            where we defined
            \begin{equation}
                Q^{(c)}_n\beta^i\ \coloneqq\ g^*\left(\sfs\beta^i+\sum_{k=1}^n\frac{1}{k!}\beta^{i_1}\cdots\beta^{i_k}P^i_{i_1\cdots i_k}\right)~.
            \end{equation}
            We can continue inductively. Say, we found the $P^i_{i_1\ldots i_k}$ up to order $n$, so that
            \begin{equation}
                Q^{(c)}_ng^*\left(\sfs\beta^i+\sum_{k=1}^n\frac{1}{k!}\beta^{i_1}\ldots\beta^{i_k}P^i_{i_1\cdots i_k}\right)\ =\ \caO(\beta^{n})~.
            \end{equation}
            Then we can choose the $P^i_{i_1\cdots i_{n+1}}$ so that 
            \begin{equation}
                \begin{aligned}
                    (Q^{(c)}_{n+1})^2\beta^i\ &\sim\ \frac{(-1)^{|\beta^{i_1}|+\ldots+|\beta^{i_n}|}}{n!}\beta^{i_1}\cdots\beta^{i_{n}}g^*(\sfs\beta^{i_{n+1}})g^*(P^i_{i_1\cdots i_{n+1}})
                    \\
                    &\hspace{1cm}+\frac{(-1)^{|\beta^{i_1}|+\cdots+|\beta^{i_n}|}}{n!}\beta^{i_1}\cdots\beta^{i_n}g^*(Q^\circ P^i_{i_1\cdots i_n})~,
                \end{aligned}
            \end{equation}
            where we dropped terms that are of order different than $n$ in the $\beta^i$. At the next order, we therefore need to solve
            \begin{equation}
                \beta^{i_1}\cdots\beta^{i_{n}}\,\sfs \beta^{i_{n+1}}P^i_{i_1\cdots i_{n+1}}\ =\ \beta^{i_1}\cdots\beta^{i_{n}}Q^\circ P^i_{i_1\cdots i_n}~,
            \end{equation}
            which always has a solution. A particular solution is given by 
            \begin{equation}
                \beta^{i_{n+1}}P^i_{i_1\cdots i_{n+1}}\ =\ \sfs^{-1} \kappa(Q^\circ P^i_{i_1\cdots i_n})~,
            \end{equation}        
            as is readily seen by applying $\sfs$ to both sides and multiplying the results by $\beta^{i_1}\cdots\beta^{i_{n}}$.
        \end{proof}
        
        Having defined a semi-free differential graded commutative algebra, we can convince ourselves that it is of the right size and that there are homotopy transfers from the $L_\infty$-algebra defined by $\hat\frA$ to either $\frL^{(1,2)}$. 
        \begin{proposition}
            The differential graded commutative algebra $(\hat \frA, Q^{(c)})$ is the Chevalley--Eilenberg algebra of an $L_\infty$-algebra $\frL^{(c)}$ quasi-isomorphic to both $\frL^{(1)}$ and $\frL^{(2)}$. In fact, there are homotopy transfers from $\frL^{(c)}$ to $\frL^{(1)}$ and $\frL^{(2)}$.
        \end{proposition}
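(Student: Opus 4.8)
The plan is to identify $\frL^{(c)}$ and its differential, produce two strict projections to $\frL^{(1,2)}$, check that they are quasi-isomorphisms, and finally realise them as homotopy transfers using \ref{prop:factorize_contracting_homotopy} together with the argument of \ref{cor:projections_are_homotopy_transfer}. First I would note that $\hat\frA=\bigodot^\bullet(\frL^\circ\oplus\frL^{(1)}[1]\oplus\frL^{(2)}[1])^*$ is free as a graded-commutative algebra and, by the preceding proposition, carries the differential $Q^{(c)}$. A semi-free differential graded commutative algebra is the Chevalley--Eilenberg algebra of an $L_\infty$-algebra, and since $\frL^{(1)}$, $\frL^{(2)}$, $\frL^\circ$ are degree-wise finite-dimensional, dualising back is unproblematic and yields $\frL^{(c)}=\frL^{(1)}\oplus\frL^{(2)}\oplus\frL^\circ[-1]$. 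Reading off the linear part of $Q^{(c)}$, the differential $\mu_1^{(c)}$ acts as $\mu_1^{(1)}$ on $\frL^{(1)}$, as $\mu_1^{(2)}$ on $\frL^{(2)}$, trivially on $\frL^\circ[-1]$, and additionally sends $x_1\oplus x_2\mapsto\sfp^{(1)}x_1-\sfp^{(2)}x_2\in\frL^\circ[-1]$, the cross term coming from $g^*(\sfs\beta^i)=\sfp^{(1)*}\sfs\beta^i-\sfp^{(2)*}\sfs\beta^i$.

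Next I would build the projections. Since $\hat\frA$ is freely generated and $Q^{(c)}\xi^\alpha=(Q^{(1)}+Q^{(2)})\xi^\alpha$ reduces to $Q^{(1)}\xi^\alpha$ on the generators dual to $\frL^{(1)}[1]$, these generate a sub-DGCA isomorphic to $\sfCE(\frL^{(1)})$, and the inclusion $\sfCE(\frL^{(1)})\hookrightarrow\hat\frA$ is a strict morphism of differential graded commutative algebras. Dually this is a strict surjection $\sfp^{(1)}\colon\frL^{(c)}\to\frL^{(1)}$, namely projection onto the first summand, and symmetrically one obtains $\sfp^{(2)}$. To see that these are quasi-isomorphisms I would observe that $\ker\sfp^{(1)}=\frL^{(2)}\oplus\frL^\circ[-1]$, with its induced differential $x_2\oplus y\mapsto\mu_1^{(2)}x_2\oplus(-\sfp^{(2)}x_2)$, is the mapping cone of the minimal-model projection $\sfp^{(2)}\colon\frL^{(2)}\to\frL^\circ$, hence acyclic because $\sfp^{(2)}$ is a quasi-isomorphism; a quotient by an acyclic subcomplex is a quasi-isomorphism (equivalently, one runs the long exact sequence of $0\to\ker\sfp^{(1)}\to\frL^{(c)}\to\frL^{(1)}\to 0$, or notes that $\hat\frA$ is the homotopy pushout of $\sfCE(\frL^{(1)})\leftarrow\sfCE(\frL^\circ)\to\sfCE(\frL^{(2)})$ along two quasi-isomorphisms). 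This already shows that $\frL^{(c)}$ is quasi-isomorphic to both $\frL^{(1)}$ and $\frL^{(2)}$.

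For the homotopy transfers I would pass to the joint minimal model $\frL^\circ$, arranging the minimal-model retract of $\frL^{(c)}$ so that its projection factors as $\frL^{(c)}\xrightarrow{\sfp^{(1)}}\frL^{(1)}\to\frL^\circ$; feeding this retract together with the minimal-model retract of $\frL^{(1)}$ into \ref{prop:factorize_contracting_homotopy} produces a special deformation retract $\frL^{(c)}\rightleftarrows\frL^{(1)}$ whose projection is $\sfp^{(1)}$. Transferring the higher products of $\frL^{(c)}$ along it by \eqref{eq:transfer_formulas} then reproduces the original brackets $\mu_i^{(1)}$: applying $\sfp^{(1)}$ at the root of each tree and using that $\sfp^{(1)}$ is a strict morphism lets one commute it through every product $\mu^{(c)}_j$, after which the side condition $\sfp^{(1)}\circ\sfh=0$ yields $\sfp^{(1)}\circ\sfE_k=\delta_{k,1}\,\sfid$ for the transfer maps $\sfE_k$ and thereby annihilates every tree containing an internal homotopy; only the all-$k_a=1$ term survives, giving back $\mu_i^{(1)}(b_1,\ldots,b_i)$. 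The same computation with $\sfp^{(2)}$ supplies the second leg of the span.

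The point requiring most care is precisely this last step. Because the inclusion $\sfCE(\frL^{(1)})\hookrightarrow\hat\frA$ is a sub-DGCA, its dual presents $\frL^{(1)}$ as a strict \emph{quotient} of $\frL^{(c)}$ rather than as a literal $L_\infty$-subalgebra, so \ref{cor:projections_are_homotopy_transfer} cannot be quoted verbatim; instead I would run the dual of its proof, trading the subalgebra hypotheses ``products close on $\mathrm{im}\,\sfe$ and $\sfh\circ\sfe=0$'' for ``$\sfp^{(1)}$ strict and $\sfp^{(1)}\circ\sfh=0$'', which is exactly what drives the tree collapse above. One must also check that the minimal-model retracts can be chosen compatibly so that the projection of $\frL^{(c)}$ factors through $\sfp^{(1)}$, as \ref{prop:factorize_contracting_homotopy} requires. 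The degree-wise finite-dimensionality assumption is what secures the minimal models and the dualisation throughout; cyclicity, which is not asserted in the statement, would have to be verified separately.
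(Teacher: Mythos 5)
Your proposal is correct and follows the paper's overall skeleton --- identify $\frL^{(c)}=\frL^{(1)}\oplus\frL^{(2)}\oplus\frL^\circ[-1]$ and its differential from the linear part of $Q^{(c)}$, establish the quasi-isomorphisms, then build the special deformation retracts via \ref{prop:factorize_contracting_homotopy} through the joint minimal model and check that the transfer returns the original brackets --- but you run both decisive sub-steps differently, and the difference is worth recording. For the quasi-isomorphism, the paper computes the cohomology of the complex \eqref{11} directly; you instead exhibit the strict projection $\sfp^{(1)}$ (dual to the subalgebra $\sfCE(\frL^{(1)})\subset\hat\frA$) and observe that its kernel is the mapping cone of $\sfp^{(2)}_\text{lin}\colon\frL^{(2)}\to\frL^\circ$, hence acyclic; this buys an explicit strict quasi-isomorphism rather than only an abstract isomorphism of cohomologies. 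More importantly, in the final step the paper argues on the embedding side: the $\mu^{(c)}_j$, $j>1$, close on $\mathrm{im}\,\sfE_1$, so the side condition $\sfh\circ\sfe=0$ forces $\sfE_i=0$ for $i>1$. You argue dually, on the projection side: $\sfp^{(1)}$ is strict and $\sfp^{(1)}\circ\sfh=0$, so $\sfp^{(1)}\circ\sfE_k=\delta_{k,1}\,\sfid$ and only the tree with all $k_a=1$ survives in \eqref{eq:transfer_formulas}. Your variant is the more watertight of the two: the subspace $\frL^{(1)}\subset\frL^{(c)}$ is indeed closed under $\mu^{(c)}_j$ for $j\geq 2$, but \emph{not} under $\mu^{(c)}_1$ (because of the cross-term $\sfp^{(1)}_\text{lin}$ in \eqref{11}), so the naive inclusion is not a cochain map; a legitimate embedding must carry an $\frL^{(2)}$-component $\epsilon$ with $\sfp^{(2)}_\text{lin}\circ\epsilon$ cochain-homotopic to $\sfp^{(1)}_\text{lin}$, and the higher products close on the image of such an embedding only if $\epsilon$ is a strict morphism $\frL^{(1)}\rightarrow\frL^{(2)}$ --- exactly the kind of map whose existence cannot be presupposed in this context. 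Your projection-side argument is insensitive to the choice of embedding and therefore avoids this loophole, which is also why your remark that \ref{cor:projections_are_homotopy_transfer} cannot be quoted verbatim (quotient versus subalgebra) is on target. Both treatments leave the same point at a sketch level, namely that the minimal-model retract of $\frL^{(c)}$ can be arranged so that its projection and embedding factor through $\sfp^{(1)}$ and your chosen section, as \ref{prop:factorize_contracting_homotopy} requires (a standard idempotent/Hodge-type argument over a field, using degree-wise finite dimensionality); you at least flag this explicitly, while the paper leaves it implicit. Deferring cyclicity is consistent with the statement and with the paper.
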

        \begin{proof}
            As a vector space, the $L_\infty$-algebra $\frL^{(c)}$ is given by 
            \begin{equation}
                \frL^{(c)}\ =\ \frL^{(1)}\oplus\frL^\circ[-1]\oplus\frL^{(2)}~.
            \end{equation}
            To extract the differential, we need to consider the linear part of $Q^{(c)}$. Because $Q^\circ$ does not have a linear term, equation~\eqref{eq:lowest_term} implies that the linear differential $Q^{(c)}_\text{lin}$ acts according to 
            \begin{equation}
                \begin{aligned}
                    Q^{(c)}_\text{lin}\xi^\alpha\ &\coloneqq\ (Q^{(1)}_\text{lin}+Q^{(2)}_\text{lin})\xi^\alpha,
                    \\
                    Q^{(c)}_\text{lin}\beta^i\ &\coloneqq\ g^*_\text{lin}(\sfs\beta^i)~.
                \end{aligned}
            \end{equation}
            Hence, the cochain complex underlying $\frL^{(c)}$ takes the form
            \begin{equation}\label{11}
                \sfCh(\frL^{(c)})\ =\ 
                \left( 
                \begin{tikzcd}
                    \ldots \frL^{(1)}_0 \arrow[r,"\mu_1^{(1)}"]  \arrow[rd,"\sfp^{(1)}_\text{lin}"]  
                    & 
                    \frL^{(1)}_1 \arrow[r,"\mu_1^{(1)}"] \arrow[rd,"\sfp^{(1)}_\text{lin}"]      
                    &
                    \frL^{(1)}_2 \arrow[r,"\mu_1^{(1)}"] \arrow[rd,"\sfp^{(1)}_\text{lin}"]
                    &
                    \frL^{(1)}_3 \ldots
                    \\
                    \frL^\circ_{-1} \arrow[r,"0"] 
                    & 
                    \frL^\circ_0 \arrow[r,"0"] 
                    &
                    \frL^\circ_1 \arrow[r,"0"] 
                    &
                    \frL^\circ_2
                    \\
                    \ldots  \frL^{(2)}_0 \arrow[r,"\mu_1^{(2)}"']  \arrow[ru,"-\sfp^{(2)}_\text{lin}"']  
                    & 
                    \frL^{(2)}_1 \arrow[r,"\mu_1^{(2)}"'] \arrow[ru,"-\sfp^{(2)}_\text{lin}"']      
                    &
                    \frL^{(2)}_2 \arrow[r,"\mu_1^{(2)}"'] \arrow[ru,"-\sfp^{(2)}_\text{lin}"']
                    &
                    \frL^{(2)}_3 \ldots
                \end{tikzcd}\right)~.
            \end{equation}
            The cohomology of $\frL^{(1)}\oplus \frL^{(2)}$ is $\frL^\circ\oplus \frL^\circ$, but in $\frL^{(c)}$, only a subspace isomorphic to $\frL^\circ$ is in the kernel of the differential $\mu_1$. Therefore, it is clear that the cohomology of $\frL^{(c)}$ is isomorphic to $\frL^{\circ}$. 
            
            In order to show that there are homotopy transfers from $\frL^{(c)}$ to both $\frL^{(1)}$ and $\frL^{(2)}$, we can apply \cref{cor:projections_are_homotopy_transfer} because the obvious projections from $\frL^{(c)}$ to $\frL^{(1)}$ and $\frL^{(2)}$ are strict maps of $L_\infty$-algebras. 
            
            Let us illustrate the details. Since the situation is symmetric, we can focus on $\frL^{(1)}$. Let $\sfp$ be the obvious projection from $\frL^{(c)}$ to $\frL^{(1)}$. We note that this is a strict morphism of $L_\infty$-algebras, which can be seen from \eqref{eq:thedifferential}. Take an element $\xi^\alpha\in (\frL^{(1)}[1])^*$, by the first line of \eqref{eq:thedifferential}, we have $Q^{(c)}\sfp^*(\xi^\alpha)=\sfp^*(Q^{(1)}\xi^\alpha)$, where $\sfp^*\,:\,\bigodot{}^\bullet(\frL^{(1)}[1])^*\to \bigodot{}^\bullet(\frL^{(1)}[1] \oplus \frL^\circ\oplus \frL^{(2)}[1]  )^*$ is the obvious inclusion. In the $L_\infty$-algebra language, this translates to the equation $\sfp\circ \mu^{(c)}_k= \mu^{(1)}_k\circ (\sfp^{\otimes k})$. Next, consider the deformation retracts
            \begin{equation}
                \begin{tikzcd}
                    \ar[loop,out=160,in=190,distance=21,"\sfh^{(1,2)}" left] (\frL^{(1,2)},\mu_1^{(1,2)})\arrow[r,shift left]{}{\sfp^{(1,2)}} & (\frL^{\circ},0)\arrow[l,shift left]{}{\sfe^{(1,2)}}
                \end{tikzcd}
            \end{equation}
            satisfying the side conditions \eqref{eq:HT_side_conditions}.
            We define an embedding $\sfe\,:\,\frL^{(1)}\rightarrow \frL^{(c)}$ and a contracting homotopy $\sfh\,:\, \frL^{(c)}\rightarrow \frL^{(c)}$ by
            \begin{equation}
				\begin{aligned}
                    \sfe(b^{(1)})\ &\coloneqq\ (b^{(1)},0,(\sfe^{(2)}\circ\sfp^{(1)})(b^{(1)}))~,
                    \\
                    \sfh(b^{(1)},b^\circ,b^{(2)})\ &\coloneqq\ \big(0,-\sfe^{(2)}(b^\circ),\sfh^{(2)}(b^{(2)})\big)
				\end{aligned}
			\end{equation}
            for all $\sfb^{(1)}\in \frL^{(1)}$, $\sfb^\circ\in \frL^\circ[-1]$, and $\sfb^{(2)}\in \frL^{(2)}$. One can check that $\sfh$ satisfies the side conditions. We thus arrive at the special deformation retract 
            \begin{equation}
                \begin{tikzcd}
                    \ar[loop,out=160,in=190,distance=21,"\sfh" left] (\frL^{(c)},\mu_1^{(c)})\arrow[r,shift left]{}{\sfp} & (\frL^{(1)},\mu_1^{(1)})\arrow[l,shift left]{}{\sfe}
                \end{tikzcd}.
            \end{equation}
            
            Because $\sfp\colon \frL^{(c)}\to  \frL^{(1)}$ is a strict map of $L_\infty$-algebras, we can apply \cref{cor:projections_are_homotopy_transfer}: we have $\sfp\circ\sfh=0$ and so, $\sfp\circ\sfE_j=0$ for all $j>1$, where $\sfE_j$ is defined in~\eqref{eq:QuasiIsomorphismInTransfer}, and the higher products on $\frL^{(1)}$ induced by the homotopy transfer and given by~\eqref{eq:TransferredHigherProducts} coincide with the original higher products of $\frL^{(1)}$.
        \end{proof}
        
        With the last lemma, the proof of \cref{thm:1} is complete. 
        
        \paragraph{Lifting of homotopy transfers.} As a trivial example, consider the lift of a quasi-isomorphism arising from a homotopy transfer 
        \begin{equation}
            \begin{tikzcd}
                \ar[loop,out=160,in=200,distance=20,"\sfh^{(1)}" left] \big(\frL^{(1)},\mu_1^{(1)}\big)\arrow[r,shift left]{}{\sfp^{(21)}} & \big(\frL^{(2)},\mu^{(2)}_1\big)\arrow[l,shift left]{}{\sfe^{(12)}}
            \end{tikzcd}
        \end{equation}
        Such a quasi-isomorphism trivially lifts to the span
        \begin{equation}
            \begin{tikzcd}
                & \frL^{(1)}\arrow[dl,"\sfid",swap,shift right] \arrow[dr,"\sfp^{(21)}",shift left]
                \\
                \frL^{(1)} \arrow[ur,"\sfid",swap,shift right] & & \frL^{(2)} \arrow[ul,"\sfe^{(12)}",shift left]
            \end{tikzcd}
        \end{equation}
        More interesting examples, in particular with regards to applications in quantum field theory, are presented in \cref{sec:toy_model,sec:PCM,sec:twistors}.
        
        \paragraph{Composition of spans of $L_\infty$-algebras.} Given a pair of quasi-isomorphic $L_\infty$-algebras, it is clear that, generically, there are several spans of $L_\infty$-algebras between them. Their correspondence $L_\infty$-algebras are related by a quasi-isomorphisms, which then also relate the various projection and embedding maps in an evident manner. 
        
        This ambiguity also induces an ambiguity in the composition of spans, which can simply be defined as spans between the correspondence $L_\infty$-algebras. Therefore, $L_\infty$-algebras as objects with spans of $L_\infty$-algebras as morphisms do not form a category or groupoid (because the morphisms are evidently invertible) but can be regarded as a quasi-groupoid. Alternatively, we can simply quotient by this ambiguity and regard different spans between the same pair of $L_\infty$-algebra as equivalent, rendering composition again unique and associative. Since our interest in spans is mostly due to computational advantages, this distinction is largely irrelevant in the following.
        
        \paragraph{Quasi-isomorphisms of cyclic $L_\infty$-algebras.} Before coming to the physical applications of spans of $L_\infty$-algebras within field theories, let us briefly explore the mathematical uses. Recall that the definition of morphisms between cyclic $L_\infty$-algebras is somewhat problematic. Essentially, the reason for the encountered problems is the fact that a cyclic structure corresponds to a symplectic structure on the underlying graded vector space, and hence, one is looking for a reasonable category of symplectic manifolds. In particular, we have seen in that the condition~\eqref{eq:compatibility_metric_structure} only works for morphisms $\phi$ for which the cochain map $\phi_1$ an injection. Our perspective solves this problem at least for quasi-isomorphisms, and we can make the following definition.
        \begin{definition}
            Given two cyclic $L_\infty$-algebras $(\frL^{(1,2)},\langle-,-\rangle^{(1,2)})$, a \uline{metric} or \uline{cyclic quasi-isomorphism} is a third cyclic $L_\infty$-algebra $(\frL^{(c)},\langle-,-\rangle^{(c)})$ such that we have a span of $L_\infty$-algebras
            \begin{subequations}
                \begin{equation}
                    \begin{tikzcd}
                        & \frL^{(c)}\arrow[dl,"\sfp^{(1)}",swap,shift right] \arrow[dr,"\sfp^{(2)}",shift left]
                        \\
                        \frL^{(1)} \arrow[ur,"\sfe^{(1)}",swap,shift right] & & \frL^{(2)}\arrow[ul,"\sfe^{(2)}",shift left]
                    \end{tikzcd}
                \end{equation}
                and
                \begin{equation}\label{eq:condition_cyclic_span}
                    \langle a_1,a_2\rangle^{(1)}\ =\ \langle \sfe^{(1)}_1(a_1),\sfe^{(1)}_1(a_2)\rangle^{(c)}
                    \eand 
                    \langle b_1,b_2\rangle^{(2)}\ =\ \langle \sfe^{(2)}_1(b_1),\sfe^{(2)}_1(b_2)\rangle^{(c)}
                \end{equation}
                for all $a_{1,2}\in \frL^{(1)}$ and $b_{1,2}\in \frL^{(2)}$.
            \end{subequations}        
        \end{definition}        
        \begin{remark}
            We note that the full condition~\eqref{eq:compatibility_metric_structure} for the injective quasi-isomorphisms $\sfe^{(1,2)}$ is automatically satisfied if~\eqref{eq:condition_cyclic_span} holds. Consider e.g.~the Hodge-type decomposition
            \begin{equation}
                \frL^{(c)}\ =\ H\oplus B \oplus C
            \end{equation}
            with $H=\ker(\mu_1^{(1)}\sfh^{(1)}_1+\sfh^{(1)}_1\mu_1^{(1)})$, $B=\im(\mu_1^{(1)})$, and $C=\im(\sfh^{(1)})$ of the initial special deformation retract between $\frL^{(c)}$ and $\frL^{(1)}$ before deformation. The metric structure on $\frL^{(c)}$ then is necessarily of the block matrix form
            \begin{equation}
                \langle c_1,c_2\rangle^{(c)}\ =\ c_1^T\begin{pmatrix} \omega_{HH} & 0 & 0 \\ 0 & 0 & \omega_{BC} \\ 0 & \omega_{CB} & 0 \end{pmatrix} c_2
            \end{equation}
            for some block matrices $\omega_{HH}$, $\omega_{BC}$, and $\omega_{CB}$. The formulas for the maps $\sfe^{(1)}_k$ given in~\eqref{eq:QuasiIsomorphismInTransfer} then show that we automatically have 
            \begin{equation}
                \sum_{\substack{j+k=i\\j,k\geq 1}}\inner{\sfe^{(1)}_j(a_1,\ldots,a_j)}{\sfe^{(1)}_k(a_{j+1},\ldots,a_{j+k}))}^{(2)}\ =\ 0~,
            \end{equation}
            the additional condition in~\eqref{eq:compatibility_metric_structure}.
        \end{remark}
        
        \subsection{Application to perturbative quantum field theory}
        
        In this section, we provide a very concise review of the dictionary that translates between perturbative field theories and $L_\infty$-algebras, as well as the implications for our above results. For further details, see e.g.~\cite{Kajiura:2003ax,Doubek:2017naz,Hohm:2017pnh,Jurco:2018sby,Jurco:2020yyu,Borsten:2021hua}. 
        
        \paragraph{Observables in a classical field theory.}
        The kinematical data of a perturbative classical field theory is the field space $\frF$, a vector space or module usually consisting of the sections of some vector bundle. The dynamics of the theory are governed by the equations of motion, which are the stationary points of an action functional $S$ on the field space. Note that the equations of motion generate the ideal $\caI$ in the ring of functions on field space which is generated by functions on $\frF$ vanishing for classical solutions.
        
        This description may contain redundancies known as gauge symmetries, i.e.~an action of a group (of gauge transformations\footnote{not to be confused with the gauge group}) $\caG\curvearrowright\frF$ such that the true kinematical data is given by the orbit space $\frF/\caG$. This evidently requires the action functional and the equations of motion to be invariant and covariant, respectively, under the group action. 
        
        The classical observables of a field theory are then identified with the quotient ring $\caR/\caI$, where $\caR$ is the ring of functions on the orbit space $\frF/\caG$.
        
        \paragraph{Batalin--Vilkovisky formalism.}
        Quotient spaces are notoriously inconvenient to work with, and a useful alternative is provided by the Batalin--Vilkovisky (BV) complex~\cite{Batalin:1977pb,Batalin:1981jr,Batalin:1984jr,Batalin:1984ss,Batalin:1985qj,Schwarz:1992nx}. In this description, both the quotients by the group of gauge transformations $\caG$ and the ideal $\caI$ are replaced by a cochain complex whose cohomology is the actual quotient. Gauge transformations are dealt with the Chevalley--Eilenberg algebra of the corresponding gauge algebroid, introducing ghost fields. The equations of motion are divided out by introducing additional anti-fields, leading to a Koszul--Tate complex.\footnote{We note that the original motivation for implementing the BV formalism stemmed from the perturbative description of quantum gauge theories, but the Koszul--Tate part exists also for theories without gauge symmetry.} The resulting BV complex is a cochain complex with the differential encoding the equations of motion and the gauge transformations. The cochains form, in fact, a differential graded commutative algebra, which, as we mentioned above, is the dual of an $L_\infty$-algebra.
        
        \paragraph{Direct correspondence: Maurer--Cartan action.}
        We can also establish the link between a field theory and an $L_\infty$-algebra more directly. The data of any perturbative field theory\footnote{read: with a field space given by sections of a vector bundle} can be cast in the form of a cyclic $L_\infty$-algebra $\frL$. Here, $\frL_1$ is the vector space or module of fields, while $\frL_2$ is the space of anti-fields. There is a metric structure of degree~$-3$, providing a non-degenerate pairing between $\frL_2$ and $\frL_1^*$. If gauge symmetries (respectively, higher gauge symmetries) are present, we also have a non-trivial subspace $\frL_0$ (respectively, $\frL_0$, $\frL_{-1}$, etc.), the space of (infinitesimal) gauge parameters or ghosts (respectively, ghosts, ghost-of-ghosts, etc.), as well as $\frL_3$ (respectively, $\frL_3$, $\frL_4$, etc.), the space of anti-ghosts (respectively, anti-ghosts, anti-ghosts-of-ghosts, etc.).
        
        The cyclic higher products on $\frL_1$ are then uniquely defined by identifying the field theory's classical action with the homotopy Maurer--Cartan action of $\frL$,
        \begin{equation}
            S\ \stackrel{!}{=}\ \sum_{i\geq 0}\frac{1}{(i+1)!}\inner{a}{\mu_i(a,\ldots,a)}~,
        \end{equation}
        where $a\in\frL_1$. The remaining higher products for all elements in $\frL$ are defined either via the gauge transformations of the fields or by writing the BV action in a particular way, cf.~\cite{Jurco:2018sby} (see also~\cite{Cattaneo:0010172}).
        
        Altogether, perturbative field theories with action principles are in one-to-one correspondence with $L_\infty$-algebras endowed with a metric structure of degree~$-3$.
        
        \paragraph{Perturbation theory.}
        Interestingly, the way that tree-level perturbation theory is usually described within quantum field theory directly translates to the homological perturbation lemma. In order to compute a tree-level amplitude, we draw all relevant Feynman diagrams, amputate the external legs by the Lehmann--Symanzik--Zimmermann (LSZ) reduction formula, replacing them with labels of gauge-fixed free fields. The construction of the minimal model via the special deformation retract~\eqref{eq:MM_deformation_retract} and the formulas~\eqref{eq:transfer_formulas} proceeds exactly in the same manner. The cohomology $H^\bullet_{\mu_1}(\frL)$ describes the free fields up to gauge transformations, and the recursion relation for the $\sfE_i$ produces the tree-level Feynman diagram expansion with the $n$-point tree-level scattering amplitudes themselves are identified with the expressions
        \begin{equation}\label{eq:tree_scattering_amplitude}
            \caA(\phi_1,\ldots,\phi_n)\ =\ \tfrac{1}{n!}\inner{\phi_1}{\mu_{n-1}^\circ(\phi_2,\ldots,\phi_n)}^\circ
        \end{equation}
        for $\phi_1,\ldots,\phi_n$ elements of the minimal model $\frL^\circ$. This observation has been made many times, see e.g.~\cite{Kajiura:2001ng,Kajiura:2003ax} in the context of string field theory and~\cite{Nutzi:2018vkl,Macrelli:2019afx,Arvanitakis:2019ald} in the context of field theories.
        
        \paragraph{Semi-classical equivalence.}
        A suitable definition of equivalence between two classical field theories must start from the question of which properties equivalent quantum field theories have to share. An isomorphic solution space is clearly not enough, as e.g.~theories of a single massless scalar field on Minkowski space $\IR^{1,n}$ with canonical kinematic term and arbitrary polynomial potentials all have isomorphic solution spaces, parametrised e.g.~by boundary data on some Cauchy surface. A good quantity to preserve is certainly the tree-level S-matrix containing the scattering amplitudes~\eqref{eq:tree_scattering_amplitude}, as we think of these as determining all measurable quantities. We note that this notion of equivalence called \uline{semi-classical equivalence} covers the standard operations of integrating fields in and out.
        
        This form of equivalence is also mathematically preferable, as two field theories have S-matrices related by a similarity transformation if and only if their corresponding $L_\infty$-algebras have isomorphic minimal models and are thus quasi-isomorphic.
        
        \paragraph{Correspondences of $L_\infty$-algebras.}
        As mentioned above, semi-classically equivalent field theories have quasi-isomorphic $L_\infty$-algebras, and as a consequence, isomorphic minimal models, which can be computed via homotopy transfer. In many situations, the two $L_\infty$-algebras are directly linked by a homotopy transfer, e.g.~when they are related by integrating out fields. In some situations, however, this is not the case, and physicists are already implicitly working with spans to describe these. In the following sections, we shall discuss three examples in detail: a simple example based on different blow ups of scalar field theory, (non-Abelian) T-duality in the case of the principal chiral model, and the Penrose--Ward transform.
        
        An important remark regarding the application to field theory is the following. The correspondence $L_\infty$-algebra we constructed in~\cref{thm:1} using the minimal model is usually inconvenient from the field theoretic perspective, as it splits a field into its on-shell and off-shell components. Most of the time, we are interested in a correspondence $L_\infty$-algebra with all elements true, unrestricted fields on a given space-time. This makes constructing a `good' correspondence $L_\infty$-algebra a bit more non-trivial. However, completing on-shell fields in a minimal model by trivial pairs to true, unrestricted fields on a given space-time is mostly straightforward\footnote{up to analytical difficulties, but these can be removed by restricting to functions bounded at infinity}. In concrete computations with field theories, one therefore replaces the summand $\frL^\circ[-1]$ in $\frL^{(c)}$ with an enlarged copy, $\hat \frL^\circ[-1]$, which contains these additional trivial pairs. Again, there are no fundamental obstructions to the existence of physically `good' correspondence $L_\infty$-algebras.
        
        \paragraph{Quantum level.}
        In this paper, we shall exclusively work at the tree level. Still, let us briefly comment on the extension to the quantum picture. In the BV formalism, the differential in the BV complex is deformed by a second order differential operator, and this deformation produces a differential graded algebra that is dual not to an $L_\infty$-algebra, but to a loop $L_\infty$-algebra, as defined in~\cite{Zwiebach:1992ie,Markl:1997bj}. Loop $L_\infty$-algebras, however, are still homotopy algebras, and come with a homotopy transfer of their structure to a minimal model, cf.~\cite{Doubek:2017naz,Jurco:2019yfd}, which now governs the quantum scattering amplitudes. Two field theories are then quantum equivalent, if their loop $L_\infty$-algebras have the same minimal model. Note that all issues regarding regularisation and renormalisation have been ignored in this discussion. For a rigorous treatment of these, see~\cite{Costello:2011aa,Costello:2016vjw,Costello:2021jvx}. 
        
        At the quantum level, an additional benefit of lifting quasi-isomorphisms to spans of $L_\infty$-algebras is that they make equivalences more evident. As a homotopy transfer essentially amounts to integrating in/out some fields, one can identify those integrations that are compatible with both homotopy algebra and loop homotopy algebra structures. That is, one can check whether the field redefinitions included in the homotopy transfer induce Jacobians when applied to the path integral measure. For more details on this as wells as the notion of equivalence of perturbative quantum field theories at quantum level, see also the discussion in~\cite{Borsten:2021gyl}.
        
        \section{Blowing up vertices in scalar field theory}\label{sec:toy_model}
        
        Let us start with a simple example and consider scalar field theory with a sextic potential, i.e.~the action
        \begin{equation}\label{eq:phi6}
            S^{(b)}\ \coloneqq\ \int\rmd^dx\,\big\{\tfrac12\phi\wave\phi-\tfrac{\lambda^2}{6!}\phi^6\big\}
        \end{equation}
        for a single scalar field theory $\phi\in\Omega^0(\IM^d)$ on $d$-dimensional Minkowski space $\IM^d$. Here, $\wave$ is the d'Alembertian and $\lambda\in\IR$. In the following, we shall blow up the interaction vertex by introducing auxiliary fields in two different ways: in one theory, as two quartic vertices, and in another as a cubic and a quintic vertex. In order to go from one theory to the other, one has to both integrate in and out auxiliary fields, leading naturally to a span of $L_\infty$-algebras.
        
        \subsection{Homotopy algebraic formulation of the involved theories}
        
        \paragraph{Blow ups of interaction vertices.} 
        By introducing auxiliary fields $\chi_1,\chi_2\in\Omega^0(\IM^d)$, we can blow up the interaction vertex in~\eqref{eq:phi6} into two quartic vertices, 
        \begin{equation}
            S^{(1)}\ \coloneqq\ \int\rmd^dx\,\big\{\tfrac12\phi\wave\phi+\chi_1\chi_2-\tfrac{\lambda}{\sqrt{6!}}\phi^3\chi_1-\tfrac{\lambda}{\sqrt{6!}}\phi^3\chi_2\big\}\,.
        \end{equation}
        Alternatively, we can introduce two auxiliary fields $\psi_1,\psi_2\in\Omega^0(\IM^d)$ and blow up the interaction vertex in~\eqref{eq:phi6} into cubic and quintic vertices,
        \begin{equation}
            S^{(2)}\ \coloneqq\ \int\rmd^dx\,\big\{\tfrac12\phi\wave\phi+\psi_1\psi_2-\tfrac{\lambda}{\sqrt{6!}}\phi^2\psi_1-\tfrac{\lambda}{\sqrt{6!}}\phi^4\psi_2\big\}\,.
        \end{equation}
        Finally, there is the following blow-up using all the above auxiliary fields as well as $\xi_1,\xi_2\in\Omega^0(\IM^d)$,
        \begin{equation}
            S^{(c)}\ \coloneqq\ \int\rmd^dx\big\{\tfrac12\phi\wave\phi+\chi_1\chi_2+\psi_1\psi_2+\xi_1\xi_2-\tfrac{\lambda}{\sqrt{6!}}\phi^2\psi_1-\psi_2\phi\chi_1-\chi_2\phi\xi_1-\tfrac{\lambda}{\sqrt{6!}}\xi_2\phi^2\big\}\,.
        \end{equation}
        Note that this blow up is a `common refinement' of the previous two in the following sense: if we integrate out the fields $\psi_1,\psi_2$ and $\chi_1,\chi_2$ in $S^{(c)}$, we recover $S^{(1)}$; if we integrate out $\xi_1,\xi_2$ and $\psi_1,\psi_2$, on the other hand, we recover $S^{(2)}$.
        
        \paragraph{Homotopy algebra for $S^{(b)}$.} 
        The homotopy algebra $\frL^{(b)}$ corresponding to the action $S^{(b)}$ has the underlying cochain complex
        \begin{subequations}
            \begin{equation}
                \sfCh(\frL^{(b)})\ \coloneqq\ \left( 
                \begin{tikzcd}
                    \underbrace{\spacer{2ex}\stackrel{\phi}{\Omega^0(\IM^d)}}_{\eqqcolon\,\frL^{(b)}_1} \arrow[r,"\wave"] 
                    &
                    \underbrace{\spacer{2ex}\stackrel{\phi^+}{\Omega^0(\IM^d)}}_{\eqqcolon\,\frL^{(b)}_2}
                \end{tikzcd}\right)\,,
            \end{equation}
            together with the only non-trivial higher product
            \begin{equation}
                \mu^{(b)}_5(\phi,\ldots,\phi)\ \coloneqq\ -\lambda^2\phi^6
            \end{equation}
            for all $\phi\in\frL^{(b)}_1$; the general expression follows from polarisation. The metric structure is the usual integral
            \begin{equation}
                \inner{\phi}{\phi^+}_{\frL^{(b)}}\ \coloneqq\ \int\rmd^dx\,\phi\phi^+
            \end{equation}
            for all $\phi\in\frL^{(b)}_1$ and $\phi^+\in\frL^{(b)}_2$. Note that we regard the integral as formal expressions; to be precise, we would have to restrict our field space to $L^2$-functions, cf.~e.g.~\cite{Macrelli:2019afx} for a detailed discussion.
        \end{subequations}     
        
        \paragraph{Homotopy algebra for $S^{(1)}$.}
        The homotopy algebra $\frL^{(1)}$ corresponding to the action $S^{(1)}$ has underlying cochain complex
        \begin{subequations}\label{eq:L_infty_scal_1}
            \begin{equation}\label{eq:L_infty_scal_1a}
                \sfCh(\frL^{(1)})\ \coloneqq\ \left( 
                \begin{tikzcd}
                    \stackrel{\phi}{\Omega^0(\IM^d)}\arrow[r,"\wave"]  & \stackrel{\phi^+}{\Omega^0(\IM^d)}
                    \\[-0.5cm]
                    \stackrel{\chi_1}{\Omega^0(\IM^d)}\arrow[start anchor=south east, end anchor={150},dr,"\sfid",pos=0.95,swap] & 
                    \stackrel{\chi_1^+}{\Omega^0(\IM^d)}
                    \\[-0.5cm]
                    \underbrace{\spacer{2ex}\stackrel{\chi_2}{\Omega^0(\IM^d)}}_{\eqqcolon\,\frL^{(1)}_1}\arrow[start anchor={30}, end anchor= south west,crossing over,ur,"\sfid",pos=0.95]
                    &
                    \underbrace{\spacer{2ex}\stackrel{\chi^+_2}{\Omega^0(\IM^d)}}_{\eqqcolon\,\frL^{(1)}_2}
                \end{tikzcd}\right)\,,
            \end{equation}
            and the only non-trivial higher products are obtained by polarisation of 
            \begin{equation}\label{eq:mu_3_scal_1}
                \mu_3^{(1)}\left(
                \begin{pmatrix}
                    \phi & \phi^+
                    \\
                    \chi_1 & \chi_1^+
                    \\
                    \chi_2 & \chi_2^+
                \end{pmatrix},
                \begin{pmatrix}
                    \phi & \phi^+
                    \\
                    \chi_1 & \chi_1^+
                    \\
                    \chi_2 & \chi_2^+
                \end{pmatrix},
                \begin{pmatrix}
                    \phi & \phi^+
                    \\
                    \chi_1 & \chi_1^+
                    \\
                    \chi_2 & \chi_2^+
                \end{pmatrix}
                \right)\ =\ -\frac{3!\lambda}{\sqrt{6!}}
                \begin{pmatrix}
                    0 & 3\phi^2\chi_1+3\phi^2 \chi_2
                    \\
                    0 & \phi^3
                    \\
                    0 & \phi^3
                \end{pmatrix}
            \end{equation}
            for all $(\phi,\chi_1,\chi_2)\in\frL^{(2)}_1$ and $(\phi^+,\chi^+_1,\chi^+_2)\in\frL^{(2)}_2$, and where here and below the positions of the field components correspond to those in~\eqref{eq:L_infty_scal_1a}. The metric structure is the evident one, 
            \begin{equation}
                \stretchleftright[1000]{<}{
                    \begin{pmatrix}
                        \phi & \phi^+
                        \\
                        \chi_1 & \chi_1^+
                        \\
                        \chi_2 & \chi_2^+
                    \end{pmatrix},
                    \begin{pmatrix}
                        \tilde\phi & \tilde\phi^+
                        \\
                        \tilde\chi & \tilde\chi_1^+
                        \\
                        \tilde\chi & \tilde\chi_2^+ 
                    \end{pmatrix}
                }{>}\raisebox{-20pt}{${}_{\frL^{(1)}}$}\ \coloneqq\ \int\rmd^dx\,\big\{\phi\tilde\phi^++\chi_1\tilde\chi_1^++\chi_2\tilde\chi_2^++\tilde\phi\phi^++\tilde\chi_1\chi_1^++\tilde\chi_2\chi_2^+\big\}
            \end{equation}
            for all $(\phi,\chi_1,\chi_2)\in\frL^{(2)}_1$ and $(\phi^+,\chi^+_1,\chi^+_2)\in\frL^{(2)}_2$. 
        \end{subequations}     
        
        \paragraph{Homotopy algebra for $S^{(2)}$.}
        The homotopy algebra $\frL^{(2)}$ corresponding to the action $S^{(2)}$ has underlying cochain complex
        \begin{subequations}\label{eq:L_infty_scal_2}
            \begin{equation}\label{eq:L_infty_scal_2a}
                \sfCh(\frL^{(2)})\ \coloneqq\ \left( 
                \begin{tikzcd}
                    \stackrel{\phi}{\Omega^0(\IM^d)}\arrow[r,"\wave"]  & \stackrel{\phi^+}{\Omega^0(\IM^d)}
                    \\[-0.5cm]
                    \stackrel{\psi_1}{\Omega^0(\IM^d)}\arrow[start anchor=south east, end anchor={150},dr,"\sfid",pos=0.95,swap] & 
                    \stackrel{\psi_1^+}{\Omega^0(\IM^d)}
                    \\[-0.5cm]
                    \underbrace{\spacer{2ex}\stackrel{\psi_2}{\Omega^0(\IM^d)}}_{\eqqcolon\,\frL^{(2)}_1}\arrow[start anchor={30}, end anchor= south west,crossing over,ur,"\sfid",pos=0.95]
                    &
                    \underbrace{\spacer{2ex}\stackrel{\psi^+_2}{\Omega^0(\IM^d)}}_{\eqqcolon\,\frL^{(2)}_2}
                \end{tikzcd}\right)\,,
            \end{equation}
            and the only non-trivial higher products are obtained by polarisation of 
            \begin{equation}\label{eq:L_infty_scal_2_products}
                \begin{aligned}
                    \mu^{(2)}_2\left(
                    \begin{pmatrix} 
                        \phi & \phi^+
                        \\
                        \psi_1 & \psi_1^+
                        \\
                        \psi_2 & \psi_2^+
                    \end{pmatrix},
                    \begin{pmatrix} 
                        \phi & \phi^+
                        \\
                        \psi_1 & \psi_1^+
                        \\
                        \psi_2 & \psi_2^+
                    \end{pmatrix}
                    \right)\ &\coloneqq -\frac{2\lambda}{\sqrt{6!}}
                    \begin{pmatrix}
                        0 & 2\psi_1\phi
                        \\
                        0 & \phi^2
                        \\
                        0 & 0
                    \end{pmatrix},
                    \\
                    \mu^{(2)}_4\left(
                    \begin{pmatrix} 
                        \phi & \phi^+
                        \\
                        \psi_1 & \psi_1^+
                        \\
                        \psi_2 & \psi_2^+
                    \end{pmatrix},
                    \begin{pmatrix} 
                        \phi & \phi^+
                        \\
                        \psi_1 & \psi_1^+
                        \\
                        \psi_2 & \psi_2^+
                    \end{pmatrix},
                    \begin{pmatrix} 
                        \phi & \phi^+
                        \\
                        \psi_1 & \psi_1^+
                        \\
                        \psi_2 & \psi_2^+
                    \end{pmatrix},
                    \begin{pmatrix} 
                        \phi & \phi^+
                        \\
                        \psi_1 & \psi_1^+
                        \\
                        \psi_2 & \psi_2^+
                    \end{pmatrix}    
                    \right)\ &\coloneqq\ -\frac{4!\lambda}{\sqrt{6!}}
                    \begin{pmatrix}
                        0 & 4\phi^3 \psi_2
                        \\
                        0 & 0
                        \\
                        0 & \phi^4
                    \end{pmatrix}
                \end{aligned}
            \end{equation}
            for all $(\phi,\psi_1,\psi_2)\in\frL^{(2)}_1$ and $(\phi^+,\psi^+_1,\psi^+_2)\in\frL^{(2)}_2$ and where the positions of the component fields refer to diagram~\eqref{eq:L_infty_scal_2a}. The metric structure is again evident, 
            \begin{equation}
                \stretchleftright[1000]{<}{
                    \begin{pmatrix}
                        \phi & \phi^+
                        \\
                        \psi_1 & \psi_1^+
                        \\
                        \psi_2 & \psi_2^+
                    \end{pmatrix},
                    \begin{pmatrix}
                        \tilde\phi & \tilde\phi^+
                        \\
                        \tilde\psi_1 & \tilde\psi_1^+
                        \\
                        \tilde\psi_2 & \tilde\psi_2^+
                    \end{pmatrix}
                }{>}\raisebox{-20pt}{${}_{\frL^{(2)}}$}\ \coloneqq\ \int\rmd^dx\,\big\{\phi\tilde\phi^++\psi_1\tilde\psi_1^++\psi_2\tilde\psi_2^++\tilde\phi\phi^++\tilde\psi_1\psi_1^++\tilde\psi_2\psi_2^+\big\}
            \end{equation}
            for all $(\phi,\psi_1,\psi_2)\in\frL^{(2)}_1$ and $(\phi^+,\psi^+_1,\psi^+_2)\in\frL^{(2)}_2$. 
        \end{subequations}     
        
        \paragraph{Homotopy algebra for $S^{(c)}$.}
        The homotopy algebra $\frL^{(c)}$ corresponding to the action $S^{(c)}$ has underlying cochain complex
        \begin{subequations}\label{eq:L_infty_scal_hat}
            \begin{equation}
                \sfCh(\frL^{(c)})\ \coloneqq\ \left( 
                \begin{tikzcd}
                    \stackrel{\phi}{\Omega^0(\IM^d)}\arrow[r,"\wave"]  & \stackrel{\phi^+}{\Omega^0(\IM^d)}
                    \\[-0.5cm]
                    \stackrel{(\chi_1,\psi_1,\xi_1)}{\IR^3\otimes\Omega^0(\IM^d)}\arrow[start anchor=south east, end anchor={160},dr,"\sfid",pos=0.95,swap] & 
                    \stackrel{(\chi^+_1,\psi^+_1,\xi^+_1)}{\IR^3\otimes\Omega^0(\IM^d)}
                    \\[-0.5cm]
                    \underbrace{\spacer{2ex}\stackrel{(\chi_2,\psi_2,\xi_2)}{\IR^3\otimes\Omega^0(\IM^d)}}_{\eqqcolon\,\frL^{(c)}_1}\arrow[start anchor={20}, end anchor= south west,crossing over,ur,"\sfid",pos=0.95]
                    &
                    \underbrace{\spacer{2ex}\stackrel{(\chi^+_2,\psi^+_2,\xi^+_2)}{\IR^3\otimes\Omega^0(\IM^d)}}_{\eqqcolon\,\frL^{(c)}_2}
                \end{tikzcd}\right)\,,
            \end{equation}
            and the non-trivial higher products are the polarisations of 
            \begin{equation}
                \mu^{(c)}_2\left(
                \begin{pmatrix}
                    \phi & \phi^+ 
                    \\
                    \chi_1 & \chi_1^+
                    \\
                    \psi_1 & \psi_1^+
                    \\ 
                    \xi_1 & \xi_1^+
                    \\
                    \chi_2 & \chi_2^+
                    \\
                    \psi_2 & \psi_2^+
                    \\ 
                    \xi_2 & \xi_2^+
                \end{pmatrix},
                \begin{pmatrix}
                    \phi & \phi^+ 
                    \\
                    \chi_1 & \chi_1^+
                    \\
                    \psi_1 & \psi_1^+
                    \\ 
                    \xi_1 & \xi_1^+
                    \\
                    \chi_2 & \chi_2^+
                    \\
                    \psi_2 & \psi_2^+
                    \\ 
                    \xi_2 & \xi_2^+
                \end{pmatrix}\right)
                \ \coloneqq\ -\frac{2\lambda}{\sqrt{6!}}
                \begin{pmatrix}
                    0 & 2\phi\psi_1+2\phi \xi_2
                    \\
                    0 & 0
                    \\
                    0 & \phi^2
                    \\
                    0 & 0
                    \\
                    0 & 0
                    \\
                    0 & 0
                    \\
                    0 & \phi^2
                \end{pmatrix}
                -2
                \begin{pmatrix}
                    0 & \psi_2\chi_1+\chi_2\xi_1
                    \\
                    0 & \psi_2\phi
                    \\
                    0 & 0
                    \\ 
                    0 & \chi_2\phi
                    \\
                    0 & \phi\xi_1
                    \\
                    0 & \phi\chi_1
                    \\
                    0 & 0
                \end{pmatrix}
            \end{equation}
            for all $(\phi,\chi_1,\psi_1,\xi_1,\chi_2,\psi_2,\xi_2)\in\frL_1^{(c)}$ and $(\phi^+,\chi^+_1,\psi^+_1,\xi^+_1,\chi^+_2,\psi^+_2,\xi^+_2)\in\frL_2^{(c)}$. The metric structure is, once more, the evident pairing of fields and anti-fields.
        \end{subequations}     
        
        \subsection{Span of \texorpdfstring{$L_\infty$}{Linfty}-algebras}
        
        The four actions $S^{(b)}$, $S^{(1)}$, $S^{(2)}$, and $S^{(c)}$ correspond to the four homotopy algebras $\frL^{(b)}$, $\frL^{(1)}$, $\frL^{(2)}$, and $\frL^{(c)}$ which fit into 
        \begin{equation}\label{eq:diamondPhi6}
            \begin{tikzcd}
                & \frL^{(c)} \arrow[<->,dl] \arrow[<->,dr] &
                \\
                \frL^{(1)} \arrow[<->,dr] & & \frL^{(2)} \arrow[<->,dl] 
                \\
                & \frL^{(b)}
            \end{tikzcd}
        \end{equation}
        where the arrows indicate quasi-isomorphisms. Moreover any downwards arrow can be formulated as a homotopy transfer, as we shall show in the following. In conclusion, the upper half of the diamond~\eqref{eq:diamondPhi6} forms a span of $L_\infty$-algebras.
        
        \paragraph{Homotopy transfer $\frL^{(c)}\rightarrow\frL^{(1)}$.}
        This homotopy transfer starts from the special deformation retract 
        \begin{subequations}
            \begin{equation}
                \begin{tikzcd}
                    \ar[loop,out=160,in=200,distance=20,"\sfh" left] \big(\frL^{(c)},\mu^{(c)}_1\big)\arrow[r,shift left]{}{\sfp} & \big(\frL^{(1)},\mu^{(1)}_1\big) \arrow[l,hookrightarrow,shift left]{}{\sfe}
                \end{tikzcd}
            \end{equation}
            with the following embedding and projection maps and contracting homotopy:
            \begin{equation}
                \begin{gathered}
                    \sfe
                    \begin{pmatrix}
                        \phi & \phi^+
                        \\
                        \chi_1 & \chi_1^+
                        \\
                        \chi_2 & \chi_2^+
                    \end{pmatrix}
                    \coloneqq\
                    \begin{pmatrix}
                        \phi & \phi^+ 
                        \\
                        \chi_1 & \chi_1^+
                        \\
                        0 & 0
                        \\ 
                        0 & 0
                        \\
                        \chi_2 & \chi_2^+
                        \\
                        0 & 0
                        \\ 
                        0 & 0
                    \end{pmatrix}\,,~~~
                    \sfp
                    \begin{pmatrix}
                        \phi & \phi^+ 
                        \\
                        \chi_1 & \chi_1^+
                        \\
                        \psi_1 & \psi_1^+
                        \\ 
                        \xi_1 & \xi_1^+
                        \\
                        \chi_2 & \chi_2^+
                        \\
                        \psi_2 & \psi_2^+
                        \\ 
                        \xi_2 & \xi_2^+
                    \end{pmatrix}
                    \ \coloneqq\
                    \begin{pmatrix}
                        \phi & \phi^+
                        \\
                        \chi_1 & \chi_1^+
                        \\
                        \chi_2 & \chi_2^+
                    \end{pmatrix}\,,
                    \\
                    \sfh
                    \begin{pmatrix}
                        \phi & \phi^+ 
                        \\
                        \chi_1 & \chi_1^+
                        \\
                        \psi_1 & \psi_1^+
                        \\ 
                        \xi_1 & \xi_1^+
                        \\
                        \chi_2 & \chi_2^+
                        \\
                        \psi_2 & \psi_2^+
                        \\ 
                        \xi_2 & \xi_2^+
                    \end{pmatrix}
                    \ \coloneqq\ 
                    \begin{pmatrix}
                        0 & 0
                        \\
                        0 & 0
                        \\
                        \psi^+_2 & 0
                        \\ 
                        \xi^+_2 & 0
                        \\
                        0 & 0
                        \\
                        \psi_1^+ & 0
                        \\ 
                        \xi_1^+ & 0
                    \end{pmatrix}\,,
                \end{gathered}
            \end{equation}
        \end{subequations}
        so that
        \begin{equation}
            \sfid_{\frL^{(c)}}-\sfe\circ\sfp\ =\ \sfh\circ\mu^{(c)}_1+\mu^{(c)}_1\circ\sfh~,
        \end{equation}
        and the side conditions~\eqref{eq:HT_side_conditions} hold.
        
        The homotopy transfer, by formulas~\eqref{eq:transfer_formulas}, yields the new embedding
        \begin{subequations}
            \begin{equation}
                \sfE\,:\,\frL^{(1)}\ \rightarrow\ \frL^{(c)}
            \end{equation}
            with $\sfE_1\coloneqq\sfe$ and
            \begin{equation}
                \sfE_2\left(
                \begin{pmatrix}
                    \phi & \phi^+
                    \\
                    \chi_1 & \chi_1^+
                    \\
                    \chi_2 & \chi_2^+
                \end{pmatrix},
                \begin{pmatrix}
                    \phi & \phi^+
                    \\
                    \chi_1 & \chi_1^+
                    \\
                    \chi_2 & \chi_2^+
                \end{pmatrix}
                \right)\ \coloneqq\ 2
                \begin{pmatrix}
                    0 & 0
                    \\
                    0 & 0
                    \\
                    \phi\chi_1 & 0
                    \\ 
                    \frac{\lambda}{\sqrt{6!}}\phi^2 & 0
                    \\
                    0 & 0
                    \\
                    \frac{\lambda}{\sqrt{6!}}\phi^2 & 0
                    \\
                    \phi\chi_2 & 0
                \end{pmatrix}
            \end{equation}
        \end{subequations}
        being the only non-trivial components. As a consequence, the induced higher products~\eqref{eq:TransferredHigherProducts} are only non-trivial when $i=3$, and the single resulting higher product $\mu_3^{(1)}$ coincides with~\eqref{eq:mu_3_scal_1}. We conclude that there is a quasi-isomorphism between $\frL^{(c)}$ and $\frL^{(1)}$ that originates from a homotopy transfer.
        
        \paragraph{Homotopy transfer $\frL^{(c)}\rightarrow \frL^{(2)}$.}
        Let us now show the same for the quasi-isomorphism $\frL^{(c)}\rightarrow\frL^{(2)}$. Here, we consider a homotopy transfer starting from the special deformation retract 
        \begin{subequations}
            \begin{equation}
                \begin{tikzcd}
                    \ar[loop,out=160,in=200,distance=20,"\sfh" left] \big(\frL^{(c)},\mu^{(c)}_1\big)\arrow[r,shift left]{}{\sfp} & \big(\frL^{(2)},\mu^{(2)}_1\big)\arrow[l,hookrightarrow,shift left]{}{\sfe}
                \end{tikzcd}
            \end{equation}
            with the easily obtained maps
            \begin{equation}
                \begin{gathered}
                    \sfe
                    \begin{pmatrix}
                        \phi & \phi^+
                        \\
                        \psi_1 & \psi_1^+
                        \\
                        \psi_2 & \psi_2^+
                    \end{pmatrix}
                    \ \coloneqq\ \begin{pmatrix}
                        \phi & \phi^+ 
                        \\
                        \chi_1 & \chi_1^+
                        \\
                        \psi_1 & \psi_1^+
                        \\ 
                        \xi_1 & \xi_1^+
                        \\
                        \chi_2 & \chi_2^+
                        \\
                        \psi_2 & \psi_2^+
                        \\ 
                        \xi_2 & \xi_2^+
                    \end{pmatrix}\,,~~~
                    \sfp
                    \begin{pmatrix}
                        \phi & \phi^+ 
                        \\
                        \chi_1 & \chi_1^+
                        \\
                        \psi_1 & \psi_1^+
                        \\ 
                        \xi_1 & \xi_1^+
                        \\
                        \chi_2 & \chi_2^+
                        \\
                        \psi_2 & \psi_2^+
                        \\ 
                        \xi_2 & \xi_2^+
                    \end{pmatrix}
                    \ \coloneqq\
                    \begin{pmatrix}
                        \phi & \phi^+
                        \\
                        \psi_1 & \psi_1^+
                        \\
                        \psi_2 & \psi_2^+
                    \end{pmatrix}\,,~
                    \\
                    \sfh
                    \begin{pmatrix}
                        \phi & \phi^+ 
                        \\
                        \chi_1 & \chi_1^+
                        \\
                        \psi_1 & \psi_1^+
                        \\ 
                        \xi_1 & \xi_1^+
                        \\
                        \chi_2 & \chi_2^+
                        \\
                        \psi_2 & \psi_2^+
                        \\ 
                        \xi_2 & \xi_2^+
                    \end{pmatrix}
                    \ \coloneqq\
                    \begin{pmatrix}
                        0 & 0
                        \\
                        \chi^+_2 & 0
                        \\
                        0 & 0
                        \\ 
                        \xi^+_2 & 0
                        \\
                        \chi_1^+ & 0
                        \\
                        0 & 0
                        \\ 
                        \xi_1^+ & 0
                    \end{pmatrix}\,,
                \end{gathered}
            \end{equation}
        \end{subequations}
        and we have
        \begin{equation}
            \sfid_{\frL^{(c)}}-\sfe\circ\sfp\ =\ \sfh\circ\mu^{(c)}_1+\mu^{(c)}_1\circ\sfh~,
        \end{equation}
        together with the side conditions~\eqref{eq:HT_side_conditions}. 
        
        Here, formulas~\eqref{eq:transfer_formulas} lead to the embedding
        \begin{subequations}
            \begin{equation}
                \sfE\,:\,\frL^{(2)}\ \rightarrow\ \frL^{(c)}
            \end{equation}
            with $\sfE_1\coloneqq\sfe$ and
            \begin{equation}
                \begin{gathered}
                    \sfE_2\left(
                    \begin{pmatrix}
                        \phi & \phi^+
                        \\
                        \chi_1 & \chi_1^+
                        \\
                        \chi_2 & \chi_2^+
                    \end{pmatrix},
                    \begin{pmatrix}
                        \phi & \phi^+
                        \\
                        \chi_1 & \chi_1^+
                        \\
                        \chi_2 & \chi_2^+
                    \end{pmatrix}
                    \right)\ \coloneqq\ 
                    \begin{pmatrix}
                        0 & 0
                        \\
                        0 & 0
                        \\
                        0 & 0
                        \\
                        \frac{2\lambda}{\sqrt{6!}}\phi^2 & 0
                        \\
                        2\psi_2 \phi & 0
                        \\
                        0 & 0
                        \\
                        0 & 0
                    \end{pmatrix}\,,
                    \\
                    \sfE_3\left(
                    \begin{pmatrix}
                        \phi & \phi^+
                        \\
                        \chi_1 & \chi_1^+
                        \\
                        \chi_2 & \chi_2^+
                    \end{pmatrix},
                    \begin{pmatrix}
                        \phi & \phi^+
                        \\
                        \chi_1 & \chi_1^+
                        \\
                        \chi_2 & \chi_2^+
                    \end{pmatrix}
                    ,
                    \begin{pmatrix}
                        \phi & \phi^+
                        \\
                        \chi_1 & \chi_1^+
                        \\
                        \chi_2 & \chi_2^+
                    \end{pmatrix}
                    \right)\ \coloneqq\ 3
                    \begin{pmatrix}
                        0 & 0
                        \\
                        \frac{2\lambda}{\sqrt{6!}}\phi^3 & 0
                        \\ 
                        0 & 0
                        \\
                        0 & 0
                        \\
                        0 & 0
                        \\
                        0 & 0
                        \\
                        2\psi_2\phi^2 & 0 
                    \end{pmatrix}
                \end{gathered}
            \end{equation}
        \end{subequations}
        being the only non-trivial higher maps. The only non-trivial induced higher products~\eqref{eq:TransferredHigherProducts} resulting from~\eqref{eq:transfer_formulas} are then~\eqref{eq:L_infty_scal_2_products}. 
        
        \section{(Non-Abelian) T-duality of the principal chiral model}\label{sec:PCM}
        
        We now turn to a physically more interesting pair of quasi-isomorphic theories: the principal chiral model and its (non-Abelian) T-dual.
        
        \subsection{Homotopy algebraic formulation of the involved theories}
        
        We start by listing the theories involved in the (non-Abelian) T-duality, together with their homotopy algebraic formulation in terms of cyclic $L_\infty$-algebras.
        
        \paragraph{Principal chiral model.}
        We work on two-dimensional Minkowski space $\IM^2$ with metric $\eta=\diag(-1,1)$. We shall use the de Rham complex $(\Omega^\bullet(\IM^2),\rmd)$ together with the codifferential 
        \begin{equation}
            \rmd^\dagger\ \coloneqq\ {\star\rmd\star}~.
        \end{equation}
        We then have 
        \begin{equation}
            \rmd\rmd^\dagger+\rmd^\dagger\rmd\ =\ -\wave~,
        \end{equation}
        where $\wave$ is the d'Alembertian. We also note that
        \begin{equation}
            \star^2|_{\Omega^p(\IM^2)}\ =\ -(-1)^p\sfid_{\Omega^p(\IM^2)}~.
        \end{equation}
        
        The target space of the principal chiral model (PCM) is a Lie group $\sfG$ with Lie algebra $\frg$, and we assume that there is a non-degenerate, invariant, symmetric bilinear form $\inner{-}{-}_\frg$ on $\frg$. The kinematical data is then given by a smooth map 
        \begin{equation}
            g\,:\,\IM^2\ \rightarrow\ \sfG~,
        \end{equation}
        which we may parametrise as $g=\rme^\phi$ for $\phi\in\Omega^0(\IM^2,\frg)$. The pullback of the Maurer--Cartan form then reads as 
        \begin{equation}\label{eq:PCMcurrent}
            j\ \coloneqq\ g^{-1}\rmd g\ =\ \sum_{n=0}^\infty\frac{(-1)^n}{(n+1)!}\,\ad_\phi^n(\rmd\phi)
        \end{equation}
        with $\ad_\phi(-)\coloneqq[\phi,-]$, and the action of the PCM is given by 
        \begin{equation}\label{eq:action_PCM}
            S^{(1)}\ \coloneqq\ -\frac12\int\inner{j}{\star j}_\frg\ =\ -\sum_{n=0}^\infty\frac{1}{(2n+2)!}\int\inner{\rmd\phi}{\star\ad^{2n}_\phi(\rmd\phi)}_\frg~.
        \end{equation}
        In rewriting this action, we have noted that only even powers of $\ad_\phi$ will contribute after inserting~\eqref{eq:PCMcurrent} and then substituted the identity $\sum_{m=0}^{2n}\frac{(-1)^m}{(m+1)!(2n-m+1)!}=\frac{2}{(2n+2)!}$. 
        
        From the homotopy-algebraic perspective, the action~\eqref{eq:action_PCM} is given by a cyclic $L_\infty$-algebra $\frL^{(1)}$ with the underlying cochain complex
        \begin{subequations}\label{eq:L_infty_1}
            \begin{equation}
                \sfCh(\frL^{(1)})\ \coloneqq\ \left( 
                \begin{tikzcd}
                    \underbrace{\spacer{2ex}\stackrel{\phi}{\Omega^0(\IM^2,\frg)}}_{\eqqcolon\,\frL^{(1)}_1} \arrow[r,"\wave"] 
                    &
                    \underbrace{\spacer{2ex}\stackrel{\phi^+}{\Omega^0(\IM^2,\frg)}}_{\eqqcolon\,\frL^{(1)}_2}
                \end{tikzcd}\right)\,,
            \end{equation}
            non-trivial higher products that are obtained by polarising
            \begin{equation}\label{eq:higherProductsPCM}
                \tilde\mu^{(1)}_{2n+1}(\phi,\ldots,\phi)\ \coloneqq\ -\rmd^\dagger\ad^{2n}_\phi(\rmd\phi)
            \end{equation}
            for all $n\in\IN$, and the metric structure
            \begin{equation}\label{eq:cyclicStructurePCM}
                \inner{\phi}{\phi^+}_{\frL^{(1)}}\ \coloneqq\ \int\inner{\phi}{\star\phi^+}_\frg
            \end{equation}
            for all $\phi\in\frL^{(1)}_1$ and $\phi^+\in\frL^{(1)}_2$.
        \end{subequations}
        We stress that the simple polarisation of the expressions~\eqref{eq:higherProductsPCM} is not sufficient to render them cyclic, which has to be done separately. To this end, it is useful to note that  
        \begin{equation}
            \begin{aligned}
                &\int\inner{\rmd\phi}{\star\ad^{2n}_\phi(\rmd\phi)}_\frg
                \\
                &\kern1cm=\ \frac{1}{2n+2}\int\left\langle\phi,\star\left\{\sum_{m=0}^{2n-1}(-1)^i{\star[\ad^m_\phi(\rmd\phi),\star\ad^{2n-1-m}_\phi(\rmd\phi)]}+2\rmd^\dagger\ad^{2n}_\phi(\rmd\phi)\right\}\right\rangle_\frg
            \end{aligned}
        \end{equation}
        as a short calculation reveals. Upon combining this with~\eqref{eq:action_PCM}, we read off the equivalent higher products 
        \begin{equation}\label{eq:alternativeHigherProductsPCM}
            \tilde\mu^{(1)}_{2n+1}(\phi,\ldots,\phi)\ \coloneqq\ -\frac{1}{2n+2}\left\{\sum_{m=0}^{2n-1}(-1)^m{\star[\ad^m_\phi(\rmd\phi),\star\ad^{2n-1-m}_\phi(\rmd\phi)]}+2\rmd^\dagger\ad^{2n}_\phi(\rmd\phi)\right\}.
        \end{equation}
        The polarisation of these higher products are indeed cyclic with respect to~\eqref{eq:cyclicStructurePCM}. 
        
        \paragraph{Gauged principal chiral model.}
        Following~\cite{Rocek:1991ps,delaOssa:1992vci}, the first step to T-dualise the PCM is to gauge a normal Lie subgroup $\sfH$ of $\sfG$ that corresponds to the directions that we wish to T-dualise. Let $\frh$ be the Lie algebra of $\sfH$. The gauging is implemented by introducing an $\frh$-valued connection one-form $\omega\in\Omega^1(\IM^2,\frh)$ so that the current~\eqref{eq:PCMcurrent} generalises to
        \begin{equation}
            j_\omega\ \coloneqq\ g^{-1}\omega g+g^{-1}\rmd g~.
        \end{equation}
        Evidently, with $F_\omega\coloneqq \rmd\omega+\frac12[\omega,\omega]$ and $F_{j_\omega}\coloneqq \rmd j_\omega+\frac12[j_\omega,j_\omega]$, we have $F_{j_\omega}=g^{-1}F_\omega g$ implying equivalence of the flatness of $j_\omega$ and $\omega$. Furthermore, $j_\omega$ is invariant under the local $\sfH$-action
        \begin{subequations}\label{eq:HgaugeTransformations}
            \begin{equation}
                g\ \mapsto\ h^{-1}g
                \eand
                \omega\ \mapsto\ h^{-1}\omega h+h^{-1}\rmd h
            \end{equation}
            for any smooth map $h:\IM^2\rightarrow\sfH$. To implement the flatness $F_\omega=0$, we introduce a Lagrange multiplier $\Lambda\in\Omega^0(\IM^2,\frh)$ subject to the local $\sfH$-action
            \begin{equation}
                \Lambda\ \mapsto\ h^{-1}\Lambda h~.
            \end{equation}
        \end{subequations}
        Hence, the gauged PCM action is
        \begin{equation}\label{eq:gaugedActionPCM}
            S^{(1)}_{\rm gauged}\ \coloneqq\ -\frac12\int\inner{j_\omega}{\star j_\omega}_\frg+\int\inner{\Lambda}{F_\omega}_\frg\ =\ -\frac12\int\inner{j_\omega}{\star j_\omega}_\frg+\int\inner{\tilde\Lambda}{F_{j_\omega}}_\frg
        \end{equation}
        with $\tilde\Lambda\coloneqq g^{-1}\Lambda g$. This last form of the action makes the $\sfH$-gauge invariance manifest since both $j_\omega$ and $\tilde\Lambda$ are invariant under the transformations~\eqref{eq:HgaugeTransformations}. Upon integrating out $\Lambda$ and gauge-fixing $\omega$ to zero, we recover the action~\eqref{eq:action_PCM} of the PCM.
        
        \pagebreak
        
        Furthermore, the explicit form of the local $\sfH$-action on the field $\phi$ follows from the Baker--Campbell--Hausdorff formula. Parametrising $h=\rme^c$ for $c\in\Omega^0(\IM^2,\frh)$, we find to linear order in $c$
        \begin{equation}
            \phi\ \mapsto\ \phi-c+\frac12[\phi,c]-\frac{1}{12}[\phi,[\phi,c]]+\cdots\ =\ \phi-\sum_{n=0}^\infty\frac{B_n^-}{n!}\ad^n_\phi(c)~,
        \end{equation}
        where $B_n^-$ are the Bernoulli numbers with the convention that $B^-_1=-\tfrac12$. Altogether, the BV action for the gauged PCM then reads as
        \begin{equation}\label{eq:action_correspondence_theory}
            \begin{aligned}
                S^{(c)}\ &\coloneqq\ S^{(1)}_{\rm gauged}+\int\left\{\inner{\omega^+}{\star(\rmd c+[\omega,c])}_\frg-\sum_{n=0}^\infty\frac{B_n^-}{n!}\inner{\phi^+}{\ad^n_\phi(c)}_\frg+\frac12\inner{c^+}{[c,c]}_\frg\right\}
                \\
                &\,=\ \int\left\{-\frac12\inner{\omega}{\star\omega}_\frg-\sum_{n=0}^\infty\frac{1}{(n+1)!}\inner{\omega}{\star\ad^n_\phi(\rmd\phi)}_\frg-\sum_{n=0}^\infty\frac{1}{(2n+1)!}\inner{\rmd\phi}{\ad_\phi^{2n}(\rmd\phi)}_\frg\right.
                \\
                &\kern1.5cm+\left.\inner{\Lambda}{F_\omega}_\frg+\inner{\omega^+}{\star(\rmd c+[\omega,c])}_\frg-\sum_{n=0}^\infty\frac{B_n^-}{n!}\inner{\phi^+}{\ad^n_\phi(c)}_\frg+\frac12\inner{c^+}{[c,c]}_\frg\right\}.
            \end{aligned}
        \end{equation}
        This action corresponds to a cyclic $L_\infty$-algebra $\frL^{(c)}$ with the underlying cochain complex
        \begin{subequations}\label{eq:MotherL_inftyAlgebra}   
            \begin{equation}\label{eq:MotherDifferentialComplex}
                \sfCh(\frL^{(c)})\ \coloneqq\ \left( 
                \begin{tikzcd}
                    \stackrel{c}{\Omega^0(\IM^2,\frh)}\arrow[r,"-\id"]\arrow[rd,"\rmd"]
                    & 
                    \stackrel{\phi}{\Omega^0(\IM^2,\frg)}\arrow[r,"\wave"]\arrow[start anchor=south east, end anchor={150},dr,"-\rmd",pos=0.8,swap,outer sep=-1.5pt]
                    & 
                    \stackrel{\phi^+}{\Omega^0(\IM^2,\frg)}\arrow[r,"\id"]
                    &
                    \stackrel{c^+}{\Omega^0(\IM^2,\frh)}
                    \\
                    & 
                    \stackrel{\omega}{\Omega^1(\IM^2,\frh)}\arrow[r,"-\sfid",outer sep=-1.5pt]\arrow[start anchor={30}, end anchor={200},crossing over,ur,"-\rmd^\dagger",pos=0.8,outer sep=-2pt]\arrow[start anchor=south east, end anchor={150},dr,"-\star\rmd",pos=0.85,swap,outer sep=-1.5pt]  
                    &
                    \stackrel{\omega^+}{\Omega^1(\IM^2,\frh)}\arrow[ru,"-\rmd^\dagger"]
                    &
                    \\
                    \underbrace{\spacer{2ex}\phantom{\Omega^0(\IM^2,\frg)}}_{\eqqcolon\,\frL^{(c)}_0}
                    & 
                    \underbrace{\spacer{2ex}\stackrel{\Lambda}{\Omega^0(\IM^2,\frh)}}_{\eqqcolon\,\frL^{(c)}_1} \arrow[start anchor={30}, end anchor= south west,crossing over,ur,"\star\rmd",pos=0.85,outer sep=-1.5pt]
                    &
                    \underbrace{\spacer{2ex}\stackrel{\Lambda^+}{\Omega^0(\IM^2,\frh)}}_{\eqqcolon\,\frL^{(c)}_2} 
                    &
                    \underbrace{\spacer{2ex}\phantom{\Omega^0(\IM^2,\frh)}}_{\eqqcolon\,\frL^{(c)}_3}
                \end{tikzcd}\right)\,,
            \end{equation}
            and non-trivial higher products are given by
            \begin{equation}\label{eq:higherProductGaugedBCMBV}
                \begin{aligned}
                    \mu^{(c)}_2(c,c)|_c\ &\coloneqq\ [c,c]~,
                    \\
                    \mu^{(c)}_n(\phi,\ldots,\phi,c)|_\phi\ &\coloneqq\ -B^-_{n-1}\ad_\phi^{n-1}(c)~,  
                    \\
                    \mu^{(c)}_2(\omega,c)|_\omega\ &\coloneqq\ [\omega,c]~,
                    \\
                    \mu^{(c)}_n(\phi,\ldots,\phi)|_{\phi^+}\ &\coloneqq\ 
                    \begin{cases}
                        0 & \efor n\in2\IN
                        \\
                        -\rmd^\dagger\ad^{n-1}_\phi(\rmd\phi) & \eelse
                    \end{cases}~,
                    \\
                    \mu^{(c)}_n(\phi,\ldots,\phi)|_{\omega^+}\ &\coloneqq\ -\ad^{n-1}_\phi(\rmd \phi)~,
                    \\
                    \mu^{(c)}_n(\phi,\ldots,\phi,\omega)|_{\phi^+}\ &\coloneqq\ (-1)^{n-1}\rmd^\dagger\ad^{n-1}_\phi(\omega)~,
                    \\
                    \mu^{(c)}_2(\omega,\omega)|_{\Lambda^+}\ &\coloneqq\ -{\star[\omega,\omega]}~,
                    \\
                    \mu^{(c)}_2(\omega,\Lambda)|_{\omega^+}\ &\coloneqq\ {\star[\omega,\Lambda]}~,
                    \\
                    \mu^{(c)}_n(\phi,\ldots,\phi,\phi^+,c)|_{\phi^+}\ &\coloneqq\ -B_{n-1}^-\ad_{\phi^+}(\ad_\phi^{n-2}(c))~,
                    \\
                    \mu_2(\omega^+,c)|_{\omega^+}\ &\coloneqq\ [\omega^+,c]~,
                    \\
                    \mu^{(c)}_n(\phi,\ldots,\phi,\phi^+)|_{c^+}\ &\coloneqq\ (-1)^{n-1}B_{n-1}^-\ad_\phi^{n-1}(\phi^+)~,
                    \\
                    \mu^{(c)}_2(\omega,\omega^+)|_{c^+}\ &\coloneqq\ [\omega,\omega^+]~,
                \end{aligned}
            \end{equation}
            where all the fields are elements of the evident subspaces of $\frL^{(c)}$. The metric structure is defined by 
            \begin{equation}\label{eq:cyclicStructureGaugedBCMBV}
                \begin{aligned}
                    &\inner{c+\phi+\omega+\Lambda}{c^++\phi^++\omega^++\Lambda^+}_{\frL^{(c)}}
                    \\
                    &\kern1cm\coloneqq\ \int\big\{\inner{c}{\star c^+}_\frg+\inner{\phi}{\star\phi^+}_\frg+\inner{\omega}{\star\omega^+}_\frg+\inner{\Lambda}{\star\Lambda^+}_\frg\big\}\,.
                \end{aligned}
            \end{equation}
        \end{subequations}
        Note that the general higher products follow from polarisation of~\eqref{eq:higherProductGaugedBCMBV} and cyclification via~\eqref{eq:cyclicStructureGaugedBCMBV} similar to~\eqref{eq:alternativeHigherProductsPCM}. 
        
        \paragraph{T-dual principal chiral model.}
        If we integrate out the gauge potential $\omega$ from the action~\eqref{eq:gaugedActionPCM}, we obtain the action of the T-dual model~\cite{Rocek:1991ps,delaOssa:1992vci}. In particular, the equation of motion for $\omega$ is
        \begin{equation}
            \star j_\omega\ =\ \rmd\tilde\Lambda+[j_\omega,\tilde\Lambda]~.
        \end{equation}
        This is an algebraic equation for $j_\omega$ which has the solution
        \begin{equation}
            j_\omega\ =\ -\frac12\frac{1}{1-\ad_{\tilde\Lambda}}(\rmd\tilde\Lambda+{\star\rmd\tilde\Lambda})+\frac12\frac{1}{1+\ad_{\tilde\Lambda}}(\rmd\tilde\Lambda-{\star\rmd\tilde\Lambda})~.
        \end{equation}
        Upon substituting this into~\eqref{eq:gaugedActionPCM}, we obtain the T-dual action
        \begin{equation}
            \begin{aligned}
                S^{(2)}\ &\coloneqq\ -\frac12\int_\Sigma\left\langle\rmd\tilde\Lambda,\frac{1}{1-\ad_{\tilde\Lambda}}(\rmd\tilde\Lambda+\star\rmd\tilde\Lambda)\right\rangle_\frh
                \\
                &\,=\ -\frac12\int\left\{\inner{\rmd\tilde\Lambda}{\star\rmd\tilde\Lambda}_\frh+\sum_{n=1}^\infty\inner{\rmd\tilde\Lambda}{\ad_{\tilde\Lambda}^{2n-1}(\rmd\tilde\Lambda)}_\frh+\sum_{n=1}^\infty\inner{\rmd\tilde\Lambda}{\star\ad_{\tilde\Lambda}^{2n}(\rmd\tilde\Lambda)}_\frh\right\}.
            \end{aligned}
        \end{equation}
        The corresponding homotopy algebra $\frL^{(2)}$ has the underlying cochain complex 
        \begin{subequations}\label{eq:L_infty_2}
            \begin{equation}
                \sfCh(\frL^{(2)})\ \coloneqq\ \left( 
                \begin{tikzcd}
                    \underbrace{\spacer{2ex}\stackrel{\tilde\Lambda}{\Omega^0(\IM^2,\frh)}}_{\eqqcolon\,\frL^{(2)}_1}\arrow[r,"\wave"] 
                    &
                    \underbrace{\spacer{2ex}\stackrel{\tilde\Lambda^+}{\Omega^0(\IM^2,\frh)}}_{\eqqcolon\,\frL^{(2)}_2}
                \end{tikzcd}
                \right)\,,
            \end{equation}
            higher products defined by
            \begin{equation}\label{eq:higherProductsTDualPCM}
                \mu^{(2)}_n(\tilde\Lambda,\ldots,\tilde\Lambda)\ \coloneqq\ -\tfrac{(n+1)!}{2}\rmd^\dagger\ad^{n-1}_{\tilde\Lambda}(\star^{n-1}\rmd\tilde\Lambda)~,
            \end{equation} 
            and the metric structure
            \begin{equation}\label{eq:cyclicStructureTDualPCM}
                \inner{\tilde\Lambda}{\tilde\Lambda^+}_{\frL^{(2)}}\ \coloneqq\ \int\inner{\tilde\Lambda}{\star\tilde\Lambda^+}_\frh
            \end{equation}
            for all $\tilde\Lambda\in\frL^{(2)}_1$ and $\tilde\Lambda^+\in\frL^{(2)}_2$.
        \end{subequations}
        Again, the general form of the higher products is obtained from the polarisation of~\eqref{eq:higherProductsTDualPCM} followed by the cyclification with respect to~\eqref{eq:cyclicStructureTDualPCM}. Hence, similar to~\eqref{eq:alternativeHigherProductsPCM}, we may equivalently take
        \begin{equation}\label{eq:alternativeHigherProductsTDualPCM}
            \tilde\mu^{(2)}_n(\tilde\Lambda,\ldots,\tilde\Lambda)\ \coloneqq\ -\frac{n!}{2}\left\{\sum_{m=0}^{n-2}(-1)^m{\star\big[\ad_{\tilde\Lambda}^m(\rmd\tilde\Lambda),\ad_{\tilde\Lambda}^{n-2-m}(\star^n\rmd\tilde\Lambda)\big]}+2\rmd^\dagger\ad_{\tilde\Lambda}^{n-1}(\star^{n-1}\rmd\tilde\Lambda)\right\}
        \end{equation}
        instead of~\eqref{eq:higherProductsTDualPCM} whose polarisation is directly cyclic.
        
        \subsection{Span of \texorpdfstring{$L_\infty$}{Linfty}-algebras}
        
        We now describe the homotopy transfers realising the quasi-isomorphisms that link the PCM to its T-dual model and produce a span of $L_\infty$-algebras
        \begin{equation}
            \begin{tikzcd}
                & \frL^{(c)} \arrow[<->,dl] \arrow[<->,dr] &
                \\
                \frL^{(1)} & & \frL^{(2)} 
            \end{tikzcd}
        \end{equation}
        We start with the simpler transfer from $\frL^{(c)}$ to $\frL^{(2)}$.
        
        \paragraph{Homotopy transfer $\frL^{(c)}\rightarrow\frL^{(2)}$.}
        Between the differential complexes underlying $\frL^{(c)}$ and $\frL^{(2)}$, we have the following special deformation retract: 
        \begin{subequations}
            \begin{equation}
                \begin{tikzcd}
                    \ar[loop,out=160,in=200,distance=20,"\sfh" left] (\frL^{(c)},\mu_1^{(c)})\arrow[r,shift left]{}{\sfp} & (\frL^{(2)},\mu^{(2)}_1) \arrow[l,hookrightarrow,shift left]{}{\sfe}
                \end{tikzcd}
            \end{equation}
            with 
            \begin{equation}\label{eq:contractingHomotopyPCM1}
                \begin{aligned}
                    \sfe
                    \begin{pmatrix}
                        \tilde\Lambda & \tilde\Lambda^+
                    \end{pmatrix}
                    \ &\coloneqq\ 
                    \begin{pmatrix} 
                        0 & 0 & 0 & 0
                        \\
                        & -{\star\rmd\tilde\Lambda} & 0 
                        \\
                        & -\tilde\Lambda & -\tilde\Lambda^+
                    \end{pmatrix}\,,
                    \\
                    \sfp
                    \begin{pmatrix}
                        c & \phi & \phi^+ & c^+
                        \\
                        & \omega & \omega^+ 
                        \\
                        & \Lambda & \Lambda^+
                    \end{pmatrix}
                    \ &\coloneqq\ 
                    \begin{pmatrix} 
                        -\Lambda & -(\Lambda^+-{\star\rmd\omega^+})
                    \end{pmatrix}\,,
                    \\
                    \sfh
                    \begin{pmatrix}
                        c & \phi & \phi^+ & c^+
                        \\
                        & \omega & \omega^+ 
                        \\
                        & \Lambda & \Lambda^+
                    \end{pmatrix}
                    \ &\coloneqq\ 
                    \begin{pmatrix}
                        -\phi & 0 & c^+ & 0
                        \\
                        & -\omega^+ & 0 
                        \\
                        & 0 & 0
                    \end{pmatrix}\,,
                \end{aligned}
            \end{equation}
        \end{subequations}        
        where the positions indicate the subspaces of the complexes in which the expressions take values, as displayed in~\eqref{eq:L_infty_2} and~\eqref{eq:MotherDifferentialComplex}. In particular, we have~\eqref{eq:deformation_retract_2}, and the side conditions~\eqref{eq:HT_side_conditions} are satisfied as well.
        
        We note that in the formulas~\eqref{eq:transfer_formulas}, the arguments of the higher products are always applied to images of $\sfE_n$, which, in turn are images of either $\sfe$ or $\sfh$. For degree~$1$, these images are contained in the subspace $\frL^{(c)}_{1,\omega}\oplus\frL^{(c)}_{1,\Lambda}$, and it thus suffices to consider the higher brackets in $\frL^{(c)}$ restricted to these subspaces,
        \begin{equation}
            \begin{aligned}
                \mu^{(c)}_2(\omega,\omega)|_{\Lambda^+}\ &=\ -{\star[\omega,\omega]}~,
                \\
                \mu^{(c)}_2(\omega,\Lambda)|_{\omega^+}\ &=\ {\star[\omega,\Lambda]}~.
            \end{aligned}
        \end{equation}
        Moreover, the only higher products we need to evaluate are $\mu^{(2)}_n(\tilde\Lambda,\ldots,\tilde\Lambda)$, and we can therefore simplify the formulas~\eqref{eq:transfer_formulas} as 
        \begin{subequations}
            \begin{equation}\label{eq:EnMunInTermsOfFnPCM1}
                \sfE_n(\tilde\Lambda,\ldots,\tilde\Lambda)\ =\ -\sfh(\sfF_n(\tilde\Lambda,\ldots,\tilde\Lambda))
                \eand
                \mu^{(2)}_n(\tilde\Lambda,\ldots,\tilde\Lambda)\ =\ \sfp(\sfF_n(\tilde\Lambda,\ldots,\tilde\Lambda))
            \end{equation}
            with
            \begin{equation}\label{eq:defOfFn}
                \sfF_n(\tilde\Lambda,\ldots,\tilde\Lambda)\ \coloneqq\ \frac{1}{2!}\sum_{\substack{k_1+k_2=n\\k_1,k_2\geq 1}}\binom{n}{k_1}\mu_2^{(c)}(\sfE_{k_1}(\tilde\Lambda,\ldots,\tilde\Lambda),\sfE_{k_2}(\tilde\Lambda,\ldots,\tilde\Lambda)) 
            \end{equation}
        \end{subequations}
        for all $n>1$. 
        
        \pagebreak
        
        We shall now prove inductively that
        \begin{equation}
            \begin{aligned}
                \sfE_n(\tilde\Lambda,\ldots,\tilde\Lambda)|_\Lambda\ &=\ -\delta_{n1}\tilde\Lambda~,
                \\
                \sfE_n(\tilde\Lambda,\ldots,\tilde\Lambda)|_\omega\ &=\ -n!\,\ad_{\tilde\Lambda}^{n-1}({\star^n\rmd\tilde\Lambda})
            \end{aligned}
        \end{equation}
        for all $n\in\IN$. These relations evidently hold for $n=1$. Suppose now that they hold for $1,\ldots,n-1$ with $n>2$. Then,
        \begin{subequations}
            \begin{equation}
                \begin{aligned}
                    \sfF_n(\tilde\Lambda,\ldots,\tilde\Lambda)|_{\Lambda^+}\ &=\ \frac{1}{2!}\sum_{\substack{k_1+k_2=n\\k_1,k_2\geq 1}}\binom{n}{k_1}\mu_2^{(c)}(\sfE_{k_1}(\tilde\Lambda,\ldots,\tilde\Lambda)|_\omega,\sfE_{k_2}(\tilde\Lambda,\ldots,\tilde\Lambda)|_\omega)
                    \\
                    &=\ -\frac{n!}2\sum_{\substack{k_1+k_2=n\\k_1,k_2\geq 1}}{\star\big[\ad_{\tilde\Lambda}^{k_1-1}({\star^{k_1}\rmd\tilde\Lambda}),\ad_{\tilde\Lambda}^{k_2-1}({\star^{k_2}\rmd\tilde\Lambda})\big]}
                    \\
                    &=\ -\frac{n!}2\sum_{\substack{k_1+k_2=n\\k_1,k_2\geq 1}}(-1)^{k_1}{\star\big[\ad_{\tilde\Lambda}^{k_1-1}(\rmd\tilde\Lambda),\ad_{\tilde\Lambda}^{k_2-1}({\star^n\rmd\tilde\Lambda})\big]}
                    \\
                    &=\ \frac{n!}2\sum_{\substack{k_1+k_2=n-2\\k_1,k_2\geq 0}}(-1)^{k_1}{\star\big[\ad_{\tilde\Lambda}^{k_1}(\rmd\tilde\Lambda),\ad_{\tilde\Lambda}^{k_2}({\star^n\rmd\tilde\Lambda})\big]}
                    \\
                    &=\ \frac{n!}2\sum_{m=0}^{n-2}(-1)^m{\star\big[\ad_{\tilde\Lambda}^m(\rmd\tilde\Lambda),\ad_{\tilde\Lambda}^{n-2-m}({\star^n\rmd\tilde\Lambda})\big]}~,
                \end{aligned}
            \end{equation}
            where in the third step, we have used $[\alpha,{\star^k\beta}]=(-1)^k[{\star^k\alpha},\beta]$ for any two Lie-algebra-valued differential one-forms $\alpha$ and $\beta$ and in the last step, we have used the identity $\sum_{\substack{j+k=i\\j,k\geq 0}}a_jb_k=\sum_{j=0}^ia_ib_{j-i}$. Likewise,
            \begin{equation}
                \begin{aligned}
                    \sfF_n(\tilde\Lambda,\ldots,\tilde\Lambda)|_{\Lambda^+}\ &=\ \frac{1}{2!}\sum_{\substack{k_1+k_2=n\\k_1,k_2\geq 1}}\binom{n}{k_1}\mu_2^{(c)}(\sfE_{k_1}(\tilde\Lambda,\ldots,\tilde\Lambda)|_\omega,\sfE_{k_2}(\tilde\Lambda,\ldots,\tilde\Lambda)|_\Lambda)
                    \\
                    &=\ \binom{n}{n-1}\mu_2^{(c)}(\sfE_{n-1}(\tilde\Lambda,\ldots,\tilde\Lambda)|_\omega,\sfE_1(\tilde\Lambda,\ldots,\tilde\Lambda)|_\Lambda)
                    \\
                    &=\ n!\star{\big[\ad_{\tilde\Lambda}^{n-2}(\star^{n-1}\rmd\tilde\Lambda),\tilde\Lambda\big]}
                    \\
                    &=\ -n!\,\ad_{\tilde\Lambda}^{n-1}(\star^n\rmd\tilde\Lambda)~.
                \end{aligned}
            \end{equation}
        \end{subequations}
        Hence, using~\eqref{eq:contractingHomotopyPCM1} and~\eqref{eq:EnMunInTermsOfFnPCM1}, we immediately find
        \begin{subequations}
            \begin{equation}
                \sfE_n(\tilde\Lambda,\ldots,\tilde\Lambda)|_\Lambda\ =\ -\sfh(\sfF_n(\tilde\Lambda,\ldots,\tilde\Lambda))|_\Lambda\ =\ 0
            \end{equation}
            for all $n>1$. Likewise,
            \begin{equation}
                \sfE_n(\tilde\Lambda,\ldots,\tilde\Lambda)|_\omega\ =\ -\sfh(\sfF_n(\tilde\Lambda,\ldots,\tilde\Lambda)|_{\omega^+})\ =\ \sfF_n(\tilde\Lambda,\ldots,\tilde\Lambda)|_{\omega^+}\ =\ -n!\,\ad_{\tilde\Lambda}^{n-1}(\star^n\rmd\tilde\Lambda)~.
            \end{equation}
        \end{subequations}
        This completes the proof.
        
        Now, with~\eqref{eq:contractingHomotopyPCM1} and~\eqref{eq:EnMunInTermsOfFnPCM1}, we find
        \begin{equation}
            \begin{aligned}
                &\mu^{(2)}_n(\tilde\Lambda,\ldots,\tilde\Lambda)|_{\tilde\Lambda^+}\ =\ \sfp(\sfF_n(\tilde\Lambda,\ldots,\tilde\Lambda))
                \\
                &\hspace{1cm}=\ -\sfF_n(\tilde\Lambda,\ldots,\tilde\Lambda)|_{\Lambda^+}+{\star\rmd\sfF_n(\tilde\Lambda,\ldots,\tilde\Lambda)|_{\omega^+}}
                \\
                &\hspace{1cm}=\ -\frac{n!}{2}\left\{\sum_{m=0}^{n-2}(-1)^m{\star\big[\ad_{\tilde\Lambda}^m(\rmd\tilde\Lambda),\ad_{\tilde\Lambda}^{n-2-m}(\star^n\rmd\tilde\Lambda)\big]}+2\rmd^\dagger\ad_{\tilde\Lambda}^{n-1}(\star^{n-1}\rmd\tilde\Lambda)\right\},
            \end{aligned}
        \end{equation}
        which are precisely the higher products~\eqref{eq:alternativeHigherProductsTDualPCM} on $\frL^{(2)}$.
        
        \paragraph{Homotopy transfer from $\frL^{(c)}\rightarrow\frL^{(1)}$.} 
        The homotopy equivalence between~\eqref{eq:MotherL_inftyAlgebra} and~\eqref{eq:L_infty_1} is somewhat more complicated, and we need to consider the involved functions carefully, distinguishing between fields with null and non-null momenta.
        
        We assume that all fields are bounded at infinity, so that we can expand them in terms of plane waves $x\mapsto\rme^{\rmi p\cdot x}$ with momentum $p$. Because the differential $\mu_1$ in the complex~\eqref{eq:MotherDifferentialComplex} preserves momenta, we can decompose $\Omega^1(\IM^2)$ as
        \begin{equation}
            \Omega^1(\IM^2)\ \cong\ \Omega^1_\text{e}(\IM^2)\oplus \Omega^1_\text{c}(\IM^2)\oplus \Omega^1_\text{ec}(\IM^2)\oplus \Omega^1_\text{r}(\IM^2)\oplus\Omega^1_\text{cm}(\IM^2)~,
        \end{equation}
        where we have introduced the subspaces
        \begin{equation}
            \begin{aligned}
                \Omega^1_\text{e}(\IM^2)\ &\coloneqq\ \{\text{one-forms with $p^2\neq0$ and spanned by}~\rmd x^\mu p_\mu\rme^{\rmi p\cdot x}\}~,
                \\
                \Omega^1_\text{c}(\IM^2)\ &\coloneqq\ \{\text{one-forms with $p^2\neq0$ and spanned by}~\rmd x^\mu\eps_{\mu\nu}p^\nu\rme^{\rmi p\cdot x}\}~,
                \\
                \Omega^1_\text{ec}(\IM^2)\ &\coloneqq\ \{\text{one-forms with $p^2=0$ and $p\neq 0$ and spanned by}~\rmd x^\mu p_\mu\rme^{\rmi p\cdot x}\}~,
                \\
                \Omega^1_\text{r}(\IM^2)\ &\coloneqq\ \{\text{one-forms with $p^2=0$ and $p\neq 0$ and spanned by}~\rmd x^\mu\delta_{\mu\nu}p^\nu\rme^{\rmi p\cdot x}\}~,
                \\
                \Omega^1_\text{cm}(\IM^2)\ &\coloneqq\ \{\text{one-forms with $p=0$ and spanned by}~\rmd x^\mu\}~,
            \end{aligned}
        \end{equation}
        where $\eps_{\mu\nu}$ is the Levi-Civita symbol and $\delta_{\mu\nu}$ the Kronecker symbol, respectively. Elements of $\Omega^1_\text{e}(\IM^2)$ are exact and elements of $\Omega^1_\text{c}(\IM^2)$ are coexact. Furthermore, while elements of $\Omega^1_\text{ec}(\IM^2)$ are both closed and coclosed since $\rmd x^\mu p_\mu=\pm\rmd x^\mu\eps_{\mu\nu}p^\nu$ for $p_0=\pm p_1$, elements of $\Omega^1_\text{r}(\IM^2)$ are neither closed nor coclosed.  We also have
        \begin{equation}
            \star\,:\,\Omega^1_\text{e}(\IM^2)\ \rightarrow\ \Omega^1_\text{c}(\IM^2)
            \eand
            \star\,:\,\Omega^1_\text{c}(\IM^2)\ \rightarrow\ \Omega^1_\text{e}(\IM^2)~,
        \end{equation}
        and elements in $\Omega^1_\text{ec}(\IM^2)$ and $\Omega^1_\text{r}(\IM^2)$ with definite momentum $p$ are either self-dual or anti-self-dual, depending on the sign in $p_0=\pm p_1$. 
        
        \pagebreak
        
        In the following, we shall denote by $\Pi_\text{e}$, $\Pi_\text{c}$, $\Pi_\text{ec}$, $\Pi_\text{r}$, and $\Pi_\text{cm}$ the projectors onto those subspaces. We also introduce the map $P$ which makes Poincar\'e's lemma concrete and is defined as follows
        \begin{equation}
            \begin{aligned}
                P\,:\,\Omega^1_\text{e}(\IM^2)\oplus\Omega^1_\text{ec}(\IM^2)\oplus\Omega^2(\IM^2)\ &\rightarrow\ \Omega^0(\IM^2)\oplus \Omega^1_\text{c}(\IM^2)\oplus\Omega^1_\text{r}(\IM^2)~,
                \\
                \rmd x^\mu p_\mu \rme^{\rmi p\cdot x}\ &\mapsto\ -\rmi\rme^{\rmi px}~,
                \\
                \tfrac12\eps_{\mu\nu}\rmd x^\mu\wedge\rmd x^\nu\rme^{\rmi p\cdot x}\ &\mapsto\ 
                \rmi\begin{cases}
                    \frac{1}{p^2}\rmd x^\mu\eps_{\mu\nu}p^\nu\rme^{\rmi p\cdot x} & \efor p^2\neq 0~,
                    \\
                    -\frac{1}{2p_0p_1}\rmd x^\mu\delta_{\mu\nu}p^\nu\rme^{\rmi p\cdot x} & \efor p^2=0~,~~p\neq 0~,
                    \\
                    0 & \efor p=0~.
                \end{cases}
            \end{aligned}
        \end{equation}
        We notice that for differential forms $\alpha_0\in \Omega^0(\IM^2)$, $\alpha_{1,\text{e/ec}} \in \Omega^1_\text{e}(\IM^2)\oplus \Omega^1_\text{ec}(\IM^2)$, $\alpha_{1,\text{c/r}} \in\Omega^{1}_\text{c}(\IM^2)\oplus\Omega^{1}_\text{r}(\IM^2)$, $\alpha_{1,\text{e/r}}\in \Omega^{1}_\text{e}(\IM^2)\oplus\Omega^{1}_\text{r}(\IM^2)$, and $\alpha_2\in \Omega^2(\IM^2)$, we have 
        \begin{equation}
            \begin{aligned}
                \rmd P(\alpha_{1,\text{e/ec}})\ &=\ \alpha_{1,\text{e/ec}}~,
                \\
                \rmd P(\alpha_2)\ &=\ \alpha_2-\Pi_\text{cm}(\alpha_2)~,
                \\
                -{\star\rmd P}({\star\alpha_0})\ &=\ \alpha_0-\Pi_\text{cm}(\alpha_0)~,
                \\
                P(\rmd\alpha_0)\ &=\ \alpha_0-\Pi_\text{cm}(\alpha_0)~,
                \\
                P\rmd(\alpha_{1,\text{c/r}})\ &=\ \alpha_{1,\text{c/r}}~,
                \\
                {\star P}({\star(-\rmd^\dagger\alpha_{1,\text{e/r}})})\ &=\ \alpha_{1,\text{e/r}}~.
            \end{aligned}
        \end{equation}
        
        The underlying graded vector space of the $L_\infty$-algebra $\frL^{(c)}$ given in~\eqref{eq:MotherL_inftyAlgebra} thus decomposes as
        \begin{subequations}
            \begin{equation} \label{eq:MotherDifferentialComplex2}
                \frL^{(c)}\ \cong\ \left( 
                \begin{tikzcd}[row sep=-0.1cm]
                    \stackrel{c}{\Omega^0(\IM^2,\frh)} 
                    & 
                    \stackrel{\phi}{\Omega^0(\IM^2,\frg)} 
                    &[20pt]
                    \stackrel{\phi^+}{\Omega^0(\IM^2,\frg)}
                    &
                    \stackrel{c^+}{\Omega^0(\IM^2,\frh)}
                    \\
                    & \oplus & \oplus
                    \\
                    & 
                    \stackrel{\omega_\text{e}}{\Omega^1_\text{e}(\IM^2,\frh)} 
                    &
                    \stackrel{\omega^+_\text{e}}{\Omega^1_\text{e}(\IM^2,\frh)}
                    &
                    \\
                    & \oplus & \oplus
                    \\
                    & 
                    \stackrel{\omega_\text{c}}{\Omega^1_\text{c}(\IM^2,\frh)} 
                    &
                    \stackrel{\omega^+_\text{c}}{\Omega^1_\text{c}(\IM^2,\frh)}
                    &
                    \\
                    & \oplus & \oplus
                    \\
                    & 
                    \stackrel{\omega_\text{ec}}{\Omega^1_\text{ec}(\IM^2,\frh)} 
                    &
                    \stackrel{\omega^+_\text{ec}}{\Omega^1_\text{ec}(\IM^2,\frh)}
                    &
                    \\
                    & \oplus & \oplus
                    \\
                    & 
                    \stackrel{\omega_\text{r}}{\Omega^1_\text{r}(\IM^2,\frh)} 
                    &
                    \stackrel{\omega^+_\text{r}}{\Omega^1_\text{r}(\IM^2,\frh)}
                    &
                    \\
                    & \oplus & \oplus
                    \\
                    & 
                    \stackrel{\omega_\text{cm}}{\Omega^1_\text{cm}(\IM^2,\frh)} 
                    &
                    \stackrel{\omega^+_\text{cm}}{\Omega^1_\text{cm}(\IM^2,\frh)}
                    &
                    \\
                    & \oplus & \oplus
                    \\
                    \underbrace{\spacer{2ex}\phantom{\Omega^0(\IM^2,\frh)}}_{=\,\frL^{(c)}_0}& 
                    \underbrace{\spacer{2ex}\stackrel{\Lambda}{\Omega^0(\IM^2,\frh)}}_{\cong\,\frL^{(c)}_1} 
                    &
                    \underbrace{\spacer{2ex}\stackrel{\Lambda^+}{\Omega^0(\IM^2,\frh)}}_{\cong\,\frL^{(c)}_2} 
                    &
                    \underbrace{\spacer{2ex}\phantom{\Omega^0(\IM^2,\frh)}}_{=\,\frL^{(c)}_3}
                \end{tikzcd}\right)~,
            \end{equation}
            on which we have the differential
            \begin{equation}
            \resizebox{0.98\hsize}{!}{$
                \begin{aligned}
                    &\mu_1
                    \begin{pmatrix}
                        c & \phi & \phi^+ & c^+
                        \\
                        & \omega_\text{e} & \omega^+_\text{e} 
                        \\
                        & \omega_\text{c} & \omega^+_\text{c} 
                        \\
                        & \omega_\text{ec} & \omega^+_\text{ec} 
                        \\
                        & \omega_\text{r} & \omega^+_\text{r} 
                        \\
                        & \omega_\text{cm} & \omega^+_\text{cm} 
                        \\
                        & \Lambda & \Lambda^+
                    \end{pmatrix}
                    &=\ 
                    \begin{pmatrix} 
                        0 & -c & \wave\phi-\rmd^\dagger(\omega_\text{e}+\omega_\text{r}) & \phi^+-\rmd^\dagger(\omega^+_\text{e}+\omega^+_\text{r})
                        \\
                        & \Pi_\text{e}(\rmd c) & -\Pi_\text{e}(\rmd\phi)-\omega_\text{e}
                        \\
                        & 0 & -\omega_\text{c}+\Pi_\text{c}({\star\rmd\Lambda})
                        \\
                        & \Pi_\text{ec}(\rmd c) & -\Pi_\text{ec}(\rmd\phi)-\omega_\text{ec}+\Pi_\text{ec}({\star\rmd\Lambda})
                        \\
                        & 0 & -\omega_\text{r}
                        \\
                        & 0 & -\omega_\text{cm}
                        \\
                        & 0 & -{\star\rmd(\omega_\text{c}+\omega_\text{r})}
                    \end{pmatrix}\,.
                \end{aligned}
            $}
            \end{equation}
            It is not hard to see that
            \begin{equation}
                \begin{gathered}
                    \inner{\omega_\text{e}}{\omega^+_\text{c}}_{\frL^{(c)}}\ =\ \inner{\omega_\text{c}}{\omega^+_\text{e}}_{\frL^{(c)}}\ =\ \inner{\omega_\text{ec}}{\omega^+_\text{ec}}_{\frL^{(c)}}\ =\ \inner{\omega_\text{r}}{\omega^+_\text{r}}_{\frL^{(c)}}\ =\ 0~.
                \end{gathered}
            \end{equation}
        \end{subequations}
        
        The deformation retract can then be constructed in two steps, following the physical intuition: in a first step, we integrate out $\Lambda$ and in a second step, we gauge trivialise the remaining connection form. We can then use formula~\eqref{eq:combination_deformation_retracts} to combine both.
        
        The result is the special deformation retract
        \begin{subequations}
            \begin{equation}
                \begin{tikzcd}
                    \ar[loop,out=160,in=200,distance=20,"\sfh" left] (\frL^{(c)},\mu_1^{(c)})\arrow[r,shift left]{}{\sfp} & (\frL^{(1)},\mu^{(1)}_1)\arrow[l,hookrightarrow,shift left]{}{\sfe}
                \end{tikzcd}
            \end{equation}
            with 
            \begin{equation}\label{eq:contractingHomotopyPCM2}
                \begin{aligned}
                    \sfp
                    \begin{pmatrix}
                        c & \phi & \phi^+ & c^+
                        \\
                        & \omega_\text{e} & \omega^+_\text{e} 
                        \\
                        & \omega_\text{c} & \omega^+_\text{c} 
                        \\
                        & \omega_\text{ec} & \omega^+_\text{ec} 
                        \\
                        & \omega_\text{r} & \omega^+_\text{r} 
                        \\
                        & \omega_\text{cm} & \omega^+_\text{cm} 
                        \\
                        & \Lambda & \Lambda^+
                    \end{pmatrix}
                    \ &\coloneqq\
                    \begin{pmatrix}
                        \phi-\Pi_\text{cm}(\phi+\Lambda)+P(\omega_\text{e}+\omega_\text{ec}) & \phi^++S(\Lambda^+)+\Pi_\text{cm}(\Lambda^+)
                    \end{pmatrix},
                    \\
                    \sfe
                    \begin{pmatrix} 
                        \phi & \phi^+
                    \end{pmatrix}
                    \ &\coloneqq\ 
                    \begin{pmatrix}
                        0 & \phi & \phi^+-\Pi_\text{cm}(\phi^+) & 0
                        \\
                        & 0 & -\Pi_\text{e}(\star P(\star \phi^+))
                        \\
                        & 0 & 0
                        \\
                        & 0 & 0 
                        \\
                        & 0 & -\Pi_\text{r}(\star P(\star \phi^+))
                        \\
                        & 0 & 0
                        \\
                        & \Pi_\text{cm}(\phi)+P(\star \Pi_\text{ec}(\rmd\phi)) & \Pi_\text{cm}(\phi^+)
                    \end{pmatrix}\,,
                    \\
                    \sfh
                    \begin{pmatrix}
                        c & \phi & \phi^+ & c^+
                        \\
                        & \omega_\text{e} & \omega^+_\text{e} 
                        \\
                        & \omega_\text{c} & \omega^+_\text{c} 
                        \\
                        & \omega_\text{ec} & \omega^+_\text{ec} 
                        \\
                        & \omega_\text{r} & \omega^+_\text{r} 
                        \\
                        & \omega_\text{cm} & \omega^+_\text{cm} 
                        \\
                        & \Lambda & \Lambda^+
                    \end{pmatrix}
                    \\
                    &\kern-3.5cm\coloneqq\ 
                    \begin{pmatrix}
                        \Pi_\text{cm}(\Lambda-\phi)+P(\omega_\text{e}+\omega_\text{ec}) & 0 & \Pi_\text{cm}(c^+) & 0
                        \\
                        & 0 & \Pi_\text{e}(\star P(\star c^+))
                        \\
                        & \Pi_\text{c}(P{\star\Lambda^+}) & 0
                        \\
                        & 0 & 0
                        \\
                        & \Pi_\text{r}(P{\star\Lambda^+}) & \Pi_\text{r}(\star P(\star c^+))
                        \\
                        & -\omega_\text{cm}^+ & 0
                        \\
                        & P \star (\omega^+_\text{c}+\omega^+_\text{ec}+\Pi_\text{c}(P{\star\Lambda^+})) & -\Pi_\text{cm}(c^+)
                    \end{pmatrix}\,,
                \end{aligned}
            \end{equation}
            where we also used the map
            \begin{equation}
                S\,:\,\Omega^p(\IM^2)\ \rightarrow\ \Omega^p(\IM^2)
            \end{equation}
            which vanishes off-shell and inverts the sign of on-shell forms, depending on the momentum,
            \begin{equation}
                S(\rme^{\rmi p\cdot x})\ \coloneqq\ 
                \begin{cases}
                    \mp\rme^{\rmi p\cdot x} & p_0=\pm p_1\neq0
                    \\
                    0 & \mbox{else}
                \end{cases}~.
            \end{equation}
        \end{subequations}
        One straightforwardly checks that we have~\eqref{eq:deformation_retract_2}, and the side conditions~\eqref{eq:HT_side_conditions} are satisfied as well.
        
        It remains to show that the homotopy transfer indeed reproduces the higher products of $\frL^{(1)}$. Considering formulas~\eqref{eq:transfer_formulas}, we note the following. The embedding $\sfE_1=\sfe$ of $\frL^{(1)}$ into $\frL^{(c)}$ will map a field $\phi\in \frL^{(1)}_1$ to field components $\phi$ and $\Lambda$ in $\frL^{(c)}_1$. The only interactions between these are the interactions between $\phi$-components, which are given by the cyclified and polarized versions of
        \begin{equation}
            \begin{aligned}
                \mu^{(c)}_n(\phi,\ldots,\phi)|_{\phi^+}\ &\coloneqq\ 
                \begin{cases}
                    0 & \efor n\in2\IN
                    \\
                    -\rmd^\dagger\ad^{n-1}_\phi(\rmd\phi) & \eelse
                \end{cases}~,
                \\
                \mu^{(c)}_n(\phi,\ldots,\phi)|_{\omega^+}\ &\coloneqq\ -\ad^{n-1}_\phi(\rmd \phi)~.
            \end{aligned}
        \end{equation}
        Note that the $\omega^+$-component has no constant part, as the derivative of the functions that we are considering (i.e.~bounded at infinity) either vanishes or is non-constant. So applying $\sfh$ to the result will only produce a field component $\Lambda$ in $\frL^{(c)}_1$. In summary, the only arguments ever entering the higher products in the homotopy transfer will be the $\phi$- and $\Lambda$-components of $\frL^{(c)}_1$. The only non-trivial higher products with these arguments, however, are the ones with all arguments being $\phi$-components. The latter exclusively arise from the direct embedding via $\sfE_1=\sfe$. The final projector $\sfp$ is only non-trivial on the component fields $\phi^+$ and $\Lambda^+$, and therefore the homotopy transfer is just a pullback of the higher product defined by 
        \begin{equation}
            \mu^{(c)}_n(\phi,\ldots,\phi)|_{\phi^+}\ \coloneqq\ 
            \begin{cases}
                0 & \efor n\in2\IN~,
                \\
                -\rmd^\dagger\ad^{n-1}_\phi(\rmd\phi) & \eelse~,
            \end{cases}
        \end{equation}
        which reproduces the higher products on $\frL^{(1)}$.
        
        \section{Penrose--Ward transform}\label{sec:twistors}
        
        The purpose of this section is to briefly describe yet another example of spans of $L_\infty$-algebras, arising in the context of the Penrose--Ward transform.
        
        \paragraph{Penrose--Ward transform.}
        A number of gauge field equations can be written as flatness conditions on certain subspaces of space-time. The most prominent example is the instanton or self-dual Yang--Mills equations on $\IR^4$, which correspond to flatness of the connection along self-dual two-planes in $\IR^4$~\cite{Ward:1977ta}. Another important example is $\caN=3$ super Yang--Mills theory on $\IR^4$, which amounts to flatness along super light lines in $\IR^{4|12}$~\cite{Witten:1978xx}. 
        
        For such gauge field equations, one considers a double fibration of manifolds
        \begin{equation}\label{eq:DFPW}
            \begin{tikzcd}
                & F\arrow[dl,"\pi_1" swap]\arrow[dr,"\pi_2"]
                \\
                Z & & M
            \end{tikzcd}
        \end{equation}
        where $M$ is space-time, $Z$ is the twistor space, and $F$ is the correspondence space. In particular, by virtue of this double fibration, we have a geometric correspondence between points $x\in M$ and subspaces $\pi_1(\pi_2^{-1}(x))\subseteq Z$ and points $z\in Z$ and subspaces $\pi_2(\pi_1^{-1}(z))\subseteq M$, respectively. Moreover, in may interesting cases both $Z$ and subspaces $\pi_1(\pi_2^{-1}(x))\subseteq Z$ are complex manifolds.
        
        The Penrose--Ward transform is now the map between equivalence classes of holomorphic principal $\sfG$-bundles over $Z$, holomorphically trivial when restricted to the subspaces $\pi_1(\pi_2^{-1}(x)\subseteq Z$, to equivalence classes of holomorphic principal $\sfG$-bundles over $M$ equipped with a holomorphic connection that is flat on all the subspaces $\pi_2(\pi_1^{-1}(z))\subseteq M$. This flatness encodes the field configurations of the gauge field equations one wishes to study such as the aforementioned instanton equations~\cite{Ward:1977ta}. For more examples, see e.g.~\cite{Manin:1988ds,Ward:1990vs,Mason:1991rf}.
        
        In the Dolbeault picture, such holomorphic $\sfG$-principal bundles on $Z$ can be described by smooth complex $\sfG$-principal bundles equipped with a $(0,1)$-connection locally described by a $\frg$-valued $(0,1)$-forms $A^{0,1}$ subject to the holomorphic Chern--Simons equation
        \begin{equation}
            \bar\partial A^{0,1}+\tfrac12[A^{0,1},A^{0,1}]\ =\ 0~.
        \end{equation}
        In the process of the Penrose--Ward transform, these $(0,1)$-forms on $Z$ are mapped to relative one-forms $A_{\pi_1}$ along the fibres $\pi_1$ on $F$ that are relatively flat
        \begin{equation}
            \rmd_{\pi_1}A_{\pi_1}+\tfrac12[A_{\pi_1},A_{\pi_1}]\ =\ 0~.
        \end{equation}
        These relative one-forms can then be naturally pushed down to one-forms on $M$, and, in turn, the relative flatness equation becomes the relevant field equation on $M$.
        
        \paragraph{On-shell correspondence.}
        The double fibration~\eqref{eq:DFPW} already suggest a span of $L_\infty$-algebras, which, contrary to our previous cases, only works on-shell. Explicitly, we have the following picture
        \begin{equation}\label{eq:DFPWL}
            \begin{tikzcd}
                & \frL_F^\text{flat}\arrow[dl,"\sfp_1" swap]\arrow[dr,"\sfp_2"]
                \\
                \frL_P^\text{flat} & & \frL_M^\text{flat}
            \end{tikzcd}
        \end{equation}       
        where all $L_\infty$-algebras are concentrated in degrees $0$ and $1$, with the ghosts parametrising gauge transformations in degrees $0$ and
        \begin{equation}
            \begin{aligned}
                \frL_{M,1}^\text{flat}~&:~\mbox{one-forms that are solutions to the gauge field equations under consideration}
                \\
                \frL_{F,1}^\text{flat}~&:~\mbox{relative one-forms solutions which are relatively flat} 
                \\
                \frL_{Z,1}^\text{flat}~&:~\mbox{holomorphic $(0,1)$-forms solutions which are holomorphically flat} 
            \end{aligned}
        \end{equation}
        The Penrose--Ward transform establishes a quasi-isomorphism between all of these $L_\infty$-algebras. Moreover, the projections $\sfp_1$ and $\sfp_2$ merely amount to integrating out different degrees of gauge redundancy. 
        
        We stress that although there are evident completions of the $L_\infty$-algebras appearing in~\eqref{eq:DFPWL} to off-shell versions, the Penrose--Ward correspondence, and hence the span of $L_\infty$-algebras~\eqref{eq:DFPWL} does \emph{not} extend to those. The problem is that the gauge transformation necessary for translating relative one-forms $A_{\pi_1}$ on $F$ to one-forms $\pi_1^*A^{(0,1)}$ on $F$ that arise as pullbacks of one-forms $A^{(0,1)}$ on $Z$ only exists for flat such connections. In the following, we shall construct an off-shell example, which exists in a particular case.
        
        \paragraph{Real instantons.}
        Consider the special case of the $\caN=4$ supersymmetric instanton equations on real Euclidean $\IR^4$, cf.~\cite{Siegel:1992xp,Devchand:1992st}. Here, the double fibration~\eqref{eq:DFPW} collapses to a single fibration
        \begin{equation}\label{eq:single_fibration}
            Z^{3|4}\xleftarrow{~\cong~}\IR^{4|8}\times\IC P^1\xrightarrow{~~~}\IR^{4|8}~,
        \end{equation}
        where $Z^{3|4}$ is the total space of the rank $(2|4)$ holomorphic vector bundle $\IC^{2|4}\otimes \caO(1)\rightarrow \IC P^1$, where $\caO(1)$ denotes the complex line bundle over $\IC P^1$ of first Chern class $1$. The twistor space $Z^{3|4}$ comes with a holomorphic volume form $\Omega^{3|4,0|0}$~\cite{Witten:2003nn}, which allows us to write down the holomorphic Chern--Simons action
        \begin{equation}
            S_\text{hCS}\ \coloneqq\ \int \Omega^{3|4,0|0}\wedge\left\{\tfrac12\inner{A^{0,1}}{\bar\partial A^{0,1}}_\frg+\tfrac{1}{3!}\inner{A^{0,1}}{[A^{0,1},A^{0,1}]}_\frg\right\},
        \end{equation}
        where $A^{0,1}$ is a gauge-Lie algebra-valued $(0,1)$-form on $Z^{3|4}$ with purely holomorphic dependence on the fermionic coordinates and no anti-holomorphic fermionic directions.        
        
        It is well-known that holomorphic Chern--Simons theory on $Z^{3|4}$ is quasi-isomorphic to $\caN=4$ supersymmetric self-dual Yang--Mills theory given by the Siegel action~\cite{Siegel:1992xp}; see~\cite{Mason:2005zm,Boels:2006ir} and also~\cite{Wolf:2010av}. Both the holomorphic Chern--Simons action and the Siegel action can be extended to evident BV actions, and the corresponding $L_\infty$-algebras $\frL_{Z^{3|4}}$ and $\frL_{\IR^{4|8}}$ are quasi-isomorphic. Moreover, this quasi-isomorphism is a two-step homotopy transfer, cf.~\cite{Mason:2005zm,Boels:2006ir}, see also~\cite{Wolf:2010av}. In a first step, we use the contracting homotopy with
        \begin{equation}
            \sfh_1\ =\ \bar\partial^\dagger_{\IC P^1}
        \end{equation}
        the adjoint of the Dolbeault operator, restricted to $\pi_2^{-1}(x)\cong\IC P^1$ for all $x\in\IR^4$ to impose the space-time gauge. In a second step, we use a second homotopy transfer to integrate out all auxiliary fields, which leaves us with the space-time BV fields in $\frL_{\IR^{4|8}}$. These homotopy transfers are then concatenated as explained in~\eqref{eq:combination_deformation_retracts}. This quasi-isomorphism of $L_\infty$-algebras has recently been used in the context of colour--kinematics duality~\cite{Borsten:2022vtg} to derive kinematic Lie algebras from twistor spaces.
        
        \paragraph{Span of $L_\infty$-algebras with mini-twistors.}
        As explained in detail in~\cite{Popov:2005uv}, the single fibration~\eqref{eq:single_fibration} is expanded into a double fibration again when considering its dimensional reduction to three space-time dimensions. Explicitly, $\IR^{4|8}$ is reduced to $\IR^{3|8}$, but the twistor space $Z^{2|4}$ for the description of supersymmetric monopoles becomes a supersymmetric generalisation of the mini-twistor space introduced in~\cite{Hitchin:1982gh}, and $Z^{2|4}$ is the total space of the rank $(1|4)$-vector bundle $\caO(2)\oplus\IC^{0|4}\otimes\caO(1)\IC P^1$. We end up with the double fibration
        \begin{equation}\label{eq:double_2}
            \begin{tikzcd}
                & \IR^{3|8}\times \IC P^1\arrow[dl,"\pi_1" swap]\arrow[dr,"\pi_2"]
                \\
                Z^{2|4} & & \IR^{3|8}
            \end{tikzcd}
        \end{equation}
        This, in turn, induces a span of $L_\infty$-algebras
        \begin{equation}\label{eq:span_2}
            \begin{tikzcd}
                & \frL_{\IR^{3|8}\times\IC P^1}\arrow[dl,"\sfp_1" swap]\arrow[dr,"\sfp_2"]
                \\
                \frL_{Z^{2|4}} & & \frL_{\IR^{3|8}}
            \end{tikzcd}
        \end{equation}
        which are the $L_\infty$-algebras of the BV extensions of the following field theories:
        \begin{equation}
            \begin{aligned}
                \frL_{\IR^{3|8}}~&:~\mbox{supersymmetric monopole theory}
                \\
                \frL_{\IR^{3|8}\times\IC P^1}~&:~\mbox{partially holomorphic Chern--Simons theory as defined in~\cite{Popov:2005uv}}
                \\
                \frL_{Z^{2|4}}~&:~\mbox{holomorphic BF theory as defined in~\cite{Popov:2005uv}}
            \end{aligned}
        \end{equation}
        Evidently, these $L_\infty$-algebras are quasi-isomorphic, and in the span of $L_\infty$-algebras~\eqref{eq:span_2}, the homotopy transfer $\sfp_2$ is given by a real dimensional reduction of the homotopy transfer from $\frL_{Z^{3|4}}$ to $\frL_{\IR^{4|8}}$, while the homotopy transfer $\sfp_1$ amounts to a push-forward, as explained in~\cite{Popov:2005uv}.
        
        \paragraph{Summary.}
        Altogether, we have described an interesting and fully off-shell example of a span of $L_\infty$-algebras arising in the context of twistor spaces and the Penrose--Ward transform. We note that this example can be extended to arbitrary amount of supersymmetry at the cost of the action principles. At the level of $L_\infty$-algebras, we merely loose the metric structure. All structures of homotopy transfer, and, in particular the analogues of the span~\eqref{eq:span_2}, however, remain valid.
        
    \end{body}    
    
\end{document}